\renewcommand\@biblabel[1]{#1.}
\tikzset{cross/.style={cross out, draw=black, minimum size=2*(#1-\pgflinewidth),
    inner sep=0pt, outer sep=0pt},
cross/.default={1pt}
}
\crefname{property}{Property}{Properties}
\crefname{property}{property}{properties}
\Crefname{property}{Property}{Properties}
\newcommand{\real}{\mathbb{R}}
\newcommand{\nat}{\mathbb{N}}
\newcommand{\posint}{\mathbb{Z}^+}
\newcommand{\indicator}[1]{\mathbbm{1}_{#1}}
\newcommand{\nth}[1]{#1^{\text{th}}}
\newcommand{\defeq}{\mathrel{\raisebox{-0.3ex}{$\overset{\text{\tiny def}}{=}$}}}
\newcommand{\condSet}[2]{\left\{#1 \;\middle|\; #2\right\}}
\newcommand{\setV}{\mathcal{V}}
\newcommand{\setE}{\mathcal{E}}
\newcommand{\setC}{\mathcal{C}}
\newcommand{\setN}{\mathcal{N}}
\newcommand{\setR}{\mathcal{R}}
\newcommand{\bmax}{\overline{b}}
\DeclareMathOperator{\da}{DA}   
\DeclareMathOperator{\equil}{Equil}
\DeclareMathOperator{\best}{best}
\DeclareMathOperator{\POA}{POA}
\DeclareMathOperator*{\argmax}{arg\,max}
\newcommand{\strategyLevel}[1]{\mathcal{L}_{#1}}
\newcommand{\pp}{capability-positive}   
\newcommand{\bwr}{max-capability-preferred}    
\newcommand{\ap}{capability-negative}   
\newcommand{\bfr}{min-capability-preferred}    
\newcommand{\dnc}{DNC}
\newcommand{\dbncgda}{\textsc{DncDa}}
\newcommand{\dbncgdas}{\textsc{DncDaS}}
\newcommand{\gam}{GMG}
\newcommand{\aog}{alternating ordering game}
\newcommand{\wltrend}{capability preference}
\newenvironment{enuminline}
{\begin{enumerate*}[label=(\roman*),itemjoin={{; }},itemjoin*={{; and }}]}
{\end{enumerate*}}
\newcommand{\V}[1]{\bm{#1}}
\newcommand{\vs}{\V{s}}
\newcommand{\bestw}[1]{W^+_{#1}}
\newcommand{\worstw}[1]{W^-_{#1}}
\newcommand{\fd}{d(\cdot)}  
\newcommand{\frg}{r_g(\cdot)}  
\newcommand{\frm}{r_m(\cdot)}  
\newcommand{\printfnsymbol}[1]{%
  \textsuperscript{\@fnsymbol{#1}}%
}
\newcommand{\X}{,\,\allowbreak}
\begin{document}
\title{On the Impact of Player Capability on Congestion Games}
%
%
\author{Yichen Yang\thanks{Equal contribution.} \and
Kai Jia\printfnsymbol{1} \and
Martin Rinard
}
%
\authorrunning{Yichen Yang \and
Kai Jia \and
Martin Rinard
}
%
\institute{Department of Electrical Engineering and Computer Science, \\
Massachusetts Institute of Technology, USA \\
\email{\{yicheny,jiakai,rinard\}@csail.mit.edu}}
%
\maketitle              

\begin{abstract}

We study the impact of player capability on social welfare in congestion games.
We introduce a new game, the \emph{\textbf{D}istance-bounded \textbf{N}etwork \textbf{C}ongestion game (\dnc{})}, as the basis of our study. \dnc{} is a symmetric network congestion game with a bound on the number of edges each player can use.
We show that \dnc{} is \PLS-complete in contrast to standard symmetric network congestion games which are in \P.
To model different player capabilities, we propose using programs in a Domain-Specific Language (DSL) to compactly represent player strategies. We define a player's capability as the maximum size of the programs they can use.
We introduce two variants of \dnc{} with accompanying DSLs representing the strategy spaces.
We propose four {\em \wltrend{}} properties to characterize the impact of player
capability on social welfare at equilibrium.
We then establish necessary and sufficient conditions for the four properties in the context of our \dnc{} variants.
Finally, we study a specific game where we derive exact expressions of the social welfare in terms of the capability bound. This provides examples where the social welfare at equilibrium increases, stays the same, or decreases as players become more capable.
    \keywords{Congestion game \and Player capability \and Social welfare.}

\end{abstract}


\section{Introduction}

Varying player capabilities can significantly affect the outcomes of strategic
games. Developing a comprehensive understanding of how different player
capabilities affect the dynamics and overall outcomes of strategic games is
therefore an important long-term research goal in the field. Central questions
include characterizing, and ideally precisely quantifying, player capabilities,
then characterizing, and ideally precisely quantifying, how these different
player capabilities interact with different game characteristics to influence or
even fully determine individual and/or group dynamics and outcomes.

We anticipate a range of mechanisms for characterizing player capabilities, from
simple numerical parameters through to complex specifications of available
player behavior. Here we use programs in a domain-specific language (DSL) to
compactly represent player strategies. Bounding the sizes of the programs
available to the players creates a natural capability hierarchy, with more
capable players able to deploy more diverse strategies defined by larger
programs. Building on this foundation, we study the effect of increasing or
decreasing player capabilities on game outcomes such as social welfare at
equilibrium. To the best of our knowledge, this paper presents the first
systematic analysis of the effect of different player capabilities on the
outcomes of strategic games.

We focus on network congestion games \citep{ roughgarden2002unfair,
fabrikant2004complexity}. All congestion games have pure Nash equilibria (PNEs)
\citep{rosenthal1973class, nisan2007algorithmic}. Congestion games have been
applied in many areas including drug design \citep{ nikitina2018congestion},
load balancing \citep{ zhang2020stochastic}, and network design
\citep{le2020congestion}. There is a rich literature on different aspects of
congestion games including their computational characteristics
\citep{ackermann2008impact}, efficiency of equilibria \citep{
correa2005inefficiency}, and variants such as weighted congestion games \citep{
christodoulou2020existence} or games with unreliable resources \citep{
nickerl2021pure}.

We propose a new network congestion game, the \emph{\textbf{D}istance-bounded
\textbf{N}etwork \textbf{C}ongestion game (\dnc{})}, as the basis of our study.
A network congestion game consists of some players and a directed graph where
each edge is associated with a delay function. The goal of each player is to
plan a path that minimizes the delay from a source vertex to a sink vertex. The
delay of a path is the sum of the delays on the edges in the path, with the
delay on each edge depending (only) on the number of players choosing the edge.
The game is symmetric when all players share the same source and sink. \citet{
fabrikant2004complexity} shows that finding a PNE is in \P{} for symmetric games
but \PLS-complete for asymmetric ones. \dnc{} is a symmetric network congestion
game in which each player is subject to a \emph{distance bound} --- i.e., a
bound on the number of edges that a player can use.

We establish hardness results for \dnc{}. We show that with the newly introduced
distance bound, our symmetric network congestion game becomes \PLS-complete. We
also show that computing the best or worst social welfare among PNEs of
\dnc{} is \NP-hard.

We then present two games for which we define compact DSLs for player
strategies. The first game is a \dnc{} variant with \emph{\textbf{D}efault
\textbf{A}ction (\dbncgda{})}. In this game, each node has a default outgoing
edge that does not count towards the distance bound. Hence a strategy can be
compactly represented by specifying only the non-default choices. We
establish that \dbncgda{} is as hard as \dnc{}. The other game is Gold and
Mines Game (\gam), where there are gold and mine sites placed on parallel
horizontal lines, and a player uses a program to compactly describe the line
that they choose at each horizontal location. We show that \gam{} is a special
form of \dbncgda.

We propose four \emph{\wltrend} properties that characterize the impact of
player capability on social welfare. In this paper, we only consider social
welfare of pure Nash equilibria. We call a game \emph{\pp} (resp. \emph{\ap}) if
social welfare does not decrease (resp. increase) when players become more
capable. A game is \emph{\bwr} (resp. \emph{\bfr}) if the worst social welfare
when players have maximal (resp. minimal) capability is at least as good as any
social welfare when players have less (resp. more) capability. Note that \bwr{}
(resp. \bfr{}) is a weaker property than \pp{} (resp. \ap{}). Due to the
hardness of \dbncgda, we analyze a restricted version (\dbncgdas{}) where all
edges share the same delay function. We identify necessary and sufficient
conditions on the delay function for each \wltrend{} property to hold
universally for all configurations in the context of \dbncgdas{} and \gam{}
(\cref{tab:property}).

Finally, we study a specific version of \gam{} where we derive exact expressions
of the social welfare in terms of the capability bound and payoff function
parameterization. We present examples where the social welfare at equilibrium
increases, stays the same, or decreases as players become more capable.

\noindent{\bfseries Summary of contributions:}
\begin{itemize}
    \item We present a framework for quantifying varying player capabilities and
        studying how different player capabilities affect the dynamics and
        outcome of strategic games. In this framework, player strategies are
        represented as programs in a DSL, with player capability defined as the
        maximal size of available programs.

    \item We propose four \emph{\wltrend} properties to characterize the impact
        of player capability on social welfare of pure Nash equilibria:
        \emph{\pp}, \emph{\bwr}, \emph{\ap}, and \emph{\bfr}.

    \item We introduce the \emph{distance-bounded network congestion game
        (\dnc{})}, a new symmetric network congestion game in which players can
        only choose paths with a bounded number of edges. We further show that
        \dnc{} is \PLS-complete. Moreover, computing the best or worst social
        welfare among equilibria of \dnc{} is \NP-hard.

    \item We introduce two variants of \dnc{}, \dnc{} with default action
        (\dbncgda{}) and Gold and Mines Game (\gam{}), with accompanying
        DSLs to compactly represent player strategies. We then establish
        necessary and sufficient conditions for the \wltrend{} properties in
        \dbncgdas{} and \gam.
    \item We study a special version of \gam{} where we fully characterize the
        social welfare at equilibrium. This characterization provides insights
        into the factors that affect whether increasing player capability is
        beneficial or not.
\end{itemize}

\noindent {\bfseries Additional related work}
There has been research exploring the results of representing player strategies
using formal computational models. \citet{TENNENHOLTZ2004programequil} proposes
using programs to represent player strategies and analyzes program equilibrium
in a finite two-player game. \citet{Fortnow2009progequil} extends the results,
representing strategies as Turing machines. Another line of research uses
various kinds of automata to model a player's strategy in non-congestion games
\citep{ almanasra2013applications}, such as repeated prisoner's dilemma
\citep{PAPADIMITRIOU1992automaton, RUBINSTEIN1986automaton}. Automata are
typically used to model bounded rationality \citep{neyman1985bounded,
papadimitriou1994complexity} or certain learning behavior \citep{chiong2009co,
ghnemat2006automata}. \citet{ neyman1997cooperation} presents asymptotic results
on equilibrium payoff in repeated normal-form games when automaton sizes meet
certain conditions. There has also been research exploring structural strategy
spaces in congestion games. \citet{ ackermann2008complexity} considers a
player-specific network congestion game where each player has a set of forbidden
edges. \citet{ chan2016congestion} studies computing mixed Nash equilibria in a
broad class of congestion games with strategy spaces compactly described by a
set of linear constraints. Unlike our research, none of the above research
defines a hierarchy of player capabilities or characterizes the effect of the
hierarchy on game outcomes.


\section{Distance-bounded network congestion game}

It is well-known that computing PNEs in symmetric network congestion games
belongs to \P{} while the asymmetric version is \PLS-complete \citep{
fabrikant2004complexity}. We present a symmetric network congestion game where
we limit the number of edges that a player can use. We show that this
restriction makes the problem harder --- the new game is \PLS-complete, and
finding the best or worst Nash equilibrium in terms of global social welfare is
\NP-hard.

\begin{definition}
    \label{def:dbncg}
    An instance of the \emph{\textbf{D}istance-bounded \textbf{N}etwork
    \textbf{C}ongestion game (\dnc{})} is a tuple $G=(\setV \X \setE \X \setN \X
    s \X t \X b \X (d_e)_{e \in \setE})$ where:
    \begin{itemize}
        \item $\setV$ is the set of vertices in the network.
        \item $\setE\subset \setV \times \setV$ is the set of edges in the
            network.
        \item $\setN=\{1,\, \dots,\, n\}$ is the set of players.
        \item $s\in \setV$ is the source vertex shared by all players.
        \item $t \in \setV$ is the sink vertex shared by all players.
        \item $b \in \nat$ is the bound of the path length.
        \item $d_e: \nat \mapsto \real$ is a non-decreasing delay function on
            edge $e$.
    \end{itemize}

    We also require that the network has no negative-delay cycles, i.e., for
    each cycle $\setC$, we require $\sum_{e\in\setC} \min_{i\in\setN} d_e(i)
    = \sum_{e\in\setC} d_e(1) \ge 0$.

    We only consider pure strategies (i.e., deterministic strategies) in this
    paper. The strategy space of a single player contains all $s-t$ simple paths
    whose length does not exceed $b$:
    \begin{align*}
        \strategyLevel{b} \defeq
            \condSet{(p_0,\, \ldots,\, p_k)}{
            \begin{array}{l}
                p_0=s,\,p_k=t,\, (p_i,\,p_{i+1}) \in\setE, k \le b, \\
                p_i \neq p_j \text{ for } i \neq j
            \end{array}}
    \end{align*}
\end{definition}

In a \dnc{}, as in a general congestion game, a player's goal is to minimize
their delay. Let $s_i\defeq (p_{i0},\,\cdots,\,p_{ik_i}) \in \strategyLevel{b}$
denote the strategy of player $i$ where $i \in \setN$. A \emph{strategy profile}
$\vs = (s_1, \dots, s_n) \in \strategyLevel{b}^n$ consists of strategies of all
players. Let $E_i\defeq \condSet{ (p_{ij},\,p_{i,j+1})}{0\le j< k_i}$ denote the
corresponding set of edges on the path chosen by player $i$. The \emph{load} on
an edge $e\in\setE$ is defined as the number of players that occupy this edge:
$x_e \defeq |\condSet{i}{e \in E_i}|$. The delay experienced by player $i$ is
$c_i(\vs) \defeq \sum_{ e\in E_i}d_e(x_e)$. A strategy profile $\vs$ is a
\emph{pure Nash equilibrium (PNE)} if no player can improve their delay by
unilaterally changing strategy, i.e., $\forall i \in \setN: c_i(\vs) = \min_{s'
\in \strategyLevel{b}} c_i(\vs_{-i},\,s')$. All players experience infinite
delay if the distance bound permits no feasible solution (i.e., when
$\strategyLevel{b} = \emptyset$). \emph{Social welfare} is defined as the the
negative total delay of all players where a larger welfare value means on average players experience less delay: $W(\vs) \defeq - \sum_{i\in \setN}
c_i(\vs)$.

We now present a few hardness results about \dnc{}.

\begin{lemma}
    \dnc{} belongs to \PLS.
\end{lemma}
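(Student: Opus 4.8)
The plan is to cast \dnc{}, viewed as the search problem of computing a PNE, as a polynomial local search problem, using Rosenthal's potential as the objective and single-player deviations as the neighborhood. A candidate solution is a strategy profile $\vs \in \strategyLevel{b}^n$; its cost is the Rosenthal potential $\Phi(\vs) \defeq \sum_{e \in \setE} \sum_{j=1}^{x_e} d_e(j)$, which we minimize; and the neighbors of $\vs$ are the profiles obtained by letting a single player switch to a different path in $\strategyLevel{b}$. The standard telescoping identity $\Phi(\vs_{-i}, s') - \Phi(\vs) = c_i(\vs_{-i}, s') - c_i(\vs)$ holds here because each $c_i$ is additive over edges and each $d_e$ depends only on the edge load, so a profile is a local minimum of $\Phi$ precisely when it is a PNE. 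It then remains to exhibit the three polynomial-time components required by the definition of \PLS{}: producing an initial feasible solution, evaluating $\Phi$, and either certifying local optimality or returning an improving neighbor.

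The first two components are routine. For an initial solution, run a breadth-first search from $s$; if $t$ is reachable within $b$ edges, let every player take one such fewest-edge path, which lies in $\strategyLevel{b}$; if not, then $\strategyLevel{b} = \emptyset$, a condition detectable in polynomial time, in which case the instance is degenerate (all players incur infinite delay regardless) and we may return a fixed trivial solution. To evaluate $\Phi(\vs)$, compute the loads $x_e$, then the per-player delays $c_i(\vs)$, then the sum; this is polynomial provided, as we assume throughout, each $d_e$ can be evaluated in polynomial time with values of polynomial bit-length, so that $\Phi$ — a sum of at most $|\setE|\cdot n$ such values — also has polynomial bit-length.

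The crux is the improving-move oracle, i.e., computing a best response for each player. Fix $\vs$ and a player $i$, and let $x_e^{-i}$ be the number of players other than $i$ using $e$. Deviating to a path $P \in \strategyLevel{b}$ costs player $i$ exactly $\sum_{e \in P} d_e(x_e^{-i}+1)$, so a best response is a minimum-weight $s$-$t$ simple path with at most $b$ edges under the fixed edge weights $w(e) \defeq d_e(x_e^{-i}+1)$. These weights need not be nonnegative, which is why the distance bound alone does not reduce this to a plain shortest-path computation; the no-negative-delay-cycle hypothesis is what saves us. Since $d_e$ is non-decreasing and $x_e^{-i}+1 \ge 1$, every cycle $\setC$ satisfies $\sum_{e \in \setC} w(e) \ge \sum_{e \in \setC} d_e(1) \ge 0$, and since every closed walk decomposes into simple cycles, every closed walk has nonnegative total weight. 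Hence any $s$-$t$ walk with at most $b$ edges can be shortened — by repeatedly excising the closed sub-walk between two visits of a repeated vertex — to an $s$-$t$ simple path with at most $b$ edges and no larger weight, so the minimum weight over simple paths of length $\le b$ equals the minimum weight over all walks of length $\le b$. The latter is computed by a Bellman--Ford-style dynamic program: letting $D[k][v]$ be the minimum weight of an $s$-$v$ walk using at most $k$ edges, fill the table for $k = 0, \ldots, b$ in $O(b\,|\setE|)$ time and read off $D[b][t]$. Comparing each player's best-response cost with $c_i(\vs)$ then either produces an improving neighbor or certifies that $\vs$ is a PNE. The main thing to get right is exactly this treatment of possibly-negative delays — invoking the no-negative-cycle condition both to make the dynamic program meaningful and to recover a genuine simple path — together with verifying that the potential remains of polynomial size; the rest is bookkeeping.
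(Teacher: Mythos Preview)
Your proof is correct and follows essentially the same approach as the paper: both use Rosenthal's potential to frame \dnc{} as a local search problem and compute best responses via the same length-bounded dynamic program $f(v,d)=\min_{(u,v)\in\setE}\bigl(f(u,d-1)+d_{(u,v)}(x_{(u,v)}+1)\bigr)$, invoking the no-negative-delay-cycle assumption to pass from walks to simple paths. You are more explicit than the paper about the telescoping identity, the handling of possibly-negative edge weights, and the bit-length of the potential, but the argument is the same in substance.
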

\begin{proof}
    \dnc{} is a potential game where local minima of its potential function
    correspond to PNEs \citep{rosenthal1973class}. Clearly there
    are polynomial algorithms for finding a feasible solution or evaluating the
    potential function. We only need to show that computing the best response
    of some player $i$ given the strategies of others is in \P. For each
    $v\in\setV$, we define $f(v,\, d)$ to be the minimal delay experienced by
    player $i$ over all paths from $s$ to $v$ with length bound $d$. It can be
    recursively computed via $f(v,\, d) = \min_{u\in\setV:(u,\,v) \in\setE} (
    f(u, d-1) + d_{(u,\,v)}(x_{(u,\,v)}+1))$ where $x_{(u,\,v)}$ is the load on
    edge $(u,\,v)$ caused by other players. The best response of player $i$ is
    then $f(t,\,b)$. If there are cycles in the solution, we can remove them
    without affecting the total delay because cycles must have zero delay in the
    best response.
\end{proof}

\begin{theorem}
    \dnc{} is \PLS-complete.
    \label{thm:dbncg-pls-complete}
\end{theorem}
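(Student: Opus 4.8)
The plan is to prove \PLS-hardness by reducing from the asymmetric network congestion game, which \citet{fabrikant2004complexity} show is \PLS-complete; combined with the preceding lemma that \dnc{} belongs to \PLS, this yields \PLS-completeness. The conceptual hurdle is that a \emph{symmetric} game gives no way to distinguish players, so the distance bound $b$ must be the sole mechanism we use to recreate asymmetry: the idea is to build a symmetric network in which each player is effectively forced into its own ``lane'', one lane per player of the asymmetric instance.

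Concretely, from an asymmetric instance with players $1,\dots,n$ and source--sink pairs $(s_i,t_i)$, I would introduce a fresh global source $s$ and sink $t$ and, for each $i$, an \emph{entry lane} (a directed path of new dummy vertices from $s$ to $s_i$) and an \emph{exit lane} (a directed path from $t_i$ to $t$), each of a carefully chosen length. The dummy edges all carry the constant delay $0$, the original delay functions are kept inside the embedded copy of the asymmetric network, and the first edge of every entry lane and exit lane is given a ``unit-capacity'' delay: value $0$ on load $\le 1$ and a value $M$ on load $\ge 2$, with $M$ larger than any total delay attainable in the asymmetric game. All these functions are non-decreasing and, since every delay is non-negative, the network has no negative-delay cycle, so the result is a valid \dnc{} instance of polynomial size. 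The lane lengths and the bound $b$ are then chosen so that the $s$--$t$ paths of length at most $b$ are exactly the $n$ \emph{legitimate routes}, where legitimate route $i$ consists of entry lane $i$, then an $s_i$--$t_i$ path of the original network, then exit lane $i$.

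Granting the construction has this property, the equilibrium correspondence is routine. At any pure Nash equilibrium the unit-capacity first edges force the $n$ players onto the $n$ routes one-to-one: two players sharing a route's first edge would each pay $M$, and one of them could strictly improve by moving to an unused route, paying $0$ on its first edges instead. Hence an equilibrium is a permutation $\pi$ of $[n]$ together with, for each player $i$, an $s_{\pi(i)}$--$t_{\pi(i)}$ path of the original network; the load on every original edge matches that of the corresponding profile of the asymmetric game, the dummy detours cost nothing, and no route switch is profitable, so this is an equilibrium exactly when, for every index $\ell$, the path routed through index $\ell$ is a best response in the asymmetric game. Reading off those paths --- discarding $\pi$ by letting $P_\ell$ be the path that reaches $t_\ell$ --- gives a polynomial-time map from equilibria of the constructed instance to equilibria of the original one, which is what a \PLS-reduction requires; conversely, any equilibrium of the asymmetric game lifts back.

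I expect the main obstacle to be making the previous paragraph's hypothesis actually hold: choosing the lane lengths and the bound $b$ so that the feasible $s$--$t$ paths are precisely the $n$ legitimate routes. The difficulty is a genuine tension --- $b$ must be large enough never to truncate a player's original strategy space (which may contain long simple paths) yet small enough to forbid every ``cross'' route that enters through lane $i$ but leaves through lane $j\neq i$ --- and a purely length-based choice cannot achieve both whenever the original network has two-way reachability between distinct players' sources and sinks. I therefore anticipate first bringing the asymmetric instance into a restricted normal form (for instance a layered, acyclic network in which all $s_i$--$t_i$ paths have a common length) and then exploiting, or installing via auxiliary forcing gadgets, enough separation between the players' source and sink regions that cross routes are either ruled out by the distance bound or provably unused at every equilibrium. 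Getting this structural bookkeeping right, rather than the equilibrium argument, is where the real work lies.
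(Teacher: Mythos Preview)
Your overall shape---symmetrize an asymmetric \PLS-hard congestion game by adding entry and exit lanes with unit-capacity first edges, then use $b$ to forbid cross-routes---is sound in spirit, and your equilibrium-correspondence argument is fine once the construction is granted. But the gap you yourself flag is real and you do not close it: forcing the entrant of lane $i$ to exit through lane $i$ \emph{by distance alone} requires that every cross path through $s_i$ and $t_j$ be strictly longer than every straight path through $s_i$ and $t_i$, simultaneously for all $i\neq j$, and in a general asymmetric instance there is no reason this can be arranged. Your proposed remedy (preprocess into a layered acyclic normal form with uniform $s_i$--$t_i$ lengths, or add further gadgets) is a hope, not a construction; you would have to exhibit such a normal form and argue that \PLS-hardness survives the transformation, which is itself nontrivial.

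The paper sidesteps this difficulty by reducing not from general asymmetric network congestion games but from the \emph{quadratic threshold game} of \citet{ackermann2008impact}, in which each player has exactly two strategies. That restricted source admits a concrete triangular network in which the lane-forcing uses two mechanisms together rather than the distance bound alone: unit-capacity delays on the $(s,s_i)$ edges spread the players one per entry (as you propose), but the $(t_i,t)$ edges carry \emph{constant} delays $(n-i)R$, making higher-index exits strictly more attractive. The triangle's geometry together with the distance bound then ensures that a player entering at $s_i$ can afford to reach $t_j$ only for $j\le i$; combined with the ordered exit preference, this pins that player to exit $t_i$ without ever needing $b$ to exclude all cross-routes. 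Once confined to lane $i$, the bound leaves exactly two $s_i$--$t_i$ paths, matching the two strategies of player $i$ in the threshold game. The lesson for your attempt: pick a structured \PLS-complete source, and pair the distance bound with an exit-preference ordering rather than asking $b$ to do all the separating on its own.
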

\begin{proof}
    We have shown that \dnc{} belongs to \PLS. Now we present a \PLS-reduction
    from a \PLS-complete game to finish the proof.

    \newcommand{\Rin}{\setR^{\text{in}}}
    \newcommand{\Rout}{\setR^{\text{out}}}
    \newcommand{\Sin}{S_i^{\text{in}}}
    \newcommand{\Sout}{S_i^{\text{out}}}
    The \emph{quadratic threshold game} \citep{ackermann2008impact} is a
    \PLS-complete game in which there are $n$ players and $n(n+1)/2$ resources.
    The resources are divided into two sets $\Rin = \condSet{r_{ij}}{1 \le i < j
    \le n}$ for all unordered pairs of players $\{i,\,j\}$ and $\Rout =
    \condSet{r_i}{i \in \setN}$. For ease of exposition, we use $r_{ij} $
    and $r_{ji}$ to denote the same resource. Player $i$ has two
    strategies: $\Sin = \condSet{r_{ij}}{j \in \setN / \{i\}}$ and $\Sout =
    \{r_i\}$.

    Extending the idea in \citet{ackermann2008impact}, we reduce from the
    quadratic threshold game to \dnc{}. To simplify our presentation, we assign
    positive integer weights to edges. Each weighted edge can be replaced by a
    chain of unit-length edges to obtain an unweighted graph.

    \begin{figure}[tb]
        \centering
        \begin{subfigure}[t]{.64\textwidth}
            \centering
            \def\figdatatxt{%
                10 9 6 ~
                8 9 12 17
            }
            \begin{adjustbox}{width=.72\textwidth}
                \readarray\figdatatxt\figdata[2,4]

\begin{tikzpicture}[
    node distance={4em},
    main/.style = {draw, circle, minimum size=2em, font=\small},
    edgelabel/.style = {midway, above, sloped, font=\small},
    jumpedge/.style = {gray, dashed}
]

    \clip (-10.5em,-30em) rectangle (19em, 7.5em);

    \tikzset{>=Latex}

    \node[main] (11) {};
    \node[main] (21) [below of=11] {$r_{12}$};
    \node[main] (22) [right of=21] {};
    \node[main] (31) [below of=21] {$r_{13}$};
    \node[main] (32) [right of=31] {$r_{23}$};
    \node[main] (33) [right of=32] {};
    \node[main] (41) [below of=31] {$r_{14}$};
    \node[main] (42) [right of=41] {$r_{24}$};
    \node[main] (43) [right of=42] {$r_{34}$};
    \node[main] (44) [right of=43] {};

    \foreach \i in {1,...,4} {
        \node[main] (s\i) [left of=\i1] {$s_{\i}$};
        \draw[->] (s\i) to (\i1);
    }

    \foreach \i in {1,...,4} {
        \node[main] (t\i) [below of=4\i] {$t_{\i}$};
        \draw[->] (4\i) to (t\i);
    }

    \node[main] (s) at ([xshift=-5em]$(s1)!0.5!(s4)$) {$s$};
    \node[main] (t) at ([yshift=-4em]$(t1)!0.5!(t4)$) {$t$};

    \foreach \y in {2,...,4}
        \foreach \x [evaluate={\xprev=\x-1}] in {2,...,\y} {
            \pgfmathtruncatemacro{\yprev}{\y-1}
            \draw[->] (\y\xprev) to node[edgelabel]{\y} (\y\x);
        }

    \foreach \y [evaluate={\yprev=\y-1}] in {2,...,4}
        \foreach \x in {1,...,\yprev} {
            \pgfmathtruncatemacro{\yprev}{\y-1}
            \draw[->] (\yprev\x) to (\y\x);
        }

    \foreach \out/\i in {90/1,45/2,-45/3,-90/4} {
        \draw[->, out=\out, in=180] (s) to node[edgelabel]{\figdata[1,\i]} (s\i);
        \draw[->] (t\i) to (t);
    }

    \draw[->, jumpedge, out=45, in=45, looseness=1.7]
        (s1) to node[edgelabel, gray, pos=0.55]
        {$r_1;\; w_1=\figdata[2,1]$} (t1);

    \draw[->, jumpedge, out=120, in=45, looseness=3.8]
        (s2) to node[edgelabel, gray, pos=0.45]
        {$r_2;\; w_2=\figdata[2,2]$} (t2);

    \draw[->, jumpedge]
        (s3) to[out=-140, in=-90, looseness=1.4]
        node[edgelabel, gray, pos=0.5]
        {$r_3;\; w_3=\figdata[2,3]$}
        ++($(t4.east) + (9em, 0)$)
        to[out=90, in=30, looseness=2] (t3);

    \draw[->, jumpedge, out=-80, in=-30, looseness=2.3]
        (s4) to node[edgelabel, gray, pos=0.5]
        {$r_4;\; w_4=\figdata[2,4]$} (t4);

\end{tikzpicture}

            \end{adjustbox}
            \caption{The graph structure}
        \end{subfigure}
        \begin{subfigure}[t]{.35\textwidth}
            \centering
            \begin{adjustbox}{width=0.7\textwidth}
                \begin{tikzpicture}[
    node distance={5em},
    main/.style = {draw, circle},
    edgelabel/.style = {midway, above, sloped, font=\large}
]

    \tikzset{>=Latex}

    \node[main,minimum size=7.5em] at (0,0) {};
    \node[main,minimum size=1em] (v0) at (-1.5em,1.5em) {};
    \node[main,minimum size=1em] (v1) at (1.5em,-1.5em) {};
    \node[] (vs0) [above of=v0] {};
    \node[] (vs1) [left of=v0] {};
    \node[] (vt0) [right of=v1] {};
    \node[] (vt1) [below of=v1] {};

    \draw[->] (vs0) to (v0);
    \draw[->] (vs1) to (v0);
    \draw[->] (v1) to (vt0);
    \draw[->] (v1) to (vt1);
    \draw[->] (v0) to node[edgelabel]{$r_{ij}$} (v1);

\end{tikzpicture}

            \end{adjustbox}
            \caption{Splitting the vertex containing resource $r_{ij}$}
        \end{subfigure}
        \caption{
            The \dnc{} instance corresponding to a four-player quadratic
            threshold game. The distance bound $b=19$. Non-unit-length edges
            have labels to indicate their lengths. Dashed gray edges correspond to the
            $\Sout$ strategies.
            \label{fig:dbncg-pls-proof}
        }
        \vskip-1em
    \end{figure}
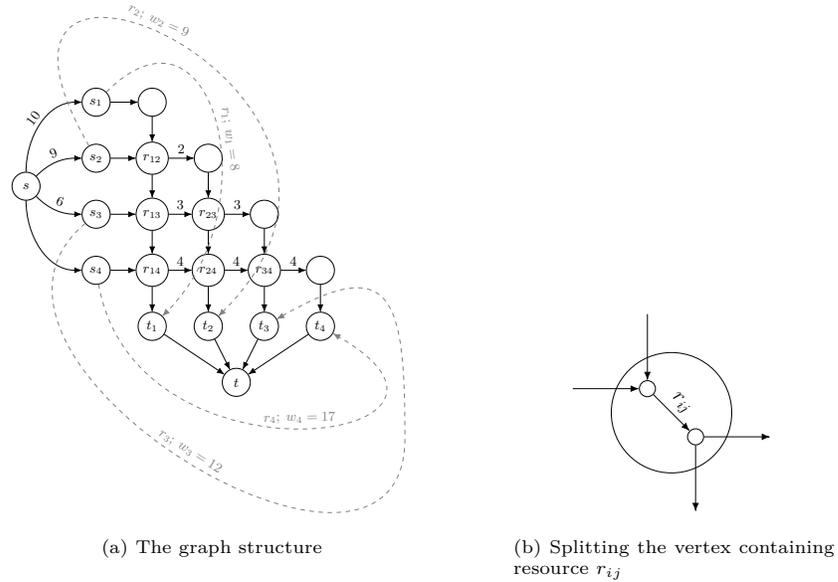

    \Cref{fig:dbncg-pls-proof} illustrates the game with four players. We create
    $n(n+1) /2$ vertices arranged as a lower triangle. We use $v_{ij}$ to denote
    the vertex at the $\nth{i}$ row (starting from top) and $\nth{j}$ column
    (starting from left) where $1 \le j \le i \le n$. The vertex $v_{ij}$ is
    connected to $v_{i, j+1}$ with an edge of length $i$ when $j < i$ and to
    $v_{i+1,j}$ with a unit-length edge when $i < n$. This design ensures that
    the shortest path from $v_{i1}$ to $v_{ni}$ is the right-down path. The
    resource $r_{ij}$ is placed at the off-diagonal vertex $v_{ij}$, which can
    be implemented by splitting the vertex into two vertices connected by a
    unit-length edge with the delay function of $r_{ij}$. Note that this implies
    visiting a vertex $v_{ij}$ incurs a distance of $1$ where $i \neq j$. We
    then create vertices $s_i$ and $t_i$ for $1 \le i \le n$ with unit-length
    edges $(s_i,\,v_{i1})$ and $(v_{ni},\,t_i)$. We connect $s_i$ to $t_i$ with
    an edge of length $w_i$, which represents the resource $r_i$. Let $b$ be the
    distance bound. We will determine the values of $w_i$ and $b$ later.
    The source $s$ is connected to $s_i$ with an edge of length $b - w_i - 1$.
    Vertices $t_i$ are connected to the sink $t$ via unit-length edges.

    We define the following delay functions for edges associated with $s$ or $t$:
    \begin{align*}
        d_{(s,\,s_i)}(x) &= \indicator{x\ge2} \cdot (|\setN|+1)R \quad\quad
        d_{(t_i,\,t)}(x) = (|\setN|-i)R \\
        \text{where } R &= \left(\sum_{r \in \Rin \cup \Rout}
            \max_{i\in\setN} d_r(i)\right) + 1
    \end{align*}

    We argue that player $i$ chooses edges $(s,\,s_i)$ and $(t_i,\,t)$ in
    their best responses. Since $R$ is greater than the maximum possible sum of
    delays of resources in the threshold game, a player's best response must
    first optimize their choice of edges linked to $s$ or $t$. If two players
    choose the edge $(s,\, s_i)$, one of them can improve their latency by
    changing to an unoccupied edge $(s,\,s_{i'})$. Therefore, we can assume the
    $\nth{i}$ player chooses edge $(s,\,s_i)$ WLOG. Player $i$ can also decrease
    their latency by switching from $(t_j,\,t)$ to $(t_{j+1},\,t)$ for any $j < i$
    unless their strategy is limited by the distance bound when $j=i$.

    Player $i$ now has only two strategies from $s_i$ to $t_i$ due to the
    distance bound, corresponding to their strategies in the threshold game:
    \begin{enuminline}
        \item following the right-down path, namely $(s_i,\,v_{i1},\, \cdots,\,
            v_{ii},\, v_{i+1,i},\, \cdots,\, v_{ni}, \, t_i)$, where they occupy
            resources corresponding to $\Sin$
        \item using the edge $(s_i,\,t_i)$, where they occupy the resource
            $\Sout=\{r_i\}$
    \end{enuminline}.
    Clearly PNEs in this \dnc{} correspond to PNEs in the original quadratic
    threshold game.

    Now we determine the values of $w_i$ and $b$. The shortest paths from $s_i$
    to $t_i$ should be either the right-down path or the edge $(s_i,\, t_i)$.
    This implies that $w_i=a_i+b_i+c_i$ where $a_i=i(i-1)+1$ is the total length
    of horizontal edges, $b_i=n+1-i$ is the total length of vertical edges, and
    $c_i=n-1$ is the total length of edges inside $v_{ij}$ for resources
    $r_{ij}$. Hence $w_i=i(i-2) +2n+1$. The bound $b$ should accommodate player
    $n$ who has the longest path and is set as $b=w_n+2=n^2+3$.
\end{proof}

\newcommand{\dbncgBestHardCaption}[1]{
    Illustration of the #1~instance corresponding to a 3-partition problem.
    Double-line edges are slow edges, dashed edges are fast edges, and other
    edges have no delay. Non-unit-length edges have labels to indicate their
    lengths. Deciding whether the total delay can be bounded by $6m-3$ is
    \NP-complete.
}
\begin{theorem}
    Computing the best social welfare (i.e., minimal total delay) among PNEs of
    a \dnc{} is \NP-hard.
    \label{thm:dbncg-best-np-hard}
\end{theorem}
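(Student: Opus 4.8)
The plan is a polynomial reduction from \textsc{3-Partition}, which is strongly \NP-complete: an instance consists of $3m$ positive integers $a_1,\dots,a_{3m}$ with $\sum_i a_i=mB$ and $B/4<a_i<B/2$ for every $i$, and we must decide whether $\{1,\dots,3m\}$ can be split into $m$ triples each summing to $B$. Strong \NP-completeness matters here because the $a_i$ will be used as edge lengths and a length-$\ell$ weighted edge is expanded into a chain of $\ell$ unit edges, exactly as in the proof of \cref{thm:dbncg-pls-complete}; the resulting \dnc{} instance is polynomial precisely because, in the strongly-\NP-hard regime, $B$ and the $a_i$ are polynomially bounded.

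Given such an instance I would construct a \dnc{} with $3m$ players and $m$ parallel ``bin'' gadgets $G_1,\dots,G_m$. As in \cref{thm:dbncg-pls-complete}, I attach to $s$ a family of $3m$ ``lane'' edges whose delay is $0$ at load $1$ and prohibitively large at load $2$, so that at any PNE the $3m$ players occupy the $3m$ lanes bijectively; the lane tagged with $a_i$ is wired so that its occupant must reach $t$ through exactly one bin, and the distance bound $b$ is set (by choosing that lane edge's length, just as $d_{(s,s_i)}$ is used in \cref{thm:dbncg-pls-complete}) so that inside whichever bin $G_j$ it enters, that player is forced to traverse exactly $a_i$ of the bin's \emph{slow} edges. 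Each $G_j$ offers exactly $B$ slow ``slots'', the player of $a_i$ inside $G_j$ choosing some $a_i$-subset of them (fast and zero-delay edges are used only for length bookkeeping, so that this is exactly the set of distance-$b$ strategies). A slow slot has delay $0$ at load $\le 1$ and an enormous delay at load $\ge 2$. Hence the players routed to $G_j$ can be placed on pairwise-disjoint slots --- paying nothing on slow edges --- exactly when the total size $W_j$ of the integers routed to $G_j$ is at most $B$, and otherwise (in particular whenever $G_j$ gets four or more players, since each $a_i>B/4$) every equilibrium inside $G_j$ overloads a slot and pays an enormous delay. A fixed pattern of unit-delay ``fast'' edges along the players' paths (appropriately shared) contributes an unavoidable baseline of $6m-3$ to the total delay.

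Correctness splits into two directions. For a \emph{yes}-instance, route the three integers of each triple into one bin and spread them over distinct slots; I would verify this profile is a PNE --- no player improves by re-choosing slots (its slow delay is already $0$) or by switching bins (every bin is full, and entering it would put the player on occupied slots at enormous delay) --- and that its total delay is exactly $6m-3$. For a \emph{no}-instance, every assignment of the $3m$ integers to the $m$ bins satisfies $W_j>B$ for some $j$ (otherwise all $W_j=B$, and since $a_i>B/4$ each bin receives exactly three integers, a valid $3$-partition), and since each player in that bin is forced onto $\ge a_i$ slow slots, pigeonhole overloads a slot; thus \emph{every} PNE has total delay far above $6m-3$. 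Because \dnc{} is a finite potential game with nonempty strategy space, a PNE exists, so the minimum total delay over PNEs is at most $6m-3$ iff the \textsc{3-Partition} instance is a \emph{yes}-instance, giving \NP-hardness of computing the best social welfare. (The no-negative-delay-cycle condition is vacuous, as the construction is acyclic with nonnegative delays.)

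I expect the main obstacle to be tuning the delay functions and the length bookkeeping \emph{jointly} so that the intended profiles are genuine equilibria of value exactly $6m-3$ and no spurious cheap equilibrium survives. Concretely: (i) the slow-slot delay must separate ``$W_j\le B$'' from ``$W_j>B$'' by strictly more than the baseline $6m-3$ while still leaving the disjoint-placement profile stable against unilateral deviations; (ii) the distance bound must carve each player's strategy set down to precisely ``choose a bin, then choose an $a_i$-subset of its $B$ slots'' --- with no shortcut allowing fewer than $a_i$ slow edges (so the load on a bin's slots faithfully reflects the sizes routed there) and no detour escaping the lane structure; and (iii) one must check that a \emph{no}-instance has no PNE in which a bin with $W_j\le B$ is somehow overfilled yet still cheap --- this is exactly what forces the slot count per bin to be $B$ rather than larger. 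The lane-and-bookkeeping part is a routine elaboration of the gadget in \cref{thm:dbncg-pls-complete}; pinning down the slow-slot delay and the slot count to make the threshold tight is the delicate step.
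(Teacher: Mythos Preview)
Your plan is a viable alternative reduction from \textsc{3-Partition}, but it takes the dual view to the paper's: you make the $3m$ \emph{items} into players who choose among $m$ bins, whereas the paper makes the $m$ \emph{triples} into players who choose among $3m$ items. The paper's construction is simply a chain $t_0\to\cdots\to t_{3m}$ in which each segment $t_{i-1}\to t_i$ offers either a unit-length slow edge (constant delay $2$) or a length-$(1+a_i)$ detour through a fast edge (delay $1$ alone, $3$ if shared); with distance bound $T+3m$ each of the $m$ symmetric players can afford at most three fast detours (since $a_i>T/4$), at any PNE no fast edge is shared (a shared fast edge is strictly worse than the slow edge), and total delay reaches the minimum $m(6m-3)$ exactly when every player uses three disjoint fast edges summing to $T$---a $3$-partition. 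No lane gadget is needed because the $m$ players are genuinely interchangeable. Your route can be pushed through but is considerably heavier: it imports the full lane machinery of \cref{thm:dbncg-pls-complete} to separate $3m$ players and adds $m$ bin gadgets with $B$ slot-choices each. Two of your specifics also do not match your own construction: the distance bound enforces \emph{at least} $a_i$ slots rather than exactly $a_i$ (harmless, since no equilibrium player volunteers for an extra occupied slot), and the baseline ``$6m-3$'' is the paper's per-player number, not one that falls out of your gadget---with $0$/huge slot delays and $0$/huge lane delays your natural decision threshold is simply $0$. What the paper's approach buys is a one-line equilibrium analysis; yours must separately verify the lane bijection, the in-bin slot-subset structure, and the absence of profitable bin-switching deviations---exactly the tuning you flag as the main obstacle.
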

\begin{proof}
    \begin{figure}[tb]
        \centering
        \def\edgeVert{}
        \def\edgeDiag{}
        \def\edgeHori{}
        \resizebox{0.7\textwidth}{!}{%
        \begin{tikzpicture}[
    node distance={7em},
    nodetext/.style = {minimum size=2em, font=\footnotesize},
    main/.style = {draw, nodetext, circle},
    edgelabel/.style = {midway, above, sloped, font=\footnotesize},
    fastedge/.style = {thin, dashed},
    slowedge/.style = {double, double distance=1pt},
    dot/.style = {minimum width=4em, minimum height=2em}
]

    \tikzset{>=Latex}

    \node[main] (t0) {$t_0$};
    \node[main] (t1) [right of=t0] {$t_1$};
    \node[main] (s1) [yshift=-1em][above of=t1] {$s_1$};
    \node[main] (t2) [right of=t1] {$t_2$};
    \node[main] (s2) [yshift=-1em][above of=t2] {$s_2$};
    \node[dot] (tdot) [right of=t2] {};
    \node[dot] (sdot) [yshift=-1em][above of=tdot] {};
    \node[dot] (dispdot) at ($(tdot)!0.5!(sdot)$) {$\cdots$};
    \node[main] (t3m) [right of=tdot] {$t_{3m}$};
    \node[main] (s3m) [yshift=-1em][above of=t3m] {$s_{3m}$};

    \foreach \i in {1,2,3m}
        \draw[->, fastedge] (s\i) to node[edgelabel]{$a_{\i}\edgeVert$} (t\i);

    \foreach \t / \s / \e in {0/1/, 1/2/, 2/dot/.south west, dot/3m/}
        \draw[->] (t\t.15) to node[edgelabel]{\edgeDiag} (s\s\e);

    \foreach \tp/\tn in {0/1, 1/2, 2/dot, dot/3m}
        \draw[->, slowedge] (t\tp) to node[edgelabel]{\edgeHori} (t\tn);
\end{tikzpicture}

        }
        \caption{ \dbncgBestHardCaption{\dnc} \label{fig:dbncg-best-hard} }
        \vskip-1em
    \end{figure}

    We reduce from the strongly \NP-complete 3-partition problem~\citep{
    garey1979computers}.

    In the \emph{3-partition problem}, we are given a multiset of $3m$ positive
    integers $S=\condSet{a_i \in \posint}{1 \le i \le 3m}$ and a number $T$ such
    that $\sum a_i = mT$ and $T/4 < a_i < T/2$. The question $Q_1$ is: Can $S$
    be partitioned into $m$ sets $S_1, \cdots, S_m$ such that $\sum_{a_i\in S_j}
    a_i = T$ for all $1 \le j \le m$? Note that due to the strong
    \NP-completeness of 3-partition, we assume the numbers use unary encoding so
    that the \dnc{} graph size is polynomial.

    As in the proof of \cref{thm:dbncg-pls-complete}, we assign a weight
    $w_e\in\posint$ to each edge $e$. The \dnc{} instance has two types of edges
    with non-zero delay: fast edge and slow edge, with delay functions
    $d_{\text{fast}}(x) = \indicator{x\ge1} + 2\indicator{x\ge2}$ and
    $d_{\text{slow}}(x) = 2$.

    As illustrated in \cref{fig:dbncg-best-hard}, for each integer $a_i$, we
    create a pair of vertices $(s_i,\,t_i)$ connected by a fast edge with
    $w_{(s_i,\,t_i)} = a_i$. We create a new vertex $t_0$ as the source while
    using $t_{3m}$ as the sink. For $0 \le i < 3m$, we connect $t_i$ to $t_{i+1}
    $ by a unit-length slow edge and $t_i$ to $s_{i+1}$ by a unit-length edge
    without delay. There are $m$ players who can choose paths with length
    bounded by $b=T+3m$.

    We ask the question $Q_2$: Is there a PNE in the above game where the total
    delay is no more than $m(6m-3)$? Each player prefers an unoccupied fast edge
    to a slow edge but also prefers a slow edge to an occupied fast edge due to
    the above delay functions. Since $T/4 < a_i < T/2$, the best response of a
    player contains either 2 or 3 fast edges, contributing $6m-2$ or $6m-3$ to
    the total delay in either case. Best social welfare of $m(6m-3)$ is only
    achieved when every player chooses 3 fast edges, which also means that their
    choices together constitute a partition of the integer set $S$ in $Q_1$.
    Therefore, $Q_2$ and $Q_1$ have the same answer.
\end{proof}

\paragraph{Remark}
The optimal global welfare of any ``centralized'' solution (where players
cooperate to minimize total delay instead of selfishly minimizing their own
delay) achieves $m(6m-3)$ if and only if the original 3-partition problem has a
solution. Hence we also have the following theorem:
\begin{theorem}
    Computing the optimal global welfare of pure strategies in \dnc{} is
    \NP-hard.
\end{theorem}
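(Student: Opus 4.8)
The plan is to recycle, almost verbatim, the \dnc{} instance built from a 3-partition instance $(S,T)$ in the proof of \cref{thm:dbncg-best-np-hard} (source $t_0$, sink $t_{3m}$, $m$ players, distance bound $b=T+3m$, fast edges $(s_i,t_i)$ of weight $a_i$ with delay $d_{\text{fast}}$, unit slow edges of delay $2$), but now to argue about the minimum total delay over \emph{all} pure strategy profiles rather than over PNEs. Concretely, I will show that this minimum equals $m(6m-3)$ if and only if $(S,T)$ is a yes-instance of 3-partition. Since the 3-partition numbers are in unary and the problem is strongly \NP-complete, the graph has polynomial size, so this reduction proves that deciding whether the optimal global welfare of \dnc{} is at least a given threshold is \NP-hard, which is the theorem.

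First I would set up the bookkeeping common to every feasible profile. Each player routes $t_0\to t_{3m}$, and at each of the $3m$ stages picks either the unit-length slow edge or the length-$(1+a_i)$ detour through fast edge $i$. Let $x_i$ be the load on fast edge $i$ and $k_j$ the number of fast edges used by player $j$; then $\sum_i x_i=\sum_j k_j$, and because $T/4<a_i<T/2$ a path with $4$ fast edges already exceeds the bound $b$, so $k_j\le 3$ and hence $\sum_i x_i\le 3m$. Summing delays over players and regrouping by edge yields the closed form
\[
  D \;=\; \sum_i x_i\, d_{\text{fast}}(x_i) \;+\; 2\bigl(3m^2-\textstyle\sum_i x_i\bigr)
    \;=\; 6m^2 + \sum_i x_i\bigl(d_{\text{fast}}(x_i)-2\bigr).
\]

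Next I would analyze the correction term $\sum_i x_i(d_{\text{fast}}(x_i)-2)$. From $d_{\text{fast}}(x)=\indicator{x\ge1}+2\indicator{x\ge2}$ one reads off that $x(d_{\text{fast}}(x)-2)$ is $0$ for $x=0$, is $-1$ for $x=1$, and is $\ge 2$ for $x\ge 2$; in all cases it is $\ge -1$ whenever $x\ge 1$. Since there are only $3m$ fast edges, the correction term is $\ge -3m$, so $D\ge m(6m-3)$ for every feasible profile, and feasible profiles always exist (every player can take all $3m$ slow edges). Equality $D=m(6m-3)$ forces each loaded fast edge to have load exactly $1$ and all $3m$ fast edges to be loaded; together with $k_j\le 3$ this forces $k_j=3$ for every $j$, so the players' fast-edge sets partition $\{1,\dots,3m\}$ into $m$ triples. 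The distance bound gives each triple weight $\le T$, and the $m$ triple-weights sum to $\sum a_i=mT$, so each equals $T$: a valid 3-partition. Conversely, a 3-partition $S_1,\dots,S_m$ gives the profile ``player $j$ uses exactly the fast edges in $S_j$'', which is feasible (path length $3m+T=b$) and has total delay $m(6m-3)$. As delays are integers, a no-instance forces $D\ge m(6m-3)+1$, so the threshold $m(6m-3)$ exactly separates yes- and no-instances, completing the reduction.

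I expect the only subtle point to be the equality analysis: one must rule out using a few overloaded fast edges (load $\ge 2$) to ``cover'' extra edges cheaply, and rule out using the distance-bound slack to make some triple lighter than $T$ while still covering all $3m$ fast edges. Both are immediate from the sign and monotonicity facts about $x(d_{\text{fast}}(x)-2)$ together with the identity $\sum_i x_i=\sum_j k_j$, so no idea beyond that of \cref{thm:dbncg-best-np-hard} is required: the content is simply that the value $m(6m-3)$ is attained by an unconstrained (centralized) optimum precisely when it is attained by a PNE.
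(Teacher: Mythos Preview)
Your proposal is correct and follows exactly the approach the paper uses: the paper's proof is just the one-sentence remark preceding the theorem, observing that in the very same \dnc{} instance built from 3-partition in \cref{thm:dbncg-best-np-hard}, the centralized optimum equals $m(6m-3)$ iff the 3-partition instance is solvable. Your write-up simply fleshes out that remark with the explicit bookkeeping identity $D=6m^2+\sum_i x_i(d_{\text{fast}}(x_i)-2)$ and the termwise analysis forcing $x_i\equiv 1$ at the optimum, which the paper leaves implicit.
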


\begin{theorem}
    Computing the worst social welfare (i.e., maximal total delay) among PNEs
    of a \dnc{} is \NP-hard.
    \label{thm:dbncg-worst-np-hard}
\end{theorem}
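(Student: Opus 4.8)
The plan is to give a polynomial-time reduction from the strongly \NP-complete 3-partition problem, reusing the integer-packing gadget from the proof of \cref{thm:dbncg-best-np-hard}. There the $3m$ integers $a_1,\dots,a_{3m}$ are realised as lengths of ``fast'' edges strung consecutively along a chain, there are $m$ players, and the distance bound $b=T+3m$ is tuned so that an $s$--$t$ route contains at most three fast edges, and three only when their lengths sum to exactly $T$; the efficient profile (every player on three fast edges) is the global optimum and is a PNE exactly when $S$ admits a 3-partition. Here I would keep this skeleton but augment it with a ``trap'': a parallel block of high-delay edges, plus auxiliary source/sink edges carrying large indicator-type delays as in \cref{thm:dbncg-pls-complete}, giving an explicit ``crammed'' strategy profile in which all $m$ players are forced onto high-delay edges and suffer a total delay $D_{\mathrm{bad}}$ strictly larger than the delay $D_{\mathrm{good}}$ of any efficient PNE. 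The reduction would be arranged so that the crammed profile is a PNE if and only if $S$ admits a 3-partition. Since a PNE always exists, the worst social welfare is then $-D_{\mathrm{bad}}$ exactly on yes-instances and $-D_{\mathrm{good}}$ (or better) otherwise, so computing it decides 3-partition; as before, unary encoding keeps the instance polynomial and weighted edges expand into unit-length chains.

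I would then carry out the proof in three steps. (1) Fix the graph and the delay functions so that a lone player on a fast edge is best off, a player on a trap edge next, and a player sharing a fast edge worst off, and choose the numeric parameters ($b$, the edge weights, and a large constant $R$) so that (i) $D_{\mathrm{good}}<D_{\mathrm{bad}}$, and (ii) in the crammed profile no player has enough remaining distance budget to peel off onto an unoccupied fast edge. (2) Prove the ``if'' direction: from a 3-partition of $S$, write the crammed profile down explicitly and verify it is a Nash equilibrium by ruling out every unilateral deviation --- switching which trap edges are used, trading a trap edge for a fast edge, or rerouting onto a different collection of fast edges --- using that the budget bound, together with the fact that the free fast-edge capacity is exactly consumed by a valid packing, blocks each move. (3) Prove the ``only if'' direction: if $S$ has no 3-partition, show that no PNE can keep every player off the fast edges, and more generally that every PNE has total delay at most $D_{\mathrm{good}}$; the mechanism is that, absent a perfect packing of the $a_i$'s, there is always an affordable unoccupied fast edge available to \emph{some} player, who then strictly improves, so the crammed profile (and any similarly bad profile) is unstable.

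The main obstacle is steps (1)--(2): engineering the trap and pinning down the parameters so that the crammed profile is stable against \emph{all} single-player deviations precisely on yes-instances, and --- equally important --- so that on no-instances no ``intermediate'' profile, of total delay strictly between $D_{\mathrm{good}}$ and $D_{\mathrm{bad}}$, survives as a PNE and breaks the equivalence. Because a player's delay strictly decreases with every extra fast edge they can afford, a lone deviator always wants to grab an affordable unoccupied fast edge, so the reduction hinges on arranging the distance budgets and fast-edge occupancies in the crammed profile so that every player is simultaneously \emph{just} unable to, and on certifying that such a simultaneous arrangement exists exactly when the $a_i$'s split into groups of sum $T$. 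Tightening this may require a more rigid ``track'' structure than in \cref{thm:dbncg-best-np-hard} or extra dummy items for padding, and it is where essentially all of the case analysis will live; it is also conceivable that a cleaner route starts from a different \NP-complete problem whose instances more naturally induce locked-in equilibria.
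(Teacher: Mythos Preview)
Your plan has a real gap at exactly the point you flag as the obstacle: you never explain \emph{why} the stability of the crammed profile should depend on whether a 3-partition exists. If in that profile all $m$ players sit on the trap edges and the fast chain is unoccupied, then whether a single deviator can afford a profitable fast-edge route depends only on the distance bound $b$ and the lengths $a_i$, not on whether the $a_i$ can be grouped into triples summing to $T$; so the crammed profile is either always a PNE or never one, regardless of the 3-partition answer. Your line about ``the free fast-edge capacity is exactly consumed by a valid packing'' is the crux of the confusion --- if everyone is in the trap, nothing is consuming that capacity. To make such an argument go through you would need \emph{other} agents pre-loading the fast chain in an instance-dependent way, and engineering that is essentially the whole proof.

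The paper takes a different and cleaner route that avoids this engineering entirely. It does not build a trap; instead it adds $3m$ auxiliary players, each forced (via large indicator penalties on the $(s,s_i)$ edges) onto a distinct fast edge, so every fast edge starts at load~$1$. The delays are retuned to $d_{\text{fast}}(x)=2\indicator{x\ge 2}+2\indicator{x\ge 3}$ and $d_{\text{slow}}(x)=3$. Now for each of the remaining $m$ ``free'' players, joining a pre-loaded fast edge costs $2$ (individually better than a slow edge's $3$), but \emph{socially} each such use adds $4$ to the total delay, since the auxiliary player's cost jumps from $0$ to $2$ as well. Hence free players selfishly want as many fast edges as the budget allows --- three when the corresponding $a_i$'s sum to $T$, otherwise two --- and every extra fast edge \emph{worsens} total delay. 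The worst PNE (all $m$ free players on three fast edges each, disjointly) is therefore attainable precisely when $S$ admits a 3-partition. The bad outcome here is not a fragile trap to be stabilised but the natural selfish equilibrium; the only question is whether the budget permits it, and that is exactly the 3-partition question.
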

\begin{proof}
    We build on the proof of \cref{thm:dbncg-best-np-hard}. We create a new
    vertex $s$ as the source and connect $s$ to $t_0$ and $s_i$ where $1\le i\le
    3m$:
    \begin{align*}
        \begin{array}{rclrclr}
            w_{(s,\,t_0)} &=& 1 &
            d_{(s,\,t_0)}(x) &=& \indicator{x \ge m + 1} \cdot R &
            \multirow{2}{*}{\quad\text{where} R = 9m + 2} \\
            w_{(s,\,s_i)} &=& T + i - a_i + 1 \quad\quad &
            d_{(s,\,s_i)}(x) &=& \indicator{x \ge 2} \cdot R &
        \end{array}
    \end{align*}
    The delay functions on fast and slow edges are changed to $d_{\text{fast}}(x)
    =  2\indicator{x\ge2} + 2\indicator{x\ge3}$ and $d_{\text{slow}}(x) = 3$.

    There are $4m$ players in this game with a distance bound $b=T+3m+1$. Since
    $R$ is greater than the delay on any path from $s_i$ or $t_0$ to the sink,
    we can assume WLOG that player $i$ choose $(s,\,s_i)$ where $1 \le i \le 3m$,
    and players $3m+1, \cdots, 4m$ all choose $(s,\,t_0)$. The first $3m$
    players generate a total delay of $D_0 = d_{\text{slow}} \cdot 3m(3m-1)/2 =
    9m(3m-1)/2$ where player $i$ occupies one fast edge and $3m-i$ slow edges.
    Each of the last $m$ players occupies 2 or 3 fast edges in their best
    response. Occupying one fast edge incurs $4$ total delay because one of the
    first $3m$ players also uses that edge. Therefore, the each of last $m$
    players contributes $9m+2$ or $9m+3$ to the total delay. We ask the question
    $Q_3$: Is there a PNE where the total delay is at least $D_0 + m(9m+3)$?
    From our analysis, we can see that $Q_3$ and $Q_1$ have the same answer.
\end{proof}

\subsection{Distance-bounded network congestion game with default action}
\label{sec:dbncg-da}

As we have discussed, we formulate capability restriction as limiting the size
of the programs accessible to a player. In this section, we propose a variant of
\dnc{} where we define a DSL to compactly represent the strategies. We will also
show that the size of a program equals the length of the path generated by the
program, which can be much smaller than the number of edges in the path. The new
game, called \emph{distance-bounded network congestion game with default action
(\dbncgda)}, requires that each vertex except the source or sink has exactly one
outgoing zero-length edge as its default action. All other edges have unit
length. A strategy in this game can be compactly described by the actions taken
at \emph{divergent points} where a unit-length edge is followed.

\begin{definition}
    An instance of \dbncgda{} is a tuple $G=(\setV \X \setE \X \setN \X s \X
    t \X b \X (d_e)_{e \in \setE} \X (w_e)_{e\in\setE})$ where:
    \begin{itemize}
        \item $w_e \in \{0,\,1\}$ is the length of edge $e$.
        \item All other symbols have the same meaning as in \cref{def:dbncg}.
    \end{itemize}
    Moreover, we require the following properties:
    \begin{itemize}
        \item A default action, denoted as $\da(\cdot)$, can be defined for
            every non-source, non-sink vertex $v\in\setV/\{s,\, t\}$ such that:
            \begin{align*}
                \Big(v,\,\da(v)\Big) \in \setE,\quad w_{(v,\,\da(v))} &= 0, \\
                \forall u \in \setV/\{\da(v)\}:
                \; (v,\,u)\in\setE &\implies w_{(v,\,u)}=1
            \end{align*}
        \item Edges from the source have unit length: $\forall v \in \setV:\; (s,
            \,v) \in \setE \implies w_{(s,\,v)}=1$
        \item The subgraph of zero-length edges is acyclic. Equivalently,
            starting from any non-source vertex, one can follow the default
            actions to reach the sink.
    \end{itemize}

    The strategy space of a single player contains all $s-t$ simple paths whose
    length does not exceed $b$:
    \begin{align*}
        \strategyLevel{b} \defeq \condSet{(p_0,\, \ldots,\, p_k)}{
            \begin{array}{l}
                p_0=s,\,p_k=t,\,
                (p_i,\,p_{i+1})\in\setE,
                \sum_{i=0}^{k-1}w_{(p_i,\,p_{i+1})} \le b, \\
                p_i \neq p_j \text{ for } i \neq j
            \end{array}}
    \end{align*}
\end{definition}

Note that the strategy spaces are strictly monotonically increasing up to the
longest simple $s-t$ path. This is because for any path $p$ whose length is $b
\ge 2$, we can remove the last non-zero edge on $p$ and follow the default
actions to arrive at $t$, which gives a new path with length $b-1$. Formally, we
have:
\begin{property}
    \label{prop:monostrategy}
    Let $\bmax$ be the length of the longest simple $s-t$ path in a \dbncgda{}
    instance. For $1 \le b < \bmax$, $\strategyLevel{b} \subsetneq
    \strategyLevel{b+1}$
\end{property}

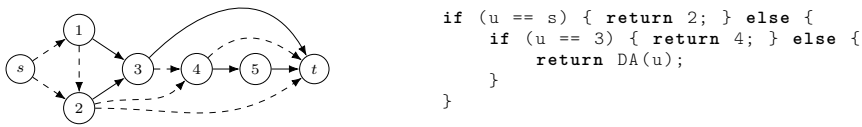
\begin{figure}[tb]
    \subcaptionbox{Example graph structure. Solid arrows are default edges and
        dashed arrows are unit-length edges.}[.53\textwidth]{
        \begin{adjustbox}{width=.36\textwidth}
            \begin{tikzpicture}[
    main/.style = {draw, font=\scriptsize, circle},
    nda/.style = {thin, dashed},
    da/.style = {thin}
]

    \tikzset{>=Latex}

    \node (s) at (0, 0) [main] {$s$};
    \node (1) at (3em, 2em) [main] {$1$};
    \node (2) at (3em, -2em) [main] {$2$};
    \node (3) at (6em, 0) [main] {$3$};
    \node (4) at (9em, 0) [main] {$4$};
    \node (5) at (12em, 0) [main] {$5$};
    \node (t) at (15em, 0) [main] {$t$};

    \draw[->,nda] (s) to (1);
    \draw[->,nda] (s) to (2);
    \draw[->,nda] (1) to (2);
    \draw[->,da] (1) to (3);
    \draw[->,da] (2) to (3);
    \draw[->,da, out=45, in=120, looseness=1.3] (3) to (t);
    \draw[->,da] (4) to (5);
    \draw[->,da] (5) to (t);
    \draw[->,nda, out=15, in=-135, looseness=1.05] (2) to (4);
    \draw[->,nda, out=0, in=-145, looseness=1.1] (2) to (t);
    \draw[->,nda] (3) to (4);
    \draw[->,nda, out=45, in=145, looseness=1.15] (4) to (t);
\end{tikzpicture}

        \end{adjustbox}
    }
    \hfill
    \subcaptionbox{
        The shortest program to represent the strategy
        $(s,\, 2,\, 3,\, 4,\, 5,\, t)$.
    }[.44\textwidth]{
        \lstinputlisting[
            language=C, basicstyle=\ttfamily\scriptsize]{fig/DBNCG-DA-example.c}
    }
    \caption{
        An example of the \dbncgda{} game and a program to represent a strategy.
        \label{fig:dbncgda-example}
    }
    \vskip-1em
\end{figure}

We define a Domain Specific Language (DSL) with the following context-free
grammar \citep{hopcroft2014introduction} to describe the strategy of a player:
\begin{lstlisting}[xleftmargin=3em,
                   basicstyle=\ttfamily\small,
                   literate=
                   {->}{$\rightarrow$}{2}
                   {in}{$\in$}{2}
                   {setV}{$\setV$}{2}
                   {ret}{{\bf return}}{6}
                   {if}{{\bf if}}{2}
                   {else}{{\bf else}}{4}
                   {vu}{$u$}{1}
                   {vv}{$v$}{1}
               ]
Program -> ret DA(vu);
           | if (vu == V) {ret V;} else {Program}
      V -> vv in setV
\end{lstlisting}

A program $p$ in this DSL defines a computable function $f_p: \setV \mapsto
\setV$ with semantics similar to the C language where the input vertex is stored
in the variable $u$, as illustrated in \cref{fig:dbncgda-example}. The strategy
corresponding to the program $p$ is a path $(c_0,\, \dots,\,c_k)$ from $s$ to
$t$ where:
\begin{align*}
    c_0 = s \quad\quad
    c_{i+1} = f_p(c_i) \text{ for } i \ge 0 \text{ and } c_i \neq t \quad\quad
    k = i \text{ if } c_i=t
\end{align*}

We define the capability of a player as the maximum size of programs that they
can use. The size of a program is the depth of its parse tree. Due to the
properties of \dbncgda{}, the shortest program that encodes a path from $s$ to
$t$ specifies the edge chosen at all divergent points in this path. The size of
this program equals the length of the path. Hence the distance bound in the game
configuration specifies the capability of each player in the game. To study the
game outcome under different player capability constraints, we study \dbncgda{}
instances with different values of $b$.

We state hardness results for \dbncgda{}. Their proofs are similar to those
for \dnc{} except that we need to redesign the edge weights to conform to the
requirements on default action. The proof details are given in
\cref{append:dbncg-da-proof}.

\begin{restatable}{theorem}{ThmDbncgDaPlsComplete}
    \label{thm:dbncg-da-pls-complete}
    \dbncgda{} is \PLS-complete.
\end{restatable}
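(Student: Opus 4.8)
The plan is to follow the two-part template behind the \dnc{} results. First I would check $\dbncgda \in \PLS$ by repeating the argument used for the lemma $\dnc \in \PLS$: it is a potential game whose potential-function local minima are exactly its PNEs, both the potential and the social welfare $W$ are polynomial-time computable, and the only thing to verify is that a best response of a player $i$ given the others' loads can be computed in \P. That follows from the budgeted shortest-path recursion $f(v,\,d) = \min_{(u,v)\in\setE}\big(f(u,\,d-w_{(u,v)}) + d_{(u,v)}(x_{(u,v)}+1)\big)$, where $f(v,\,d)$ is the least delay of an $s$--$v$ path using at most $d$ unit-length edges; the best response is $f(t,\,b)$. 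The one new wrinkle is that zero-length edges do not decrease the budget, so within each fixed budget level $d$ the values $f(\cdot,\,d)$ must be filled in in topological order of the zero-length subgraph, which is well-defined precisely because that subgraph is acyclic by definition of \dbncgda{}. As before, cycles in the returned walk can be deleted for free since \dbncgda{} inherits the no-negative-delay-cycle condition.

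For \PLS-hardness I would re-run the reduction from the quadratic threshold game in the proof of \cref{thm:dbncg-pls-complete}, reusing the lower-triangular arrangement of the resource vertices $v_{ij}$, the vertices $s_i,\,t_i$, the source/sink gadget, the delay functions $d_{(s,s_i)}$ and $d_{(t_i,t)}$ and the scaling constant $R$, and the correspondence that sends player $i$'s right-down path to $S^{\text{in}}_i$ and the direct edge $(s_i,t_i)$ to $S^{\text{out}}_i$. The modification is to re-implement all weighted edges so that the instance is a legal \dbncgda{}: every non-source, non-sink vertex gets exactly one outgoing zero-length edge (its default action), every remaining edge and every edge leaving $s$ has unit length, and the zero-length subgraph stays acyclic. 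A weighted edge of length $\ell$ from $x$ to $y$ is expanded into a chain $x=u_0\to\cdots\to u_\ell=y$ of unit-length edges with a default edge added at each internal $u_j$; in particular the long edge $(s,s_i)$ is replaced by a single unit-length edge $s\to s_i'$ (now carrying the delay $\indicator{x\ge 2}\cdot(|\setN|+1)R$) followed by such a chain down to $s_i$, so that $s$ has only unit-length out-edges. All of the newly added default edges are routed into one shared ``drain'' path to $t$ whose first edge is loaded with delay at least $(|\setN|+1)R$ whenever it carries any player. I would then recompute $w_i$ and the bound $b$ — now as counts of unit-length edges, with the extra $+1$'s introduced by vertex splitting folded in — so that, exactly as in \cref{thm:dbncg-pls-complete}, player $i$ still has precisely the two length-feasible routes from $s_i$ to $t_i$, and conclude that PNEs of the constructed \dbncgda{} instance are in bijection with PNEs of the threshold game; since the expansions keep the graph polynomial and the correspondence is computable, this is a \PLS-reduction.

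\textbf{Main obstacle.} The one genuinely new difficulty is that default edges are \emph{mandatory} yet cost nothing in distance, so a priori a player could dodge resources by stepping onto a default edge and coasting to $t$, lowering its own delay and destroying the correspondence with the threshold game. The heart of the proof is therefore the placement of the default edges and the penalization of the drain: one must show that in \emph{every} strategy profile, leaving any corridor through a default action is strictly worse than continuing along it (the congested drain edge alone costing more than all resource delays combined), while simultaneously keeping the zero-length subgraph acyclic and ensuring that following defaults from any vertex still reaches $t$. Once this invariant is established, the best-response analysis and the equilibrium-preservation argument transfer verbatim from \cref{thm:dbncg-pls-complete}; the same kind of weight redesign also adapts the proofs of \cref{thm:dbncg-best-np-hard,thm:dbncg-worst-np-hard} to give the \NP-hardness results for \dbncgda{}.
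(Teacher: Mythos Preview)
Your proposal is correct and follows the same overall strategy as the paper: show $\dbncgda\in\PLS$ via the budgeted shortest-path DP (your remark that the zero-length subgraph must be processed in topological order is a nice point the paper leaves implicit), and obtain \PLS-hardness by retrofitting the quadratic-threshold-game reduction so that every non-source, non-sink vertex acquires a default action.

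The one real difference is \emph{how} the default-action requirement is met. You add a fresh zero-length default edge at every vertex and funnel all of them into a single high-delay ``drain'' path to $t$; this works, but it forces you to keep every structural edge at unit length and then redo the arithmetic for $w_i$ and $b$ from scratch. The paper instead repurposes existing edges as defaults: the vertical edges $(v_{ij},v_{i+1,j})$, the internal edges carrying $r_{ij}$, and the edges $(s_i,v_{i1})$, $(v_{ni},t_i)$, $(t_i,t)$ are all declared zero-length, so the right-down path for player $i$ already consists mostly of default actions and only the horizontal edges contribute to its length. This yields the clean values $w_{(s_i,t_i)}=i(i-1)+\indicator{i=1}$ and $b=n(n-1)+1$. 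The only vertices left without a default are the internal nodes created when a weighted edge is expanded into a unit chain; for those the paper attaches a small local bypass gadget $u\to u'\to v$ of two zero-length edges with an $r_\infty$ delay on $(u,u')$, which is exactly your drain idea applied pointwise. Either construction is valid; the paper's is slightly more economical and makes the length bookkeeping simpler, but your identification of the main obstacle and your solution to it are both on target.
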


\begin{restatable}{theorem}{ThmDbncgDaBestNpHard}
    \label{thm:dbncg-da-best-np-hard}
    Computing the best social welfare (i.e., minimal total delay) among PNEs
    of a \dbncgda{} is \NP-hard.
\end{restatable}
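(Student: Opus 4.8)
The plan is to reduce from the strongly \NP-complete 3-partition problem, reusing the skeleton of the reduction behind \cref{thm:dbncg-best-np-hard} and only ``redesigning the edge weights'' so that every edge has length $0$ or $1$, every non-source non-sink vertex carries a distinguished zero-length default edge, and the zero-length subgraph is acyclic. As there, I would keep a line of decision vertices $t_0,\ldots,t_{3m}$ with $t_{3m}$ the sink, add a fresh source $s$ joined to $t_0$ by a single unit-length (hence unavoidable) edge of delay $0$, and build one ``fast gadget'' per integer $a_i$. At a decision vertex $t_{i-1}$ the default edge is $t_{i-1}\to t_i$, given length $0$ and the constant slow delay $d_{\text{slow}}(x)=2$; the only other edge out of $t_{i-1}$ opens a chain $t_{i-1}\to u^{(i)}_1\to\cdots\to u^{(i)}_{a_i}\to t_i$ consisting of $a_i$ unit-length edges followed by one zero-length edge, with the load-dependent fast delay $d_{\text{fast}}(x)=\indicator{x\ge1}+2\indicator{x\ge2}$ placed on the first edge and zero delay on the rest. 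Thus moving ``slowly'' past item $i$ costs no distance budget and incurs delay $2$, while going ``fast'' costs exactly $a_i$ of the budget and incurs delay $1$ if the player is alone on the gadget and $3$ if it is shared --- precisely the trade-off of \cref{thm:dbncg-best-np-hard}, re-expressed using unit edges so that the distance bound again encodes the numbers $a_i$.

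The one genuinely new point is that the default-action requirement forces each internal chain vertex $u^{(i)}_j$ with $j<a_i$ to have its own zero-length out-edge. I would route every such edge to $t_i$ but give it a prohibitively large constant delay $R=6m+1$. Since every player always has the ``all-slow'' strategy available at total delay exactly $6m$ (independent of the others, because slow delays are constant), no best response ever traverses an $R$-edge; hence a player entering a fast gadget is compelled to walk its entire chain, and the game reduces to ``choose slow or fast at each of the $3m$ positions,'' as in \cref{thm:dbncg-best-np-hard}. Setting the distance bound to $b=T+1$ (the extra $1$ paying for $s\to t_0$), the hypotheses $T/4<a_i<T/2$ again force a player to afford at most $3$ fast gadgets, force the fast gadgets used by distinct players in any PNE to be pairwise disjoint (two players sharing a gadget would both strictly prefer falling back to the zero-budget slow default), and hence make each player's equilibrium delay equal to $6m-k_p$ with $k_p\le 3$. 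Consequently the minimum total delay over PNEs equals $m(6m-3)$ exactly when every player manages $k_p=3$; since the $m$ players' fast gadgets are then disjoint, cover all $3m$ items, and each has $a$-sum at most $T$ while the $a$-sums total $mT$, this happens exactly when $S$ splits into $m$ triples summing to $T$. So deciding whether the best social welfare among PNEs of a \dbncgda{} instance is at least $-m(6m-3)$ is \NP-hard, as claimed (and \cref{thm:dbncg-da-pls-complete} is obtained by the same style of surgery on the \PLS-hardness construction for \dnc{}).

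I expect the main obstacle to be exactly this interaction between the mandatory zero-length defaults and the distance-based encoding of the $a_i$: a carelessly placed default edge out of a chain vertex gives a player a ``shortcut'' reaching the next decision vertex using far fewer than $a_i$ divergent moves, which makes the distance bound --- and therefore the whole encoding --- vacuous. The penalty-delay trick defeats this, but one must check carefully that (i) $R=6m+1$ is large enough that \emph{no} best response, not merely the all-slow one, ever uses a penalty edge; (ii) the resulting zero-length subgraph is still acyclic (all default edges point ``forward'' to some $t_i$, and $t_i\to t_{i+1}$), so a valid default action exists at every internal vertex; and (iii) the chain lengths $a_i$ and the value $R$ keep the instance polynomial, which is where the unary encoding guaranteed by the strong \NP-completeness of 3-partition is used. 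Once the penalty edges are shown to be irrelevant to equilibria, the PNE-correspondence argument of \cref{thm:dbncg-best-np-hard} transfers essentially verbatim, and the reduction is complete.
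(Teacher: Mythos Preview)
Your proposal is correct and follows essentially the same approach as the paper: both take the 3-partition reduction behind \cref{thm:dbncg-best-np-hard}, make the slow horizontal edges $t_{i-1}\to t_i$ the zero-length defaults, realise each fast edge as a unit-length chain of total length $a_i$, and rely on the high-delay auxiliary default trick (already used in the proof of \cref{thm:dbncg-da-pls-complete}) to forbid mid-chain shortcuts; the paper's version is terser (it just states the new weights $w_{(t_{i-1},s_i)}=1$, $w_{(s_i,t_i)}=a_i-1$, $w_{(t_{i-1},t_i)}=0$, $b=T$), while you spell out the penalty defaults and insert a fresh source $s\to t_0$, but the underlying reduction is the same.
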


\begin{theorem}
    \label{thm:dbncg-da-worst-np-hard}
    Computing the worst social welfare (i.e., maximal total delay) among PNEs
    of a \dbncgda{} is \NP-hard.
\end{theorem}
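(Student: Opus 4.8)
The plan is to reduce from the strongly \NP-complete 3-partition problem, mirroring the proof of \cref{thm:dbncg-worst-np-hard} but over a \dbncgda{} gadget, just as \cref{thm:dbncg-worst-np-hard} itself is obtained from \cref{thm:dbncg-best-np-hard}. Concretely, I would start from the \dbncgda{} best-case gadget of \cref{thm:dbncg-da-best-np-hard}, adjoin a fresh source $s$ with unit-length edges $(s,t_0)$ and $(s,s_i)$ for $1\le i\le 3m$ carrying the saturation delays $\indicator{x\ge m+1}\cdot R$ and $\indicator{x\ge 2}\cdot R$ (with $R$ the same penalty constant as in \cref{thm:dbncg-worst-np-hard}), assign these edges the weights $1$ and $T+i-a_i+1$ (realized as chains of unit-length edges, exactly as weights are handled throughout), switch the fast/slow delay functions to $d_{\text{fast}}(x)=2\indicator{x\ge2}+2\indicator{x\ge3}$ and $d_{\text{slow}}(x)=3$, and run $4m$ players under the distance bound $b=T+3m+1$ of \cref{thm:dbncg-worst-np-hard} (possibly adjusted by an additive constant to absorb the default-action routing).

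Granting such a gadget, the analysis is the verbatim analogue of \cref{thm:dbncg-worst-np-hard}. The penalty constant $R$ forces, in every PNE, exactly one player on each $(s,s_i)$ and exactly $m$ players on $(s,t_0)$ (otherwise some player sharing an $(s,s_i)$ could escape the $R$ penalty by switching to an unused one, and similarly for $(s,t_0)$), so WLOG player $i$ uses $(s,s_i)$ for $1\le i\le 3m$ and players $3m+1,\dots,4m$ use $(s,t_0)$; the staircase of weights $T+i-a_i+1$ together with $b$ pins each of the first $3m$ players to a best response that takes exactly one fast edge and $3m-i$ slow edges, contributing the fixed total delay $D_0=9m(3m-1)/2$; and each of the last $m$ players packs two or three fast edges (because $T/4<a_i<T/2$), contributing $9m+2$ or $9m+3$ to the total delay. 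Hence a PNE with total delay at least $D_0+m(9m+3)$ exists iff all $m$ of those players can simultaneously take three pairwise-disjoint fast edges whose weights sum to $T$ for each player, i.e. iff the 3-partition instance is a yes-instance; since social welfare is the negated total delay, this makes computing the worst social welfare \NP-hard.

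I expect the real work to be entirely in constructing the gadget so that it is a legal \dbncgda{} instance — every non-source, non-sink vertex has exactly one outgoing zero-length edge, all other edges (and all edges out of $s$) have unit length, and the zero-length subgraph is acyclic with the sink reachable from everywhere by following defaults — without the mandatory default edges creating shortcuts that collapse the intended incentives. This is precisely the re-engineering carried out in \cref{append:dbncg-da-proof} for the \PLS{} and best-case reductions, and I would reuse those ideas: expand each weighted edge into a chain of unit-length edges, place the whole delay of the original edge on the chain's last edge so that exiting the chain early forfeits that delay while still costing distance, and route every internal chain vertex's default toward the chain's endpoint (hence on toward the sink); where a zero-length default would otherwise let a player bypass a delay gadget profitably, give that default a prohibitively large delay so that taking it is never part of a best response and never a profitable deviation. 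The delicate points are then the bookkeeping ones: checking acyclicity of the default backbone, that each vertex has a unique zero-length out-edge, and that after the chain encoding the bound $b$ still separates "three fast edges of total weight at most $T$" from "four fast edges".
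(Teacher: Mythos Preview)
Your proposal is correct and takes essentially the same approach as the paper: the paper's own proof of \cref{thm:dbncg-da-worst-np-hard} is literally the one-line remark in \cref{append:dbncg-da-proof} that the construction of \cref{thm:dbncg-worst-np-hard} carries over to \dbncgda{} by the same re-weighting and default-action gadgetry used for \cref{thm:dbncg-da-best-np-hard}, which is precisely the plan you outline. Your sketch in fact supplies more detail than the paper does; the only caveat is that the paper realizes the missing default actions via the auxiliary-vertex $r_\infty$ gadget of \cref{fig:dbncg-da-pls-proof}(b) rather than by rerouting delays onto a chain's last edge, so you should lean on that construction when filling in the bookkeeping.
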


\begin{theorem}
    Computing the optimal global welfare of pure strategies in \dbncgda{} is
    \NP-hard.
\end{theorem}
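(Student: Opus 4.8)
The plan is to obtain this exactly as the analogue for \dnc{} was obtained, namely as a byproduct of the $3$-partition reduction already used to prove \cref{thm:dbncg-da-best-np-hard} in \cref{append:dbncg-da-proof}. That reduction maps a $3$-partition instance $(S,\,T)$ to a \dbncgda{} game built from ``fast'' and ``slow'' edge gadgets (with the weights adjusted so that every non-source, non-sink vertex has a zero-length default action), together with a target total delay $D^\star$; it is arranged so that the game has a PNE of total delay at most $D^\star$ iff $(S,\,T)$ is a yes-instance. First I would strengthen this to a statement about \emph{all} pure strategy profiles, not just PNEs: the minimum total delay over $\strategyLevel{b}^n$ equals $D^\star$ when $(S,\,T)$ is a yes-instance and is strictly larger otherwise; equivalently, the optimal global welfare is $-D^\star$ iff $(S,\,T)$ is a yes-instance.

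This strengthening reuses the combinatorial analysis behind the PNE proof, now applied to arbitrary (``centralized'') profiles. Because every $d_e$ is non-decreasing, in any profile a fast gadget traversed by two or more players contributes strictly more than a slow one, which in turn contributes more than an unshared fast gadget; and because $T/4 < a_i < T/2$, the distance bound caps the number of fast gadgets on any feasible $s$-$t$ path at three. Hence each player's delay is at least the per-player value $D^\star/m$, with equality only if that player traverses exactly three fast gadgets, none of them shared. Summing over the $m$ players, the total delay is at least $D^\star$, and equality forces the players to traverse pairwise disjoint triples of fast gadgets; since the weights of the chosen gadgets must fit under the distance bound and their grand total is $\sum_i a_i = mT$, each triple must have weight exactly $T$ --- i.e., the chosen triples form a $3$-partition of $S$. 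Conversely, a $3$-partition yields such a profile. So the minimum total delay is $D^\star$ precisely for yes-instances.

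Given this, any algorithm computing the optimal global welfare of a \dbncgda{} instance decides $3$-partition by testing whether its output equals $-D^\star$; since $3$-partition is strongly \NP-complete and the $a_i$ are written in unary (so the constructed graph stays polynomial-size), this is a valid polynomial-time reduction, proving \NP-hardness. The step I expect to require the most care is the last sentence of the first paragraph: the appendix gadget for \cref{thm:dbncg-da-best-np-hard} uses modified edge weights to satisfy the default-action constraints, so one must re-check that the two structural facts used above --- ``a feasible path carries at most three fast gadgets'' and ``a minimum-total-delay profile corresponds to a partition'' --- survive those modifications unchanged. This is conceptually routine, since it is precisely the analysis already carried out for the PNE version read off for centralized profiles, but it is where the bookkeeping lives.
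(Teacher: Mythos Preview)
Your proposal is correct and follows exactly the paper's route: the theorem is obtained as the ``centralized'' byproduct of the same $3$-partition reduction used for \cref{thm:dbncg-da-best-np-hard}, just as the analogous \dnc{} statement is obtained from the reduction behind \cref{thm:dbncg-best-np-hard} via the Remark preceding it. Your per-player lower bound $6m-3$ (forcing three unshared fast gadgets at equality, hence a $3$-partition) is precisely the argument the paper has in mind, and your caution about re-verifying the two structural facts under the modified \dbncgda{} edge weights is appropriate but, as you note, routine since the constraint $\sum_{i\in F_j} a_i \le T$ is preserved verbatim.
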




\section{Impact of player capability on social welfare in {\mdseries\dbncgda}}
\label{sec:trend-dbncgpa}

We first introduce four \wltrend{} properties for general games. Given a game
with a finite hierarchy of player capabilities, we use $\strategyLevel{b}$ to
denote the strategy space when player capability is bounded by $b$. Assuming the
maximal capability is $\bmax$ (which is the longest $s-t$ simple path in
\dbncgda{}), we have $\strategyLevel{b} \subsetneq \strategyLevel{b+1}$ for
$1\le b < \bmax$ (see \cref{prop:monostrategy}). We use $\equil(b) \subseteq
\strategyLevel{b}^n$ to denote the set of all PNEs at the capability level $b$.
We define $\bestw{b} \defeq \max_{\vs\in\equil(b)} W(\vs)$ to be the best social
welfare at equilibrium and $\worstw{b} \defeq \min_{\vs\in\equil(b)} W(\vs)$ the
worst social welfare.

\begin{definition}
    A game is \pp{} if social welfare at equilibrium cannot decrease as players
    become more capable, i.e.,
    $\forall 1 \le b < \bmax$, $\bestw{b} \leq \worstw{b+1}$.
\end{definition}

\begin{definition}
    A game is \bwr{} if the worst social welfare at equilibrium under maximal
    player capability is at least as good as any social welfare at equilibrium
    under lower player capability, i.e.,
    $\forall 1 \leq b < \bmax$, $\bestw{b} \leq \worstw{\bmax} $.
\end{definition}

Note that \bwr{} is a weaker condition than \pp{}. We then define analogous
properties for games where less capable players lead to better outcomes:

\begin{definition}
    A game is \ap{} if social welfare at equilibrium cannot increase as players become more capable, i.e.,
    $\forall 1 \leq b < \bmax $, $\bestw{b+1} \leq \worstw{b}$.
\end{definition}

\begin{definition}
    A game is \bfr{} if the worst social welfare at equilibrium under minimal player capability is at least as good as any social welfare at equilibrium under higher player capability, i.e.,
    $\forall b \geq 2$, $\bestw{b} \leq \worstw{1}$.
\end{definition}

Our goal is to identify games that guarantee these properties.
Since solving equilibria for general \dbncgda{} is computationally hard
(\cref{sec:dbncg-da}), we focus on a restricted version of \dbncgda{} where all
edges share the same delay function; formally, we consider the case
$\forall e\in \setE: d_e(\cdot) = \fd$ where $\fd$ is non-negative and
non-decreasing. We call this game \emph{distance-bounded network congestion game
with default action and shared delay (\dbncgdas{})}. We aim to find conditions
on $\fd$ under which the properties hold \emph{universally} (i.e., for all
network configurations of \dbncgdas{}). \Cref{tab:property} summarizes the
results.

\begin{table}[tb]
    \setlength{\tabcolsep}{2pt}
    \def\arraystretch{1.2}
    \footnotesize
    \centering
    \begin{tabular}{>{\centering\arraybackslash}m{0.26\linewidth} m{0.35\linewidth} m{0.35\linewidth}}
    \toprule
    & \dbncgdas{} (\cref{sec:trend-dbncgpa}) & \gam{} (\cref{sec:gm,sec:trend-gm})\\
    \midrule
    Resource layout & On a directed graph & On parallel horizontal lines \\
    Strategy space & Paths from $s$ to $t$ & Piecewise-constant functions \\
    Delay (payoff) & Non-negative non-decreasing & $\frg$ positive, $\frm$ negative \\
    \midrule
    \pp{} & $\fd$ is a constant function & $\frg, \frm$ are constant functions\\
    \bwr{} & $\fd$ is a constant function & $w(x) = xr_g(x)$ attains maximum at $x=n$ \\
    \ap{} & $\fd$ is the zero function & Never \\
    \bfr{} & $\fd$ is the zero function & Never \\
    \bottomrule
    \end{tabular}
    \vskip.5em
    \caption{Necessary and sufficient conditions on the delay or payoff
        functions such that the \wltrend{} properties hold universally.
        \label{tab:property}
    }
    \vskip-2em
\end{table}

\begin{restatable}{theorem}{dbpp}
\label{thm:db-pp}
    \dbncgdas{} is universally \pp{} if and only if $\fd$ is a constant
    function.
\end{restatable}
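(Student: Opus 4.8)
I would prove the two directions of the equivalence separately: the ``if'' direction is short, and the ``only if'' direction, which I would establish in contrapositive form, is where the work lies.

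\textbf{The ``if'' direction.} Suppose $\fd\equiv c$ with $c\ge0$. Then for every profile $\vs$ and player $i$ we have $c_i(\vs)=\sum_{e\in E_i}d_e(x_e)=c\,|E_i|$, which depends only on player $i$'s own path. Hence the best responses of player $i$ are exactly the strategies of $\strategyLevel{b}$ with the fewest edges, and writing $\mu(b)\defeq\min\{|E(p)|:p\in\strategyLevel{b}\}$, every PNE at capability $b$ has welfare $W(\vs)=-cn\,\mu(b)$, so $\bestw{b}=\worstw{b}=-cn\,\mu(b)$ (and PNEs exist --- e.g.\ all players on one fewest-edge path). By \cref{prop:monostrategy} the strategy spaces are nested, so $\mu$ is non-increasing in $b$, and therefore $\bestw{b}=-cn\,\mu(b)\le -cn\,\mu(b+1)=\worstw{b+1}$ for all $1\le b<\bmax$. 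Thus every \dbncgdas{} instance is \pp{}.

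\textbf{The ``only if'' direction: a Braess-type instance.} Suppose $\fd$ is non-constant on positive loads. Let $m$ be the largest load with $d(m)=d(1)$ (finite, since $d$ is non-decreasing and eventually exceeds $d(1)$), put $n=2m$, and set $\alpha\defeq d(1)=d(m)$ and $\beta\defeq d(n)$, so $\beta\ge d(m+1)>\alpha$; I treat the case $\alpha=0$ at the end. I would build a \dbncgdas{} instance with three $s$--$t$ routes, obtained from the classical Braess gadget padded with weight-$0$ default edges: a route $R_{\uparrow}$ consisting of a $c$-edge prefix followed by a private chain of $\ell-c$ weight-$0$ edges to $t$; symmetrically a route $R_{\downarrow}$ with a private chain of $\ell-c$ weight-$0$ edges leading into a $c$-edge suffix; and a route $R_{\mathrm{mid}}$ made of that $c$-edge prefix, a single weight-$1$ ``Braess'' edge, and that $c$-edge suffix, where the prefix is shared with $R_{\uparrow}$ and the suffix with $R_{\downarrow}$. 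Only the first edge out of $s$ on $R_{\uparrow}$ and on $R_{\downarrow}$, and the Braess edge, carry weight $1$; every other edge is a weight-$0$ default edge. Hence $R_{\uparrow},R_{\downarrow}$ have weighted length $1$ and $R_{\mathrm{mid}}$ has weighted length $2$, so $\strategyLevel{1}=\{R_{\uparrow},R_{\downarrow}\}$, $\strategyLevel{2}=\{R_{\uparrow},R_{\downarrow},R_{\mathrm{mid}}\}$, $\bmax=2$, while the edge counts are $\ell,\ell,2c+1$ respectively.

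\textbf{Analysis, and the main obstacle.} $R_{\uparrow}$ and $R_{\downarrow}$ are edge-disjoint, so at $b=1$ the balanced profile ($m$ players each) puts load $m$ on every used edge, with total delay $n\ell\,d(m)=n\ell\alpha$; since $d(i)\ge\alpha$ for every load $i\ge1$, this is the minimum total delay over \emph{all} profiles, and the balanced profile is a PNE, so $\bestw{1}=-n\ell\alpha$. At $b=2$ consider the profile in which all $n$ players take $R_{\mathrm{mid}}$: then every edge of $R_{\mathrm{mid}}$ has load $n$, so the total delay is $n(2c+1)\beta$. The crux --- and the part I expect to require real care --- is to verify that this over-congested profile is a genuine PNE: a deviation to $R_{\uparrow}$ (or $R_{\downarrow}$) leaves the $c$ shared edges at load $n$ but moves the deviator onto the private weight-$0$ chain at load $1$, changing its cost from $(2c+1)\beta$ to $c\beta+(\ell-c)\alpha$, so no deviation is profitable iff $(\ell-c)\alpha\ge(c+1)\beta$. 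Combined with the requirement that this profile be strictly worse than the level-$1$ optimum, i.e.\ $n(2c+1)\beta>n\ell\alpha$, we need an integer $\ell$ in the half-open interval $\big[\,c+(c{+}1)\beta/\alpha,\ (2c{+}1)\beta/\alpha\,\big)$, whose length $c(\beta-\alpha)/\alpha$ is positive and is $\ge1$ once $c\ge\alpha/(\beta-\alpha)$; fixing such $c$ and then $\ell$ gives $\bestw{1}=-n\ell\alpha>-n(2c+1)\beta\ge\worstw{2}$, so the instance is not \pp{}, contradicting universal \pp{}-ness. When $\alpha=d(1)=0$ the deviation bound above collapses (the private detour becomes free, so the congested profile is unstable), and I would instead use a gadget exploiting the first jump $d(\lambda)=0<d(\lambda{+}1)$, in which the bad profile is a genuinely mixed equilibrium that loads every route available at the higher capability to at least $\lambda+1$ on some edge (so that there is no cheap escape); arranging this while still allowing congestion-free spreading at the lower capability --- recalling that the strategy spaces are nested --- is the delicate point of that case.
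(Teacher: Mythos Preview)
Your ``if'' direction is correct and matches the paper's argument. Your $\alpha>0$ construction is also sound and close in spirit to the paper's: both build a Braess-type network in which the two length-$1$ routes are edge-disjoint and the length-$2$ route splices a prefix of one onto a suffix of the other, and both verify that the all-on-shortcut profile is a PNE at $b=2$ whose welfare is strictly below the balanced level-$1$ optimum. The paper uses $v+1$ players and a somewhat different gadget and parameter calculation, but the mechanism is the same.

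The genuine gap is the case $\alpha=d(1)=0$, which you explicitly leave undone. Your Braess gadget collapses here for exactly the reason you identify: the private detour on $R_\uparrow$ (or $R_\downarrow$) is free, so any congested profile at $b=2$ is destabilized by deviation. Your sketched fix is not a construction, and ``mixed equilibrium'' is out of place in a pure-strategy setting; more importantly, the obstacle you name --- that the zero-cost level-$1$ routes remain available at level $2$ --- is precisely what has to be overcome, and you have not shown how. The paper does this with an explicit gadget: with $n=2v$ players (where $d(v)=0<d(v+1)$), there are two edge-disjoint length-$1$ routes and two length-$2$ routes, the latter both traversing a common $N_1$-edge bottleneck segment. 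At $b=1$ the players split $v$/$v$ across the disjoint routes for zero total delay. At $b=2$ the bad PNE splits the players $v$/$v$ across the two length-$2$ routes, loading the shared bottleneck to $2v$; the only bottleneck-free deviation is the long length-$1$ route, but that now overlaps the \emph{other} group's length-$2$ path on $N_2+1$ edges at load $v+1$, and one chooses $N_1,N_2$ with $(N_2+1)\,d(v+1)>N_1\,d(2v)$ to make that deviation unprofitable. Without this or an equivalent concrete construction, your ``only if'' direction is incomplete.
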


\begin{proof}
    If $\fd$ is a constant function, the total delay achieved by a strategy is
    not affected by the load condition of each edge (thus not affected by other
    players' strategies). So each player's strategy in any PNE is the one in
    $\strategyLevel{b}$ that minimizes the total delay under the game layout.
    Denote this minimum delay as $\delta(b)$. For any $b \ge 1$, we have
    $\strategyLevel{b} \subseteq \strategyLevel{b+1}$, so $\delta(b) \geq
    \delta(b+1)$. And for any $\vs \in \equil(b)$, $W(\vs) = - n \delta(b)$.
    Hence $\bestw{b} \leq \worstw{b+1}$.

    If $\fd$ is not a constant function, we show that there exists an instance
    of \dbncgdas{} with delay function $\fd$ that is not \pp{}. Define $v =
    \min\condSet{x}{d(x) \neq d(x+1)}$. It follows that $d(v') = d(v)$ for all
    $v'\leq v$. We consider the cases $d(v) = 0$ and $d(v) > 0$ separately.

    \vspace{1em}

    \noindent\textbf{Case 1: $d(v) > 0$}\hspace{1em}
    Denote $\rho = \frac{d(v+1)}{d(v)}$. Since $\fd$ is non-decreasing, $\rho >
    1$. We construct a game with the network layout in \cref{fig:da-pp-pos}
    with $n=v+1$ players. We will show that $\bestw{1} > \worstw{2}$.

    First, it is easy to see that the PNEs when $b=1$ are $a$ players take the
    upper path and $v+1-a$ players take the lower path, where $1\leq a\leq v$.
    All PNEs achieves a social welfare of $W_1 = - (v+1) (N_1+N_2+3)d(v)$.

    We set $N_1 = \lfloor \frac{1}{\rho - 1}\rfloor$ and $N_2 = \lfloor (N_1+2)
    \rho\rfloor - 1$. When $b=2$, one PNE is that all players choose the path
    from upper left to lower right using the crossing edge in the middle due to
    our choice of $N_1$ and $N_2$. Its social welfare $W_2 = - (v+1) (2N_1 + 3)
    d(v+1)$. One can check that $W_1 > W_2$, hence $\bestw{1} > \worstw{2}$.
    \Cref{append:db-pp} presents more details.

    \vspace{1em}

    \noindent\textbf{Case 2: $d(v) = 0$}\hspace{1em}
    We construct a game with the network layout in \cref{fig:da-pp-zero} where
    there are $2v$ players. With $b=1$, half of the players choose
    the upper path and the others choose the lower path, which has a social
    welfare $W_1=0$. With $b=2$, a PNE is:
    \begin{enuminline}
        \item $v$ players take the path $(s,\, N_1 \text{ edges},\, \text{lower
            right } N_2 \text{ edges},\, t)$
        \item the other $v$ players take the path $(s,\, \text{lower left } N_2
            \text{ edges},\, N_1 \text{ edges},\, t)$
    \end{enuminline}.
    We choose $N_1$ and $N_2$ to be positive integers that satisfy $\frac{N_2+1}
    {N_1} > \frac{d(2v)}{d(v+1)}$. The social welfare
    $W_2=-2vN_1d(2v) < W_1$.
    Hence $\bestw{1} > \worstw{2}$.
\end{proof}

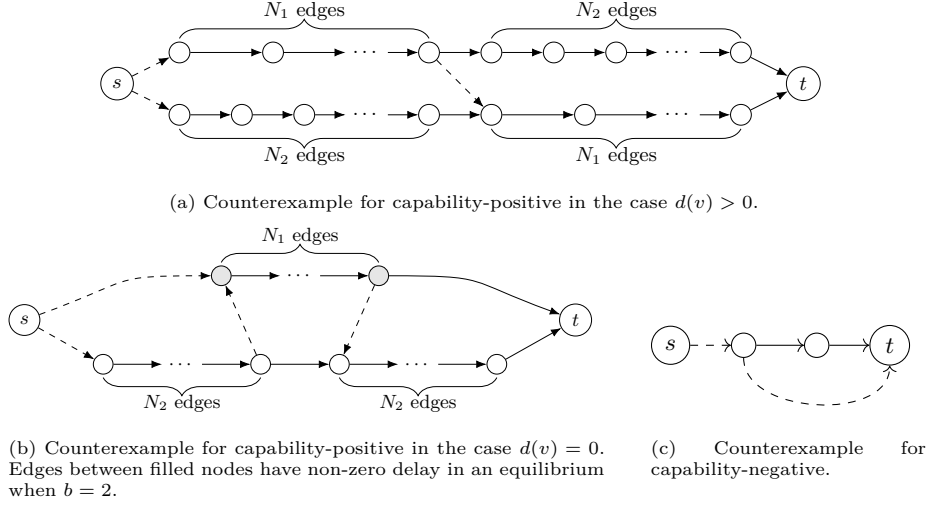
\begin{figure}[tb]
    \centering
    \begin{subfigure}[t]{\textwidth}
        \centering
        \resizebox{0.8\textwidth}{!}{%
            \begin{tikzpicture}[
    node distance={3em},
    nodetext/.style = {minimum size=2em, font=\small},
    main/.style = {draw, nodetext, circle, minimum size=1em},
    edgelabel/.style = {midway, above, sloped, font=\tiny},
    crossedge/.style = {thin, dashed},
    dot/.style = {minimum width=2em, minimum height=2em}
]

    \tikzset{>=Latex}
    
    \node[main] (d1) {};
    \node[main] (d2) [right of=d1] {};
    \node[main] (d3) [right of=d2] {};
    \node[dot] (d4) [right of=d3] {$\cdots$};
    \node[main] (d5) [right of=d4] {};
    
    \node[main] (u1) [above of=d1] {};
    \node[dot] (ud1) [above of=d2] {};
    \node[dot] (ud2) [above of=d3] {};
    \node[main] (u2) at ($(ud1)!0.5!(ud2)$) {};
    \node[dot] (u3) [above of=d4] {$\cdots$};
    \node[main] (u4) [above of=d5] {};
    
    \foreach \i in {5,...,9} {
        \pgfmathtruncatemacro{\j}{\i - 1}
        \ifthenelse{\i = 8}
        {\node[dot] (u\i) [right of=u\j] {$\cdots$}}
        {\node[main] (u\i) [right of=u\j] {}};
    }
    
    \node[main] (d6) [below of=u5] {};
    \node[dot] (dd1) [below of=u6] {};
    \node[dot] (dd2) [below of=u7] {};
    \node[main] (d7) at ($(dd1)!0.5!(dd2)$) {};
    \node[dot] (d8) [below of=u8] {$\cdots$};
    \node[main] (d9) [below of=u9] {};
    
    \foreach \i in {1,...,8} {
        \pgfmathtruncatemacro{\j}{\i + 1}
        \draw[->] (d\i) to (d\j);
        \draw[->] (u\i) to (u\j);
    }
    
    \draw [decorate,decoration={brace,amplitude=3mm,raise=.5mm}] (u1.north) -- (u4.north);
    \node[dot] (ut1) at ([yshift=2em]$(u1)!0.5!(u4)$) {$N_1$ edges};
    \draw [decorate,decoration={brace,amplitude=3mm,raise=.5mm}] (u5.north) -- (u9.north);
    \node[dot] (ut1) at ([yshift=2em]$(u5)!0.5!(u9)$) {$N_2$ edges};
    \draw [decorate,decoration={brace,mirror,amplitude=3mm,raise=.5mm}] (d1.south) -- (d5.south);
    \node[dot] (dt1) at ([yshift=-2em]$(d1)!0.5!(d5)$) {$N_2$ edges};
    \draw [decorate,decoration={brace,mirror,amplitude=3mm,raise=.5mm}] (d6.south) -- (d9.south);
    \node[dot] (dt1) at ([yshift=-2em]$(d6)!0.5!(d9)$) {$N_1$ edges};
    
    \node[main] (s) at ([xshift=-3em]$(u1)!0.5!(d1)$) {$s$};
    \node[main] (t) at ([xshift=3em]$(u9)!0.5!(d9)$) {$t$};
    
    \foreach \i/\j in {s/u1, s/d1, u4/d6}
        \draw[->, crossedge] (\i) to (\j);
    \foreach \i/\j in {u9/t, d9/t}
        \draw[->] (\i) to (\j);

\end{tikzpicture}
        }
        \caption{Counterexample for \pp{} in the case $d(v) > 0$.
        \label{fig:da-pp-pos}}
    \end{subfigure}
    \begin{subfigure}[t]{0.65\textwidth}
        \centering
        \resizebox{\textwidth}{!}{%
            \begin{tikzpicture}[
    node distance={4em},
    nodetext/.style = {minimum size=2em, font=\small},
    main/.style = {draw, nodetext, circle, minimum size=1em},
    edgelabel/.style = {midway, above, sloped, font=\tiny},
    crossedge/.style = {thin, dashed},
    dot/.style = {minimum width=2em, minimum height=2em}
]

    \definecolor{marknode}{rgb}{0.9,0.9,0.9}

    \tikzset{>=Latex}

    \node[main] (d1) {};
    \foreach \i in {2,...,6} {
        \pgfmathtruncatemacro{\j}{\i - 1}
        \ifthenelse{\i = 2 \OR \i = 5}
        {\node[dot] (d\i) [right of=d\j] {$\cdots$}}
        {\node[main] (d\i) [right of=d\j] {}};
        \draw[->] (d\j) -- (d\i);
    }

    \foreach \i in {1,4} {
        \pgfmathtruncatemacro{\j}{\i + 2}
        \draw[decorate, decoration={brace,mirror,amplitude=3mm,raise=.5mm}]
            (d\i.south) -- (d\j.south);
        \node[dot] at ([yshift=-2em]$(d\i)!0.5!(d\j)$) {$N_2$ edges};
    }

    \node[dot] (u2) at ([yshift=4.5em]$(d3)!0.5!(d4)$) {$\cdots$};
    \node[main, fill=marknode] (u1) [left of=u2] {};
    \node[main, fill=marknode] (u3) [right of=u2] {};
    \draw[->] (u1) -- (u2);
    \draw[->] (u2) -- (u3);
    \draw[->, crossedge] (u3) -- (d4);
    \draw[->, crossedge] (d3) -- (u1);
    \draw[decorate, decoration={brace,amplitude=3mm,raise=.5mm}]
        (u1.north) -- (u3.north);
    \node[dot] at ([yshift=2em]$(u1)!0.5!(u3)$) {$N_1$ edges};

    \node[main] (s) at ([xshift=-7em]$(u1)!0.5!(d1)$) {$s$};
    \node[main] (t) at ([xshift=7em]$(u3)!0.5!(d6)$) {$t$};

    \draw[->, crossedge] (s) -- (d1);
    \draw[->, crossedge, out=20, in=-180, looseness=1.5] (s) to (u1);
    \draw[->] (d6) -- (t);
    \draw[->, out=0, in=160, looseness=1.5] (u3) to (t);
\end{tikzpicture}

        }
        \caption{Counterexample for \pp{} in the case $d(v) = 0$. Edges between filled nodes have non-zero delay in an equilibrium when $b=2$.\label{fig:da-pp-zero}}
    \end{subfigure}\hfill
    \begin{subfigure}[t]{0.3\textwidth}
        \centering
        \resizebox{\textwidth}{!}{%
        \begin{tikzpicture}[
    node distance={3em},
    nodetext/.style = {minimum size=2em, font=\small},
    main/.style = {draw, nodetext, circle, minimum size=1em},
    edgelabel/.style = {midway, above, sloped, font=\tiny},
    crossedge/.style = {thin, dashed},
    dot/.style = {minimum width=2em, minimum height=2em}
]

    \node[main] (s) {$s$};
    \node[main] (1) [right of=s] {};
    \node[main] (2) [right of=1] {};
    \node[main] (t) [right of=2] {$t$};
    
    \draw[->] (1) to (2);
    \draw[->] (2) to (t);
    \draw[->, crossedge] (s) to (1);
    \draw[->, crossedge, out=-90, in=-90] (1.south) to (t.south);
    
\end{tikzpicture}
        }
        \caption{Counterexample for \ap{}.\label{fig:da-ap}}
    \end{subfigure}
    \caption{
        Counterexamples when $\fd$ does not meet the conditions. Dashed arrows
        denote unit-length edges and solid arrows denote zero-length edges
        (default action). Every edge shares the same delay function $\fd$.
    }
    \vskip-1em
\end{figure}

\paragraph{Remark}
The proof for the sufficient condition also holds when different edges have
different delay functions. So the following statement is also true: \dbncgda{}
is universally \pp{} if all edges have constant delay functions.

\begin{theorem}
    \label{thm:db-bwr}
    \dbncgdas{} is universally \bwr{} if and only if $\fd$ is a constant
    function.
\end{theorem}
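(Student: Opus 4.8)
The plan is to treat the two directions of the biconditional separately. For the ``if'' direction I would simply combine \cref{thm:db-pp} with the fact that \bwr{} is implied by \pp{}: a constant $\fd$ makes \dbncgdas{} universally \pp{} by \cref{thm:db-pp}, and since every \dbncgdas{} instance is a congestion game each $\equil(k)$ is nonempty, so $\worstw{k}\le\bestw{k}$ for $1\le k\le\bmax$; feeding the \pp{} inequalities $\bestw{k}\le\worstw{k+1}$ into the chain $\bestw{b}\le\worstw{b+1}\le\bestw{b+1}\le\worstw{b+2}\le\cdots\le\worstw{\bmax}$ yields $\bestw{b}\le\worstw{\bmax}$ for every $b<\bmax$, which is exactly \bwr{}. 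So the real content is the ``only if'' direction: a non-constant $\fd$ must break even the weaker property \bwr{}.

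For necessity I would reuse the two counterexample families constructed in the proof of \cref{thm:db-pp}. Set $v=\min\condSet{x}{d(x)\neq d(x+1)}$; non-constancy and monotonicity give $d(v+1)>d(v)$. When $d(v)>0$, the layout of \cref{fig:da-pp-pos} already provides a PNE at capability level $2$ whose welfare $W_2$ is strictly below $\bestw{1}=W_1$; moreover one checks that this graph has $\bmax=2$, since its only unit-length edges are the two edges leaving $s$ and the single crossing edge, so no simple $s$--$t$ path contains three unit-length edges. Hence $\worstw{\bmax}=\worstw{2}\le W_2<\bestw{1}$, and the instance is not \bwr{}, with no work beyond the analysis already in the proof of \cref{thm:db-pp}.

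The case $d(v)=0$ is the one that needs extra care, because the layout of \cref{fig:da-pp-zero} admits a simple $s$--$t$ path through both of its crossing edges, so there $\bmax=3$ rather than $2$. Here I would show that the explicit crossing profile from the proof of \cref{thm:db-pp}, with welfare $W_2=-2vN_1d(2v)<0=\bestw{1}$, remains a PNE at capability $\bmax$, which gives $\worstw{\bmax}\le W_2<\bestw{1}$. The idea is that the only strategies unlocked beyond level $2$ are the longer paths that thread through the additional crossing edge; since all delays are non-negative and each such path still runs along the same overloaded ``$N_1$''-segment (load $2v$), it is dominated, as a unilateral deviation from the crossing profile, by one of the shorter competing paths, and those were already excluded by the choice $\tfrac{N_2+1}{N_1}>\tfrac{d(2v)}{d(v+1)}$. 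I expect this verification to be the main obstacle: one has to rule out \emph{every} newly available path as a profitable deviation, uniformly across all capability levels up to $\bmax$. The cleanest escape, which I would attempt first, is to redesign the $d(v)=0$ gadget so that its graph also satisfies $\bmax=2$ (e.g.\ by routing the crossing edges so that no simple path can use two of them); then $\bestw{1}>\worstw{\bmax}$ follows immediately from the level-$2$ analysis already performed for \cref{thm:db-pp}.
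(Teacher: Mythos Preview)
Your proposal is correct and is exactly the paper's approach; the paper's own proof is two sentences, appealing to \cref{thm:db-pp} for sufficiency and stating that the same constructions serve as counterexamples for necessity, without singling out the $d(v)=0$ case.

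You have in fact been more careful than the paper. Your observation that the \cref{fig:da-pp-zero} gadget has $\bmax=3$ is accurate, and the paper glosses over this. But the verification you anticipate as the ``main obstacle'' is immediate: at $b=3$ the only new simple $s$--$t$ path is $s\to d_1\to\cdots\to d_3\to u_1\to\cdots\to u_3\to d_4\to\cdots\to d_6\to t$, and for a player currently on either crossing path this deviation keeps the $N_1$ congested edges (load $2v$, delay $N_1 d(2v)$) while adding $N_2+2$ edges whose load rises to $v+1$, so the deviator's delay becomes $N_1 d(2v)+(N_2+2)d(v+1)>N_1 d(2v)$. Hence the level-$2$ crossing profile is still a PNE at $\bmax=3$, giving $\worstw{\bmax}\le W_2<0=\bestw{1}$. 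There is no need to redesign the gadget; the single extra path is dominated by the path the deviator was already on, not merely by one of the other excluded alternatives.
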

\begin{proof}
    The ``if'' part follows from \cref{thm:db-pp} since a \pp{} game is also
    \bwr{}. The constructed games in the proof of \cref{thm:db-pp} also serve as
    the counterexamples to prove the ``only if'' part.
\end{proof}

\begin{theorem}
    \label{thm:db-ap}
    \dbncgdas{} is universally \ap{} if and only if $\fd$ is the zero function.
\end{theorem}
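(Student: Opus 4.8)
The plan is to handle the two directions separately; the ``if'' direction is immediate and the ``only if'' direction is where the work lies. For ``if'', suppose $\fd$ is the zero function. Then in any \dbncgdas{} instance and at any capability level $b$, every edge has zero delay, so every player has delay $0$ under every strategy profile, whence $W(\vs)=0$ for all profiles and in particular $\bestw{b}=\worstw{b}=0$ for all $b$ (equilibria exist since the game is a congestion game, and $\strategyLevel{b}\neq\emptyset$ for $b\ge1$ because from $s$ one takes a unit-length edge and then follows default edges to $t$). Hence $\bestw{b+1}=0\le 0=\worstw{b}$, so every instance is \ap{}.

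For ``only if'', suppose $\fd$ is not the zero function; since $\fd$ is non-negative and non-decreasing, there is some $n$ with $d(n)>0$, and I fix such an $n$. I would use the four-vertex instance of \cref{fig:da-ap} with $n$ players: vertices $s,1,2,t$, zero-length default edges $1\to2$ and $2\to t$, and unit-length edges $s\to1$ and $1\to t$. One checks directly that this is a legal \dbncgdas{} instance (each of vertices $1,2$ has exactly one zero-length default edge, all source edges and all remaining edges have unit length, and the zero-length subgraph is acyclic) and that its longest simple $s$-$t$ path has length $\bmax=2$. There are exactly two $s$-$t$ paths: $P_1=(s,1,2,t)$ of length $1$ and $P_2=(s,1,t)$ of length $2$. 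Thus $\strategyLevel{1}=\{P_1\}$, so the unique equilibrium at $b=1$ puts all $n$ players on $P_1$, loading each of its three edges with $n$ players, and $\worstw{1}=\bestw{1}=-3n\,d(n)$.

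The core of the argument is the analysis at $b=2$, where $\strategyLevel{2}=\{P_1,P_2\}$ and (by symmetry of players) a profile is determined by the number $a$ of players choosing $P_2$, the other $n-a$ choosing $P_1$. First I would show that $a=0$ is not an equilibrium: a player on $P_1$ has delay $3d(n)$, but switching to $P_2$ gives delay $d(n)+d(1)\le 2d(n)<3d(n)$ since $d(n)>0$. Hence every equilibrium at $b=2$ has $a\ge1$. For any such profile the total delay is $D(a)=n\,d(n)+a\,d(a)+2(n-a)\,d(n-a)$, and by monotonicity $d(a),d(n-a)\le d(n)$, so $D(a)\le n\,d(n)+(2n-a)\,d(n)=(3n-a)\,d(n)\le 3n\,d(n)-d(n)<3n\,d(n)$. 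Therefore $\bestw{2}=\max_{\vs\in\equil(2)}W(\vs)>-3n\,d(n)=\worstw{1}$, so this instance violates $\bestw{b+1}\le\worstw{b}$ at $b=1$, and \dbncgdas{} is not universally \ap{}.

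I expect the only real care to be in verifying the instance is legal and that $\strategyLevel{1}$ is exactly the single path $\{P_1\}$; the delay estimate at $b=2$ is then routine. Notably, in contrast to the proof of \cref{thm:db-pp}, no case split on whether $d(1)=0$ is needed here, since both the exclusion of $a=0$ and the bound $D(a)<3n\,d(n)$ follow from $d(n)>0$ alone, so a single instance works uniformly for every non-zero delay function.
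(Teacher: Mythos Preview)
Your proof is correct and uses the same four-vertex counterexample as the paper. The analysis at $b=2$ is organised a little differently: the paper sets the number of players to $v=\min\{x:d(x)\neq0\}$ so that $d(x)=0$ for all $x<v$, then exhibits one explicit PNE (players split between $P_1$ and $P_2$) and computes its welfare exactly, which forces a small case split on $v=1$ versus $v\ge2$; you instead allow any $n$ with $d(n)>0$, rule out the all-$P_1$ profile as an equilibrium, and bound the total delay of \emph{every} equilibrium via monotonicity of $\fd$. Your route avoids both the minimality choice and the case split, at the cost of yielding only an inequality rather than the exact equilibrium welfare; either argument is perfectly adequate for the theorem.
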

\begin{proof}
    If $\fd=0$, then all PNEs have welfare 0, which implies \ap.
    If $\fd$ is not the zero function, denote $v = \min \condSet{x}{d(x) \neq 0}$.
    We construct a game with the network layout shown in \cref{fig:da-ap} with
    $n=v$ players. When $b=1$, all players use the only strategy with a social
    welfare $W_1 = - 3vd(v)$. When $b=2$: if $v=1$, the player will choose both
    dashed paths and achieves $W_2 = -2d(1)$; if $v\geq 2$, the players will
    only experience delay on the first edge by splitting between the default
    path and the shortcut dashed path, which achieves a welfare $W_2 = -vd(v)$.
    In both cases, the game is not \ap{} since $\worstw{1} \le W_1 < W_2 \le
    \bestw{2}$.
\end{proof}

The same argument can also be used to prove the following result:
\begin{theorem}
    \label{thm:db-bfr}
    \dbncgdas{} is universally \bfr{} if and only if $\fd$ is the zero function.
\end{theorem}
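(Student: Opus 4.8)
The plan is to mirror the proof of \cref{thm:db-ap} almost verbatim: \bfr{} differs from \ap{} only in that the reference point for the comparison is the minimal capability level $b=1$ rather than the immediately preceding level $b$, and the counterexample constructed for \ap{} already pins down the $b=1$ behaviour, so it will serve here too.

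First I would dispatch the ``if'' direction: if $\fd$ is identically zero then the total delay of every strategy profile is $0$, so $\equil(b)$ consists of welfare-$0$ profiles at every level and $\bestw{b} = \worstw{1} = 0$; hence the inequality $\bestw{b}\le\worstw{1}$ in the definition of \bfr{} holds trivially for all $b\ge 2$.

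For the ``only if'' direction, assume $\fd$ is not the zero function and put $v = \min\condSet{x}{d(x)\neq 0}$, so $d(x)=0$ for $x<v$ and $d(v)>0$ by non-negativity. I would reuse the instance of \cref{fig:da-ap} with $n=v$ players from the proof of \cref{thm:db-ap}. Two facts from that proof are all I need: (i) at $b=1$ the strategy space $\strategyLevel{1}$ contains a single $s$-$t$ path, so $\equil(1)$ is a single profile and $\worstw{1} = W_1 = -3vd(v)$; and (ii) at $b=2$ there is a PNE whose welfare $W_2$ strictly exceeds $W_1$ (concretely $W_2 = -2d(1)$ if $v=1$, and $W_2 = -vd(v)$ if $v\ge 2$, obtained by routing players so that they incur delay only on the edge leaving $s$). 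Combining these, $\bestw{2} \ge W_2 > W_1 = \worstw{1}$, so the \bfr{} inequality already fails at $b=2$, and the instance witnesses that \dbncgdas{} with delay $\fd$ is not universally \bfr{}.

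I do not anticipate any real obstacle, since the network, the equilibrium analysis at $b=1$ and $b=2$, and the comparison $W_1 < W_2$ are already established for \cref{thm:db-ap}. The only points worth stating explicitly are that the $b=1$ equilibrium is unique (hence $\worstw{1}$ coincides with $W_1$ and the ``only if'' argument does not depend on which equilibrium we pick at level $1$), and that \bfr{} compares against level $1$ --- the very level used in the counterexample --- so nothing new needs to be checked at higher levels $b>2$.
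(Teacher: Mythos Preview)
Your proposal is correct and matches the paper's approach exactly: the paper simply states that ``the same argument can also be used'' as for \cref{thm:db-ap}, and you have spelled out that argument, correctly noting that the $b=1$ equilibrium is unique (so $\worstw{1}=W_1$) and that the $b=2$ PNE from the \ap{} proof already violates the \bfr{} inequality.
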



\section{Gold and Mines Game}
\label{sec:gm}

In this section, we introduce a particular form of \dbncgda{} called Gold and Mines Game (\gam{}). It provides a new perspective on how to define the strategy space hierarchy in congestion games. It also enables us to obtain additional characterizations of how social welfare at equilibrium varies with player capability. Intuitively, as shown in \cref{fig:aog}, a \gam{} instance consists of a few parallel horizontal lines and two types of resources: gold and mine. Resources are placed at distinct horizontal locations on the lines. A player's strategy is a piecewise-constant function to cover a subset of resources. The function is specified by a program using if-statements.

\begin{definition}
An instance of \gam{} is a tuple $G = (\setE, K, \setN, r_g, r_m, b)$ where:
\begin{itemize}
    \item $\setE$ is the set of resources. Each resource $e\in\setE$ is described by a tuple $(x_e, y_e, \alpha_e)$, where $(x_e, y_e)$ denotes the position of the resource in the $x$-$y$ plane, and $\alpha_e \in \{\textrm{gold, mine}\}$ denotes the type of the resource. Each resource has a distinct value of $x$, i.e. $x_e \neq x_{e'}$ for all $e\neq e'$.
    \item $K\in \nat$ is the number of lines the resources can reside on. All resources are located on lines $y=0$, $y=1$, ..., $y=K-1$, i.e. $\forall e, y_e \in \{0,1,\dots,K-1\}$.
    \item $\setN=\{1,\, \dots,\, n\}$ denotes the set of players.
    \item $r_g: \nat \mapsto \real^+$ is the payoff function for gold. $r_g$ is a positive function.
    \item $r_m: \nat \mapsto \real^-$ is the payoff function for mine. $r_m$ is a negative function.
    \item $b\in \nat$ is the level in the strategy space hierarchy defined by the domain-specific language $\mathcal{L}$ (defined below). The strategy space is then $\strategyLevel{b}$.
\end{itemize}
\end{definition}

The strategy $s_i$ of player $i$ is represented by a function $f_i(\cdot)$ that conforms to a domain-specific language $\mathcal{L}$ with the following grammar:
\begin{lstlisting}[xleftmargin=1em,
                   basicstyle=\ttfamily,
                   basicstyle=\footnotesize,
                   literate=
                   {->}{$\rightarrow$}{2}
                   {in}{$\in$}{2}
                   {setR}{$\real$}{2}
                   {ret}{{\bf return}}{6}
                   {if}{{\bf if}}{2}
                   {else}{{\bf else}}{4}
                   {vx}{$x$}{1}
                   {vt}{$t$}{1}
                   {vk}{$K$}{1}
                   {dot}{$\dots$}{3}
               ]
Program -> ret C; | if (vx < vt) {ret C;} else {Program}
C       -> 0 | 1 | dot | vk - 1
t       in setR
\end{lstlisting}
This DSL defines a natural strategy space hierarchy by restricting the number of if-statements in the program. A program with $b-1$ if-statements represents a piecewise-constant function with at most $b$ segments. We denote $\strategyLevel{b}$ as the level $b$ strategy space which includes functions with at most $b-1$ if-statements.

Player $i$ \emph{covers} the resources that their function $f_i$ passes: $E_i = \condSet{e}{f_i(x_e) = y_e}$. The load on each resource is the number of players that covers it: $x_e =  |\condSet{i}{e \in E_i}|$. Each player's payoff is $u_i = \sum_{e\in E_i} r_e(x_e)$, where $r_e$ is either $r_g$ or $r_m$ depending on the resource type. The social welfare is $W(\vs) = \sum_{i\in\setN} u_i$.

Note that \gam{} can be represented as a \dbncgda{} (\cref{append:repr}).
As a result, the problem of computing a PNE in a \gam{} belongs to \PLS.


\section{Impact of player capability on social welfare in \gam{}}
\label{sec:trend-gm}

We revisit the question of how player capability affects the social welfare at equilibrium in the context of \gam{}. We consider the same \wltrend{} properties as in \cref{sec:trend-dbncgpa}. The results are summarized in \cref{tab:property}.

\begin{restatable}{theorem}{gmpp}
\label{thm:gm-pp}
\gam{} is universally \pp{} if and only if both $\frg$ and $\frm$ are constant functions.
\end{restatable}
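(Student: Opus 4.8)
The plan is to prove the two implications separately; the forward implication is short, while the converse needs explicit counterexamples, roughly one per way in which the hypothesis can fail.

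\emph{Sufficiency.} Suppose $\frg \equiv c_g > 0$ and $\frm \equiv c_m < 0$ are both constant. For any strategy profile $\vs$ and any player $i$, the payoff $u_i = \sum_{e \in E_i} r_e(x_e) = c_g\,|\{e \in E_i : e \text{ is gold}\}| + c_m\,|\{e \in E_i : e \text{ is a mine}\}|$ depends only on $E_i$, hence only on player $i$'s own strategy. So in every PNE at level $b$ each player plays a strategy maximizing this quantity over $\strategyLevel{b}$; writing $u^*_b$ for that maximum, every PNE at level $b$ has welfare $n\,u^*_b$, so $\bestw{b} = \worstw{b} = n\,u^*_b$. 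Since $\strategyLevel{b} \subseteq \strategyLevel{b+1}$ we have $u^*_b \le u^*_{b+1}$ and hence $\bestw{b} \le \worstw{b+1}$, which is exactly \pp{}; since nothing about the instance was used, this holds universally. (PNEs exist because \gam{} is a congestion game.)

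\emph{Necessity.} Assume at least one of $\frg, \frm$ is non-constant; I will exhibit a \gam{} instance and a level $b$ with $\bestw{b} > \worstw{b+1}$, which falsifies \pp{}. The construction follows the template of \cref{thm:db-pp}: exploit the strict nesting $\strategyLevel{b} \subsetneq \strategyLevel{b+1}$ (which I arrange in the instance, since one extra if-statement realizes strictly more covering functions) to build a game in which, at some level $b$, every PNE is a ``spread'' configuration whose welfare I can compute exactly, whereas at level $b+1$ the extra line switch both destabilizes every such configuration and supports a new PNE of strictly smaller welfare. I handle the case of non-constant $\frg$ in detail, splitting on the direction in which $\frg$ first deviates from constancy at $v = \min\{x \ge 1 : r_g(x) \ne r_g(x+1)\}$; the case of non-constant $\frm$ is treated by analogous gadgets in which players cannot avoid mine resources at the relevant capability level. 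In the sub-case $r_g(v) > r_g(v+1)$, the instance has $\Theta(1)$ lines and $\Theta(v)$ players, with gold blocks of sizes $N_1, N_2 \in \nat$ tuned to $r_g(v)/r_g(v+1)$: at level $b$ a player can cover only few resources, so every PNE distributes players over resources at load $\le v$ (value $r_g(v)$); at level $b+1$ each player can cover more resources than the instance can absorb at low load, so some resource block is forced to load $v+1$ in equilibrium, and by the choice of $N_1, N_2$ the devaluation $r_g(v) \to r_g(v+1)$ outweighs the gain in coverage, giving $\bestw{b} > \worstw{b+1}$. The remaining sub-cases are handled by mirror constructions that swap the roles of the spread and crowded configurations.

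\emph{Main obstacle.} The delicate step is certifying that the level-$(b+1)$ ``bad'' configuration really is a PNE. In contrast to \dbncgdas{}, where longer paths incur strictly more delay, a \gam{} player never benefits from covering fewer resources, so the naive profile in which everyone piles onto a single gold block is not an equilibrium --- a player can always break off onto uncrowded gold. The construction must therefore interleave the $x$-coordinates of the resources across the lines, and scale the instance, so that in the level-$(b+1)$ configuration every resource a deviator could reach is already crowded, while the total resource supply remains small enough for welfare to strictly drop. Choosing the integers $N_1, N_2$ so that this configuration is at once an equilibrium and strictly welfare-inferior to the best level-$b$ equilibrium is the counterpart of the computations deferred to the appendix for \cref{thm:db-pp}, and is carried out there.
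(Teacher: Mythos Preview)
Your plan matches the paper's: sufficiency by the same ``constant payoffs decouple players'' argument, necessity by the same case-split (which function is non-constant, and in which direction at the first point $v$ of non-constancy), each sub-case handled by an explicit instance with $v+1$ players where a PNE at $b=1$ beats a PNE at $b=2$. You also correctly isolate the real work---certifying that the level-$2$ ``bad'' profile is a PNE---and correctly note that this forces an interleaved layout.

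Where your sketch diverges is in the shape of the counterexamples. The ``two blocks $N_1,N_2$'' picture is the \dbncgdas{} template, and the paper's \gam{} gadgets do not reduce to it. For $r_g(v)>r_g(v+1)$ the paper builds a block $B$ (and its point-reflection $B'$) by a greedy rule that places each successive gold on line $0$ or line $1$ so as to keep the running quantity $D(t)=\rho\cdot(\text{gold on }y{=}0\text{ so far})-(\text{gold on }y{=}1\text{ so far})$ positive on every prefix \emph{and} every suffix of $B$; these inequalities are exactly the best-response checks that make ``all players take $y=0$ on $B$ and $y=1$ on $B'$'' a PNE at $b=2$, since a one-switch deviation amounts to truncating $B$ at some prefix or suffix. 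For $r_g(v)<r_g(v+1)$ the layout is instead a simple three-block pattern ($N$ gold on $y=0$, then $N$ on $y=1$, then $N{+}1$ on $y=0$), where at $b=2$ players get stuck on the first two blocks. The two mine sub-cases reuse these same two layouts with roles exchanged. So the four sub-cases pair into two genuinely different gadgets rather than mirrors of a single one, and the harder gadget is not parameterized by a pair of integers---you will need the prefix/suffix construction, not just a choice of $N_1,N_2$.
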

The idea of the proof is similar to that of \cref{thm:db-pp}. The construction of the counterexamples for the ``only if'' part requires some careful design on the resource layout. \Cref{append:gm-pp} presents the proof.

\begin{theorem}
\label{thm:gm-bwr}
Define welfare function for gold as $w_g(x) \defeq x\cdot r_g(x)$. \gam{} is universally \bwr{} if and only if $w_g$ attains its maximum at $x=n$ (i.e. $\max_{x\le n} w_g(x) = w_g(n)$), where $n$ is the number of players.
\end{theorem}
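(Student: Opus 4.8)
The plan is to prove the two directions separately, after first pinning down what happens at the top capability level. Fix a \gam{} instance and write $g$ for its number of gold resources; PNEs exist at every level (the game is a congestion game), so $\bestw{b},\worstw{b}$ are well defined. I would first argue that $\worstw{\bmax}=\bestw{\bmax}=g\cdot w_g(n)$: since $r_g$ is everywhere positive and $r_m$ everywhere negative, for \emph{any} choice of the other players' strategies a best response covers every gold and no mine (adding a gold never hurts; covering a mine always hurts), and by the representation of \cref{append:repr} every such choice is realizable at level $\bmax$ when $K\ge 2$ (instances with $K=1$ have a single strategy and are vacuously \bwr{}). Hence the unique top-level PNE puts load $n$ on every gold and load $0$ on every mine, with welfare $g\cdot w_g(n)$.

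For the ``if'' direction, assuming $\max_{1\le x\le n}w_g(x)=w_g(n)$, I would take any level $b$ and any $\vs\in\equil(b)$ and rewrite welfare as a sum over resources rather than players: $W(\vs)=\sum_{\text{gold }e}w_g(x_e)+\sum_{\text{mine }e}x_e\,r_m(x_e)$. The mine terms are $\le 0$; every gold load lies in $\{0,1,\dots,n\}$, and since $w_g(0)=0<w_g(n)$ and $w_g(x_e)\le w_g(n)$ for $1\le x_e\le n$ by hypothesis, this gives $W(\vs)\le g\cdot w_g(n)=\worstw{\bmax}$. Maximizing over $\equil(b)$ yields $\bestw{b}\le\worstw{\bmax}$ for every $b$ and every instance, i.e.\ universal \bwr{}.

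For the ``only if'' direction, suppose the maximum of $w_g$ over $\{1,\dots,n\}$ is \emph{not} attained at $n$, and set $x^*\in\argmax_{1\le x\le n}w_g(x)$, so $1\le x^*<n$, $w_g(x^*)>w_g(n)$, and — from $w_g(x^*)\ge w_g(x^*+1)$ (valid as $x^*+1\le n$) together with $r_g>0$ — one gets the strict inequality $r_g(x^*+1)<r_g(x^*)$. I would build an instance with $n$ players, $K=2$ lines, and $g=n$ gold resources on line $0$ placed far apart, with a ``mine wall'' of width $W$ on line $0$ filling every gap between consecutive golds and also flanking the first and the last gold, while line $1$ carries no resource; choosing $W$ so large that the penalty of crossing one full wall exceeds $n\cdot\max_{1\le x\le n}r_g(x)$ (using $\min_{1\le x\le n}|r_m(x)|>0$) guarantees that no best response ever covers a mine. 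The gadget is designed so that cleanly covering $k$ golds costs exactly $2k+1$ segments (ride line $1$, dip to line $0$ at each chosen gold, plus the two flanking ``bypass'' segments), so at level $b=2x^*+1$ a player can cover any prescribed $x^*$ golds but cannot cover $x^*+1$ without either exceeding the budget or crossing a whole wall of mines; hence every best response covers exactly $x^*$ golds and no mine. Letting the players cover golds according to a circulant $x^*$-regular pattern (player $i$ covers golds $i,\dots,i+x^*-1$, indices mod $n$) makes every gold's load exactly $x^*$, and this is a PNE: a deviator covers at most $x^*$ golds, earns $r_g(x^*)$ on each retained gold but only $r_g(x^*+1)<r_g(x^*)$ on each newly covered one, hence cannot beat the equilibrium payoff $x^*r_g(x^*)$. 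Its welfare is $n\,w_g(x^*)$, while $\worstw{\bmax}=n\,w_g(n)$; since $w_g(x^*)>w_g(n)$ and $1\le b<\bmax$, we get $\bestw{b}\ge n\,w_g(x^*)>n\,w_g(n)=\worstw{\bmax}$, so the instance violates \bwr{}.

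The hard part is the ``only if'' gadget: I need the mine walls to simultaneously (i) cap every level-$b$ best response at exactly $x^*$ covered golds, (ii) leave the uniform-load-$x^*$ profile an equilibrium — this is precisely where $x^*$ being a maximizer of $w_g$, via $r_g(x^*+1)<r_g(x^*)$, is essential — and (iii) work for the given payoff $r_m$ no matter how small $|r_m|$ is, which is handled by scaling the wall width $W$ with $r_m$ and $r_g$. Everything else (keeping the $x$-coordinates distinct, the exact segment count for crossing a wall, and checking $b<\bmax$) is routine bookkeeping.
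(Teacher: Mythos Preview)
Your argument is correct. The ``if'' direction and the identification of the unique top-level equilibrium match the paper exactly: rewrite welfare as $\sum_e x_e r_e(x_e)$, bound gold terms by $w_g(n)$, and drop the nonpositive mine terms.

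The ``only if'' construction, however, is genuinely different from the paper's. The paper avoids mines entirely: it takes $K=n$ lines with one gold on each, so that a level-$b$ strategy (being piecewise constant with at most $b$ pieces) can hit at most $b$ golds; at $b=x^*$ the circulant load-$x^*$ profile is a PNE by the same $r_g(x^*+1)<r_g(x^*)$ inequality you use, and its welfare $n\,w_g(x^*)$ beats $n\,w_g(n)$. Your route instead keeps $K=2$ and enforces the ``at most $x^*$ golds'' cap via mine walls whose width is calibrated to $r_m$. Both work, but they buy different things. The paper's construction is shorter and sidesteps $r_m$ altogether, so no wall-width tuning or segment-count bookkeeping is needed. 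Your construction shows that two lines already suffice to break \bwr{}, which is a mildly sharper structural statement; the price is the extra care you correctly flag --- verifying that any strategy covering more than $x^*$ golds must absorb at least one full wall (hence negative payoff once $W|r_m(1)|>n\max_x r_g(x)$), and that trimming a $0$-run to its gold endpoints never costs segments, so best responses really do cover exactly $x^*$ golds and no mines.
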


\begin{proof}
We first notice that there is only one PNE when $b = \bmax$, which is all players cover all gold and no mines. This is because $r_g$ is a positive function and $r_m$ is a negative function, and since all $x_e$'s are distinct, each player can cover an arbitrary subset of the resources when $b = \bmax$. So $W_{\bmax} = nM_g r_g(n)$ where $M_g$ is the number of gold in the game.

If $\max_{x \le n} w_g(x) = w_g(n)$, we show that $W_{\bmax}$ is actually the maximum social welfare over all possible strategy profiles of the game. For any strategy profile $\vs$ of the game, the social welfare
\begin{align*}
    W(\vs) = \sum_{i\in\setN} \sum_{e\in E_i} r_e(x_e) = \sum_{e\in\setE} x_e\cdot r_e(x_e) \leq \sum_{e\in\setE_g} n\cdot r_g(n) = nM_g r_g(n) = W_{\bmax}.
\end{align*}
Therefore, the game is \bwr{}.

If $\max_{x \le n} w_g(x) > w_g(n)$, denote $n' = \argmax_{x\le n} w_g(x)$, $n' < n$. We construct a game with the corresponding $\frg$ that is not \bwr{}.
The game has $n$ players, $K=n$ lines, and each line has one gold. The only PNE in $\equil(\bmax)$ is all players cover all gold, which achieves a social welfare of $W_{\bmax} = n \cdot w_g(n)$. When $b=n'$, one PNE is each player covers $n'$ gold, with each gold covered by exactly $n'$ players. This can be achieved by letting player $i$ cover the gold on lines $\{y=(j\mod n)\}_{j=i}^{i+n'-1}$. To see why this is a PNE, notice that any player's alternative strategy only allows them to switch to gold with load larger than $n'$. For all $x> n'$, since $w_g(n')\geq w_g(x)$, $r_g(n') > r_g(x)$. So such change of strategy can only decrease the payoff of the player. The above PNE achieves a social welfare $W_{n'} = n\cdot w_g(n') > W_{\bmax}$, so the game is not \bwr{}.

\end{proof}

\begin{restatable}{theorem}{gmbfr}
\label{thm:gm-bfr}
For any payoff functions $\frg$ and $\frm$, there exists an instance of \gam{} where \bfr{} does not hold (therefore \ap{} does not hold either).
\end{restatable}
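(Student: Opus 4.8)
The plan is to exhibit, for arbitrary payoff functions $r_g$ and $r_m$, one fixed \gam{} instance in which the worst social welfare at equilibrium at capability level $1$ is strictly below the best social welfare at equilibrium at level $2$; taking $b=2$ this directly contradicts \bfr{}. The intuition is that a level-$1$ strategy is a constant function, so a player is forced to occupy an entire horizontal line; if that line carries gold together with a mine, the player cannot shed the mine's negative payoff. One extra if-statement (level $2$) lets the player step onto a second, empty line exactly at the mine's $x$-coordinate, keeping the gold but avoiding the mine --- a strict gain. Since $r_m$ is negative, this improvement is available no matter what $r_g$ and $r_m$ are.

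Concretely, I would take $K=2$ lines, a single player ($n=1$), one gold at $x=1$ on line $y=0$, one mine at $x=2$ on line $y=0$, and no resources on line $y=1$. The steps: (i) observe $\strategyLevel{1}\subsetneq\strategyLevel{2}$ --- level $1$ contains only the two constant functions while level $2$ contains non-constant ones --- so $\bmax\ge2$ and the quantifier ``$\forall b\ge2$'' in \bfr{} is non-degenerate; (ii) at level $1$ the player's best response is line $0$ (welfare $r_g(1)+r_m(1)$) or line $1$ (welfare $0$), so $\worstw{1}=\max\{\, r_g(1)+r_m(1),\ 0\,\}$; (iii) at level $2$ the strategy ``return $0$ if $x<\tfrac{3}{2}$, else return $1$'' covers the gold and misses the mine, and nothing does better with one player and one gold, so $\bestw{2}=r_g(1)$; (iv) since $r_m(1)<0$ and $r_g(1)>0$ we get $r_g(1)>\max\{\, r_g(1)+r_m(1),\ 0\,\}$, i.e. $\bestw{2}>\worstw{1}$, so this instance is not \bfr{}. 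For the parenthetical, I would note that \ap{} implies \bfr{}: from $\bestw{b+1}\le\worstw{b}\le\bestw{b}$ one telescopes down to $\bestw{b}\le\worstw{1}$ for every $b\ge2$; hence the same instance also refutes \ap{}.

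The one place a naive construction can go wrong is robustness to adversarial payoff functions: if, say, $r_g$ is huge at small loads and tiny at large loads, a level-$1$ equilibrium in which players spread thinly over many lines can actually be excellent, so simply ``forcing everyone onto the gold line'' need not make level $1$ bad. Restricting to $n=1$ sidesteps this --- there is no spreading to reason about --- and the only remaining subtleties are the syntactic ones above (a level-$1$ strategy really is a whole line; the step-off strategy lies in $\strategyLevel{2}$; $\bmax\ge2$). A multi-player variant also works: keep all gold on line $0$ and use $G$ gold pieces with $G\,r_g(n)+r_m(n)\ge0$, so that ``all players on line $0$'' is a PNE of welfare $n(G\,r_g(n)+r_m(n))<n\,G\,r_g(n)=\bestw{2}$; choosing $G$ large enough is then the only real care needed, and I do not anticipate a genuine obstacle.
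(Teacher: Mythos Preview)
Your argument is correct. With one player, one gold and one mine on the same line, the unique level-$1$ equilibrium has welfare $\max\{r_g(1)+r_m(1),\,0\}$, while at level~$2$ the player picks up the gold and skips the mine for welfare $r_g(1)$; since $r_g(1)>0$ and $r_m(1)<0$ this is a strict improvement, so \bfr{} fails, and your telescoping argument that \ap{} implies \bfr{} is fine.

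The paper proceeds differently. It first remarks in one line that a mine-based counterexample is trivial --- exactly the observation you spell out --- and then spends the body of the proof establishing something strictly stronger: for any $r_g$ one can build a counterexample using \emph{only gold}. The construction places $N$ gold on line $0$ and one more gold on line $1$ with $N>r_{\max}/r_{\min}$ (where $r_{\max}$ and $r_{\min}$ are the extrema of $r_g$ over loads $\le n$); at $b=1$ every player sits on line $0$, while at $b=2$ every player covers all $N+1$ gold. So your proof suffices for the theorem as stated, but the paper's version isolates that the failure of \bfr{} is not driven by the presence of mines --- it already arises from the combinatorics of gold alone.
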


\begin{proof}
It is trivial to construct such a game with mines. Here we show that for arbitrary $\frg$, we can actually construct a game with only gold that is not \bfr{}.

Let $r_{\min} = \min_{x\leq n} r_g(x)$ and $r_{\max} = \max_{x\leq n} r_g(x)$. We construct a game with $K=2$ lines and $N+1$ gold where $N > \frac{r_{\max}}{r_{\min}}$. In the order of increasing $x$, the first $N$ gold is on $y=0$ and the final gold is on $y=1$. When $b=1$, for an arbitrary player, denoting the payoff of choosing $y=0$ (resp. $y=1$) as $r_0$ (resp. $r_1$). Then $r_0 = \sum_{e\in\setE_0} r_g(x_e) \geq \sum_{e\in\setE_0} r_{\min} = Nr_{\min} > r_{\max} \geq r_1$, where $\setE_0$ is the set of resources on $y=0$.
So all the players will choose $y=0$ in the PNE. The social welfare is $W_1 = nNr_g(n)$.
When $b=2$, all the players will choose to cover all the gold in the PNE. So the social welfare is $W_2 = n(N+1)r_g(n) > W_1$. Therefore, the game is not \bfr{}.

\end{proof}



\section{Case study: \aog{}}

In this section, we present a special form of \gam{} called the \aog{}. We derive exact expressions of the social welfare at equilibrium with respect to the capability bound. The analysis provides insights on the factors that affect the trend of social welfare over player capability.

\begin{definition}
The \aog{} is a special form of the \gam{}, with $n=2$ players and $K=2$ lines. The layout of the resources follows an alternating ordering of gold and mines as shown in \cref{fig:aog}. Each line has $M$ mines and $M+1$ gold. The payoff functions satisfy $0 < r_g(2) < \frac{r_g(1)}{2}$ (reflecting competition when both players occupy the same gold) and $r_m(1) = r_m(2) < 0$. WLOG, we consider normalized payoff where $r_g(1) = 1, r_g(2) = \rho, 0< \rho < \frac{1}{2}, r_m(1) = r_m(2) = \mu < 0$.

\end{definition}

\begin{figure}[tb]
    \centering
    \resizebox{0.6\textwidth}{!}{%
    \begin{tikzpicture}[
    dot/.style = {minimum width=2em, minimum height=2em}
]

\definecolor{linecolor}{rgb}{0.8,0.8,0.8}
\definecolor{funccolorA}{rgb}{1.0, 0.0, 0.0} 
\definecolor{funccolorB}{rgb}{0.0, 0, 1.0} 

\draw[linecolor, ultra thin] (-1, 0) -- (8.5, 0);
\draw[linecolor, ultra thin] (-1, 0.7) -- (8.5, 0.7);

\foreach \i in {0,1,2.5} {
    \filldraw[black] (\i*2,0.7) circle (2pt) node{};
    \filldraw[black] (\i*2+0.5,0) circle (2pt) node{};
    \draw (\i*2+1,0.7) node[cross=2pt] {};
    \draw (\i*2+1.5,0) node[cross=2pt] {};
}

\node[dot] at (4.2, 0.35) {$\dots$};

\filldraw[black] (3.5*2,0.7) circle (2pt) node{};
\filldraw[black] (3.5*2+0.5,0) circle (2pt) node{};

\draw[funccolorA, dashed, thick] (-0.5, 0.7) -- (2.25, 0.7) --
            (2.25, 0.0) -- (8, 0.0);
\draw[funccolorB, dashed, thick] (-0.5, -0.0) -- (3.25, -0.0) --
            (3.25, 0.7) -- (8, 0.7);

\end{tikzpicture}
    }
    \caption{
        Resource layout for the \aog{}.
        Each dot (resp. cross) is a gold (resp. mine). The dashed lines represent a PNE when $b=2$ (with $-2+\rho < \mu < -\rho$).
        \label{fig:aog}
    }
    \vskip-.5em
\end{figure}
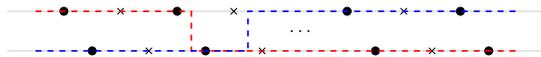

Let's consider the cases $b=1$ and $b=2$ to build some intuitive understanding.
When $b=1$, the PNE is that each player covers one line, which has social welfare $W_1 = 2M+2M\mu+2$. When $b=2$ (and $-2+\rho < \mu < -\rho$), one PNE is shown in \cref{fig:aog}, where the players avoid one mine but cover one gold together, which has social welfare $W_2 = W_1-1-\mu+2\rho$. Whether the social welfare at $b=2$ is better depends on the sign of $2\rho-\mu-1$. In fact, we have the following general result:

\begin{figure}[hbt]
    \centering
    \begin{subfigure}[]{0.32\textwidth}
    \centering
    \includegraphics[width=\textwidth]{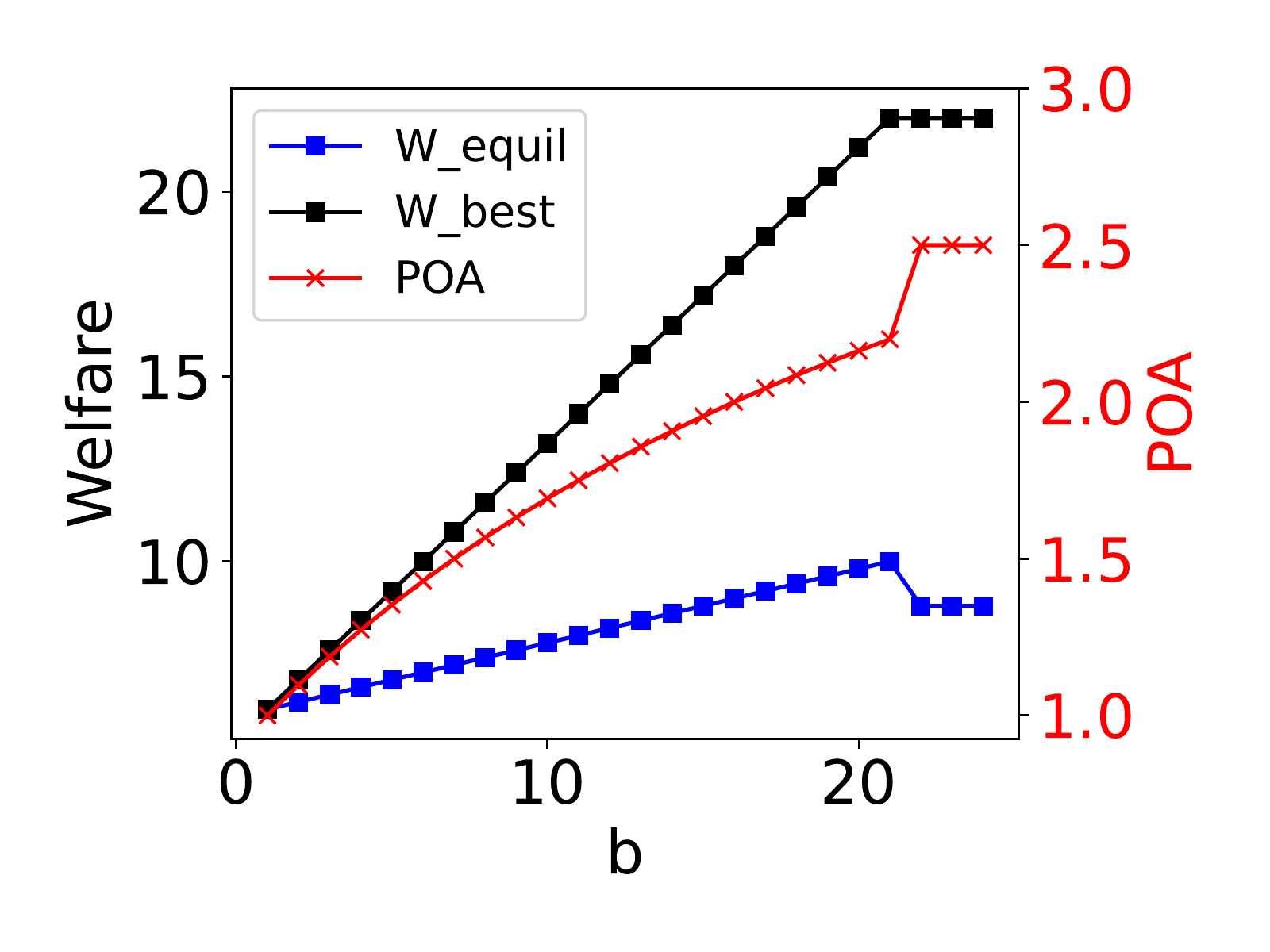}
    \caption{$\mu = -0.8$ ($\mu < 2\rho - 1$)}
    \end{subfigure}
    \begin{subfigure}[]{0.32\textwidth}
    \centering
    \includegraphics[width=\textwidth]{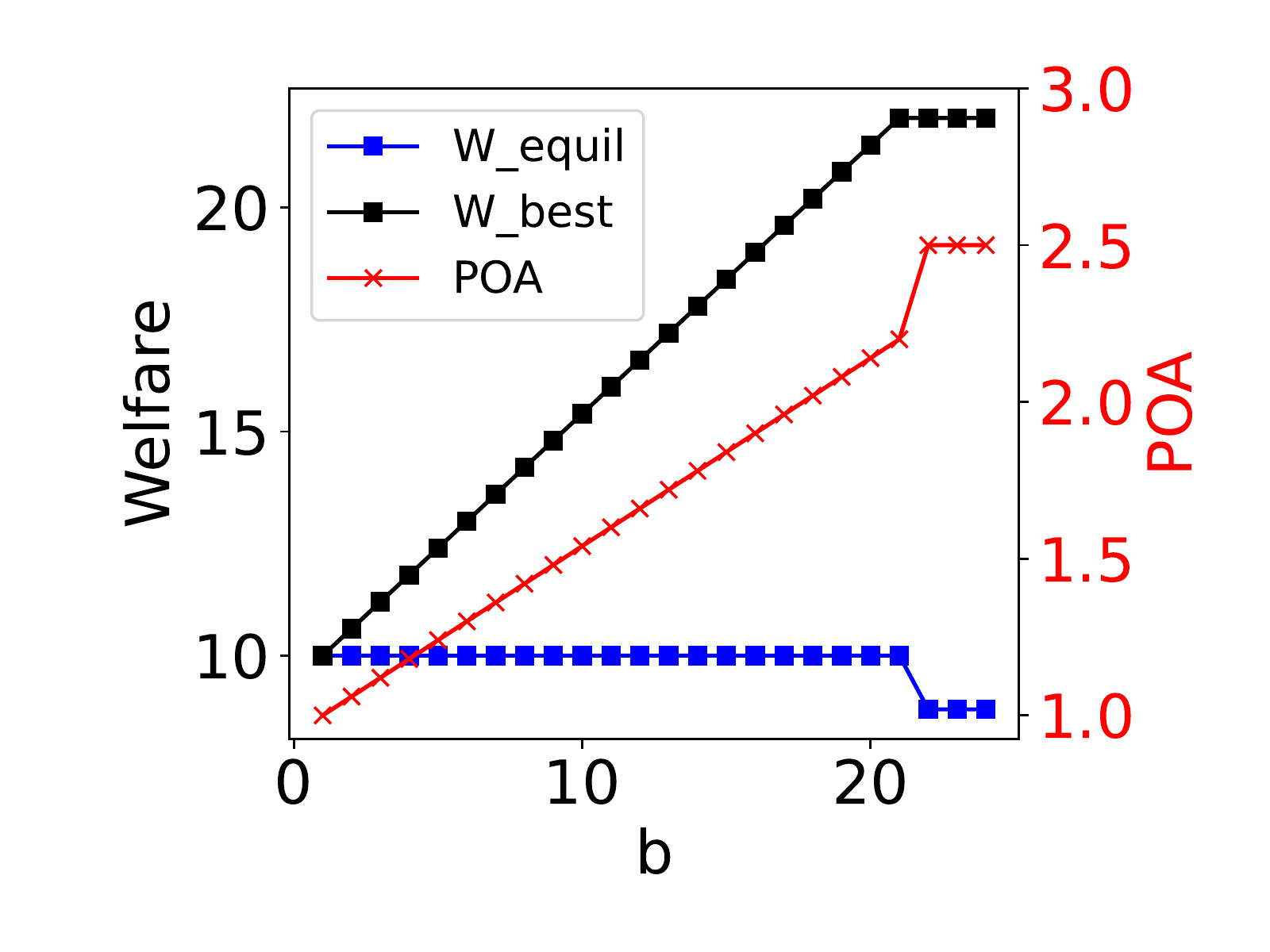}
    \caption{$\mu = -0.6$ ($\mu = 2\rho - 1$)}
    \end{subfigure}
    \begin{subfigure}[]{0.32\textwidth}
    \centering
    \includegraphics[width=\textwidth]{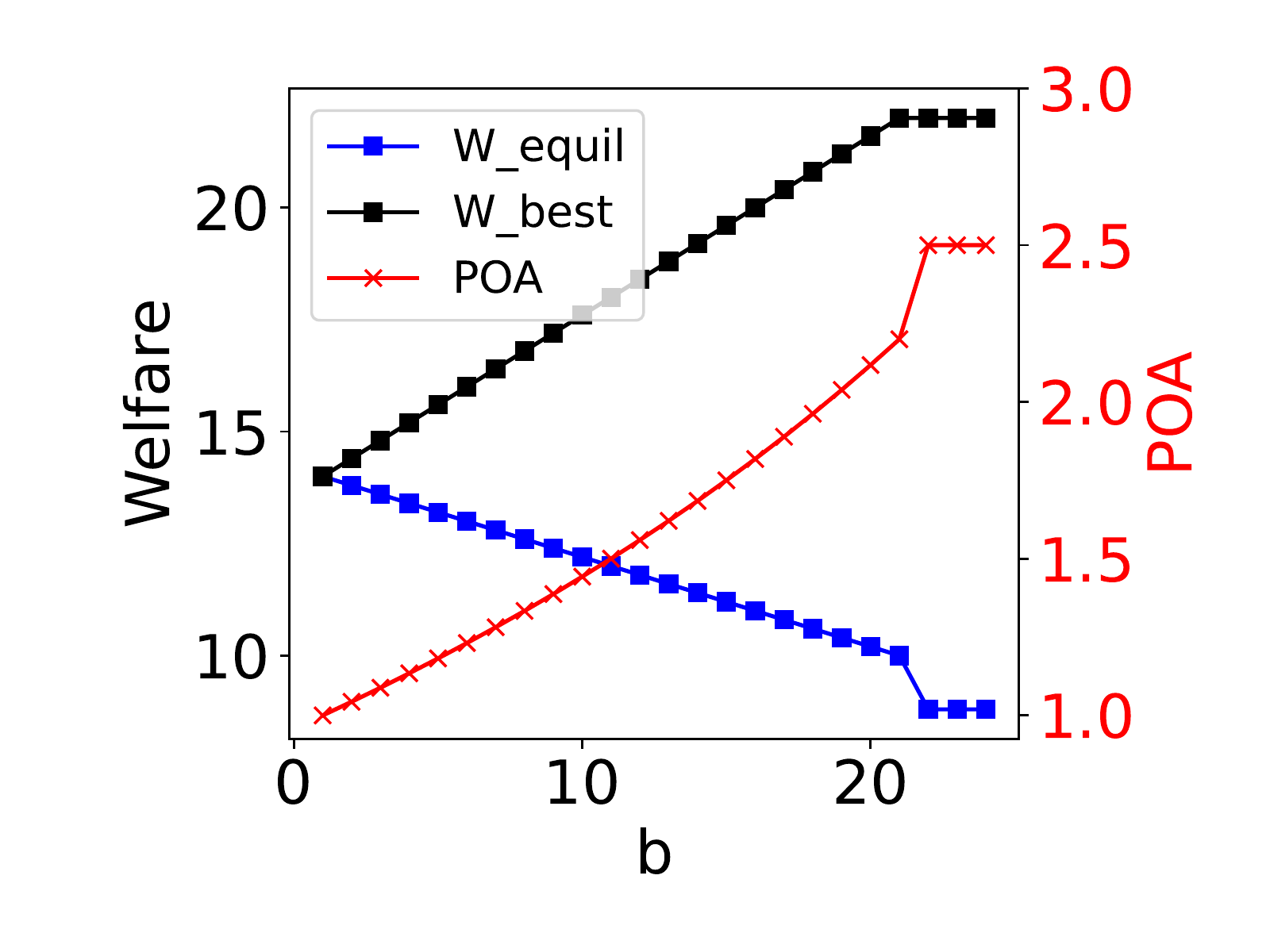}
    \caption{$\mu = -0.4$ ($\mu > 2\rho - 1$)}
    \end{subfigure}
    \caption{$W_{\equil}$, $W_{\textrm{best}}$, $\POA$ varying with $b$. $M=10, \rho =0.2$.}
    \label{fig:trend}
\end{figure}

\begin{restatable}{theorem}{aogthm}
\label{thm:aog}
If $-2+\rho < \mu < -\rho$, then for any level $b$ strategy space $\strategyLevel{b}$, all PNEs have the same social welfare \begin{equation*}
    W_{\equil}(b) =
    \begin{cases}
    (2M + 1)(1 + \mu) + 2(1-\rho) + (2\rho - \mu - 1)b & \textrm{if } b \leq 2M+1 \\
    (4M+4)\rho & \textrm{if } b \geq 2M+2
    \end{cases}.
\end{equation*}
\end{restatable}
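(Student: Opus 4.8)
The plan is to recast a player's strategy as a step function $f\colon\real\to\{0,1\}$ and to work with the finite data it induces: the $0/1$ value it takes at each of the $4M+2$ sorted resource abscissae. A strategy in $\strategyLevel{b}$ is exactly such a labelling with at most $b-1$ ``flips'' (changes of consecutive values), since a program with $b-1$ if-statements realises precisely the piecewise-constant functions with at most $b$ pieces. Group the resources into the blocks visible in \cref{fig:aog}: $G_0,N_0,G_1,N_1,\dots,N_{M-1},G_M$, where each gold block $G_k$ consists of a gold on line $1$ immediately followed in $x$ by a gold on line $0$, and each mine block $N_k$ of a mine on line $1$ followed by a mine on line $0$. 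The directional structure is the crux: to cover \emph{both} golds of a block one must make a $1\!\to\!0$ flip inside it (a ``g-flip''), and to dodge \emph{both} mines of a block one must make a $0\!\to\!1$ flip inside it (an ``m-flip''); any other placement of a flip is wasteful. Since consecutive flips alternate direction, the $1\!\to\!0$ flips (the only ones that can be g-flips) and the $0\!\to\!1$ flips (the only ones that can be m-flips) interleave, beginning with a $1\!\to\!0$ flip if the strategy starts on line $1$ and with a $0\!\to\!1$ flip if it starts on line $0$.

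First I would prove a best-response lemma: against any fixed opponent, an optimal reply with budget $c\defeq b-1$ uses all $c$ flips, every one a useful g- or m-flip (this uses $r_g>0$, so an extra g-flip always strictly helps, and $r_m<0$, so an extra m-flip always strictly helps), and consequently covers exactly one gold per gold block plus one more for each g-flip, and exactly one mine per mine block minus one for each m-flip. Hence the reply covers $M+1+p_g$ golds and $M-p_m$ mines, where $(p_g,p_m)$ is $(\lceil c/2\rceil,\lfloor c/2\rfloor)$ if it starts on line $1$ and $(\lfloor c/2\rfloor,\lceil c/2\rceil)$ if it starts on line $0$ (for $c\le 2M$ both are feasible, there being $M+1$ gold blocks and $M$ mine blocks). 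This immediately disposes of the range $b\ge2M+2$: then $c\ge2M+1$, which is exactly the number of flips needed to realise the labelling that covers every gold and no mine, and that labelling weakly dominates every other strategy pointwise, and strictly whenever the alternative misses a gold or meets a mine; so it is dominant, the unique PNE has both players using it, and $W=\sum_{\text{gold}}2\,r_g(2)=(4M+4)\rho$.

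For $b\le2M+1$, write $P_g\defeq p_g^{(1)}+p_g^{(2)}$ and $P_m\defeq p_m^{(1)}+p_m^{(2)}$, so $P_g+P_m=2c$. A short count shows that if every gold is covered then exactly $d=P_g$ golds carry load $2$ and $2M+2-P_g$ carry load $1$, while the mines contribute $\mu(2M-P_m)=\mu(2M-2c+P_g)$ to the welfare regardless of how they are shared; hence $W=2\rho P_g+(2M+2-P_g)+\mu(2M-2c+P_g)$. It therefore suffices to establish, for every PNE, that \textbf{(a)} all golds are covered and \textbf{(b)} $P_g=c$; substituting $P_g=c$ and simplifying yields exactly $W_{\equil}(b)=2M(1+\mu)+2+(2\rho-1-\mu)(b-1)$, which is the stated expression. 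For (b): if $c$ is even then $p_g^{(i)}=c/2$ is forced, so $P_g=c$. If $c$ is odd, $P_g=c$ is equivalent to the two players starting on opposite lines. ``Both start on line $0$'' is impossible given (a), since then both players sit on line $0$ throughout every block preceding either of their first m-flips, leaving the line-$1$ gold of $G_0$ uncovered. ``Both start on line $1$'' is excluded by $\mu<-\rho$: each player then covers more golds than the opponent misses, hence covers at least one gold the opponent also covers, and can switch to starting on line $0$, in effect surrendering one twice-covered gold for one extra mine-dodge, a net payoff change of $-r_g(2)-r_m(1)=-\rho-\mu>0$, contradicting equilibrium.

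The remaining work — and the main obstacle — is part (a) together with the existence of a PNE realising the formula. For (a) one argues that from a PNE in which some gold is uncovered, the player who is ``the wrong way round'' at that block can re-target one flip (relocating a g-flip to that block, or flipping which line it occupies over a stretch of singly-covered blocks) so as to additionally cover that gold while picking up no new mine, a strict improvement since $r_g>0$; carrying this out is delicate precisely because the flip budget is tight and the g/m directionality constrains how flips may be moved. It is here, and in checking that the natural candidate equilibria are genuine, that the lower bound $\mu>-2+\rho$ enters: one must verify that a line-$1$ player never profits by converting its extremal g-flip into an m-flip, the tightest instance being the g-flip at $G_0$ (or $G_M$), where the payoff change works out to $(-2+\rho)-\mu$, which is nonpositive exactly when $\mu\ge-2+\rho$. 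Finally I would exhibit an explicit PNE generalising the $b=2$ picture of \cref{fig:aog}: one player starts on line $1$, the other on line $0$, their $p_g$ g-flips placed in matching blocks and their trajectories otherwise complementary, so that $d=P_g=c$, every gold is covered, and its welfare equals $W_{\equil}(b)$ by the computation above; verifying that neither player has a profitable re-targeting or start-line change again uses $\mu<-\rho$ and $\mu>-2+\rho$.
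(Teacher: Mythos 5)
Your accounting is sound — the identity $W = 2\rho P_g + (2M+2-P_g) + \mu(2M-2c+P_g)$, its agreement with the stated formula once $P_g=c$, and the dominance argument for $b\ge 2M+2$ all check out — but the two pillars on which everything rests are asserted rather than proved, and the one-line justifications you give for them do not work. Your ``best-response lemma'' (every optimal reply uses all $c=b-1$ flips, each a useful g- or m-flip, hence covers exactly $M+1+p_g$ golds and $M-p_m$ mines) is precisely the content of the paper's \cref{lemma:optima}, \cref{lemma:form} and \cref{lemma:exact}, and even the weaker statement actually needed there — that in a PNE both strategies use exactly $b$ segments — is proved in the paper by an induction over $b$ with substantial case analysis (the conditions $S_1$--$S_4$ and explicit best-coverage constructions $\hat{f}_2$). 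Your justification ``an extra g-flip always strictly helps since $r_g>0$'' is not a valid argument: a strategy is a step function, so adding a single breakpoint toggles the entire suffix of the labelling; you cannot grab one more gold or dodge one more mine ``while leaving everything else unchanged'' with one extra flip — that takes two breakpoints or a global restructuring, and showing that such a restructuring is strictly improving against an arbitrary opponent (and that \emph{every} optimal reply, not just some, has the claimed form) is exactly the hard part you have skipped.

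Second, your step (a) — that in every PNE all $2M+2$ golds are covered — is indispensable for the load count $d=P_g$, yet you explicitly leave it as ``the main obstacle'' with only a re-targeting sketch; likewise, excluding the ``both start on line $1$'' configuration for odd $c$, and verifying that the candidate profile is a genuine PNE, are reduced to deviations (``surrender one twice-covered gold for one extra mine-dodge'') whose feasibility within the flip budget and the alternation constraint must be exhibited by an explicit construction — this is what the paper's $\hat{f}_2$ constructions do, and it is where the hypotheses $\mu<-\rho$ and $-2+\rho<\mu$ actually get used. Note that the paper sidesteps your step (a) entirely by a different decomposition: instead of proving joint coverage facts, it pins down each player's payoff individually (the upper bound of \cref{lemma:payoff} combined with a best-coverage reply and the Nash inequality), from which full coverage and the welfare formula follow. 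As it stands, your proposal is a correct reduction of the theorem to these structural lemmas, but the lemmas themselves — the mathematical core of the result — are missing.
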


The full proof is lengthy and involves analyses of many different cases (\cref{append:aog}). We present the main idea here.

\begin{proofidea}
We make three arguments for this proof:
\begin{enuminline}
    \item Any function in a PNE must satisfy some specific form indicating where it can switch lines
    \item Any PNE under $\strategyLevel{b}$ must consist of only functions that use exactly $b$ segments
    \item For any function with $b$ segments that satisfies the specific form, the optimal strategy for the other player always achieves the same payoff
\end{enuminline}.
\end{proofidea}

\paragraph{Remark}
$-2+\rho < \mu < -\rho$ is in fact a necessary and sufficient condition for all PNEs having the same social welfare for any $b$ and $M$ (see \cref{append:aog-nec}).

Depending on the sign of $2\rho-\mu-1$, $W_{\equil}(b)$ can increase, stay the same, or decrease as $b$ increases until $b=2M+1$. $W_{\equil}(b)$ always decreases at $b=2M+2$ and stays the same afterwards. \Cref{fig:trend} visualizes this trend.
\Cref{fig:demo_nash} summarizes how the characteristics of the PNEs varies in the $\rho$-$\mu$ space.

\paragraph{Price of Anarchy}
The price of anarchy (POA) \citep{koutsoupias1999worst} is the ratio between the best social welfare achieved by any centralized solution and the worst welfare at equilibria: $\POA(b) = \frac{W_{\textrm{best}}(b)}{W_{\equil}(b)}$. We can show that the best centralized social welfare is $W_{\best}(b) = 2M+2 + \mu\cdot \max(2M+1-b, 0)$ (\cref{append:best}). Hence
\begin{equation*}
    \POA(b) =
    \begin{cases}
    1 + \frac{(1-2\rho)(b-1)}{2M+2+2M\mu + (2\rho - \mu - 1)(b-1)} & \text{if } b \leq 2M+1 \\
    \frac{1}{2\rho} & \text{if } b \geq 2M+2
    \end{cases},
\end{equation*}
$\POA(b)$ increases with $b$ up to $b=2M+2$, then stays the same (\cref{fig:trend}).

\paragraph{Interpretation} There are two opposing factors that affect whether increased capability is beneficial for social welfare or not. With increased capability, players can improve their payoff in a non-competitive way (e.g. avoiding mines), which is always beneficial for social welfare; they can also improve payoff in a competitive way (e.g. occupying gold together), which may reduce social welfare. The joint effect of the two factors determines the effect of increasing capability.

\begin{figure}[hbt]
    \centering
    \includegraphics[width=0.65\textwidth]{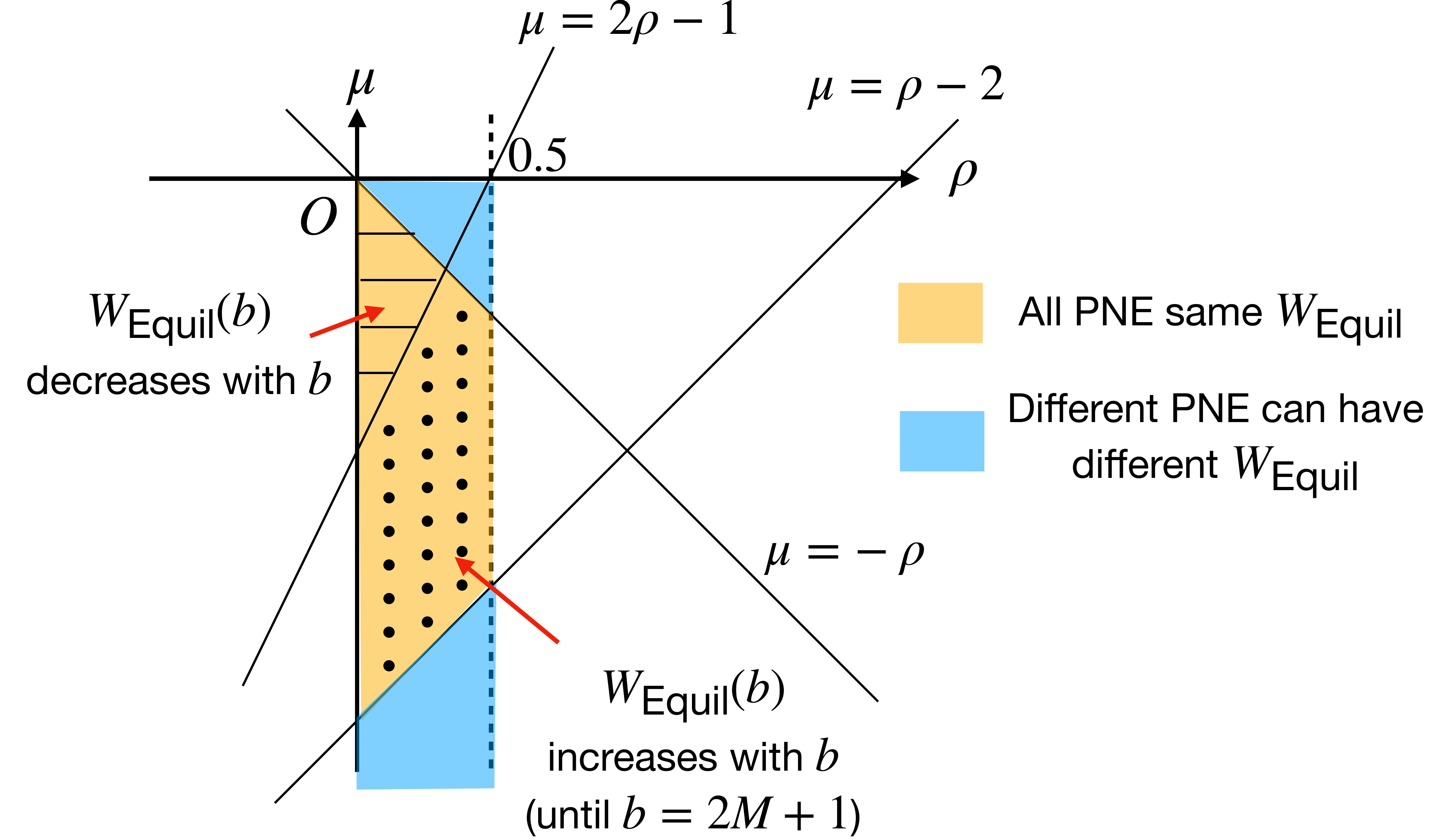}
    \caption{Characteristics of PNEs over the $\rho$-$\mu$ landscape.}
    \label{fig:demo_nash}
\end{figure}

%
%
%
\newpage
\bibliographystyle{splncs04nat}
\bibliography{refs}

\begin{thebibliography}{26}
\providecommand{\natexlab}[1]{#1}
\providecommand{\url}[1]{\texttt{#1}}
\providecommand{\urlprefix}{URL }
\expandafter\ifx\csname urlstyle\endcsname\relax
  \providecommand{\doi}[1]{doi:\discretionary{}{}{}#1}\else
  \providecommand{\doi}{doi:\discretionary{}{}{}\begingroup
  \urlstyle{rm}\Url}\fi

\bibitem[{Ackermann et~al.(2008)Ackermann, R{\"o}glin, and
  V{\"o}cking}]{ackermann2008impact}
Ackermann, H., R{\"o}glin, H., V{\"o}cking, B.: On the impact of combinatorial
  structure on congestion games. Journal of the ACM (JACM) \textbf{55}(6),
  1--22 (2008)

\bibitem[{Ackermann and Skopalik(2008)}]{ackermann2008complexity}
Ackermann, H., Skopalik, A.: Complexity of pure {Nash} equilibria in
  player-specific network congestion games. Internet Mathematics \textbf{5}(4),
  323--342 (2008)

\bibitem[{Almanasra et~al.(2013)Almanasra, Suwais, and
  Rafie}]{almanasra2013applications}
Almanasra, S., Suwais, K., Rafie, M.: The applications of automata in game
  theory. In: Intelligent Technologies and Techniques for Pervasive Computing,
  pp. 204--217, IGI Global (2013)

\bibitem[{Chan and Jiang(2016)}]{chan2016congestion}
Chan, H., Jiang, A.X.: Congestion games with polytopal strategy spaces. In:
  IJCAI, pp. 165--171 (2016)

\bibitem[{Chiong and Kirley(2009)}]{chiong2009co}
Chiong, R., Kirley, M.: Co-evolutionary learning in the n-player iterated
  prisoner’s dilemma with a structured environment. In: Australian Conference
  on Artificial Life, pp. 32--42, Springer (2009)

\bibitem[{Christodoulou et~al.(2020)Christodoulou, Gairing, Giannakopoulos,
  Poças, and Waldmann}]{christodoulou2020existence}
Christodoulou, G., Gairing, M., Giannakopoulos, Y., Poças, D., Waldmann, C.:
  Existence and complexity of approximate equilibria in weighted congestion
  games. In: ICALP (2020)

\bibitem[{Correa et~al.(2005)Correa, Schulz, and
  Stier-Moses}]{correa2005inefficiency}
Correa, J.R., Schulz, A.S., Stier-Moses, N.E.: On the inefficiency of
  equilibria in congestion games. In: International Conference on Integer
  Programming and Combinatorial Optimization, pp. 167--181, Springer (2005)

\bibitem[{Fabrikant et~al.(2004)Fabrikant, Papadimitriou, and
  Talwar}]{fabrikant2004complexity}
Fabrikant, A., Papadimitriou, C., Talwar, K.: The complexity of pure {Nash}
  equilibria. In: Proceedings of the thirty-sixth annual ACM symposium on
  Theory of computing, pp. 604--612 (2004)

\bibitem[{Fortnow(2009)}]{Fortnow2009progequil}
Fortnow, L.: Program equilibria and discounted computation time. In:
  Proceedings of the 12th Conference on Theoretical Aspects of Rationality and
  Knowledge, p. 128–133, TARK '09, Association for Computing Machinery, New
  York, NY, USA (2009), ISBN 9781605585604

\bibitem[{Garey and Johnson(1979)}]{garey1979computers}
Garey, M., Johnson, D.: Computers and Intractability: A Guide to the Theory of
  {NP}-completeness. Mathematical Sciences Series, W. H. Freeman (1979), ISBN
  9780716710448

\bibitem[{Ghnemat et~al.(2006)Ghnemat, Oqeili, Bertelle, and
  Duchamp}]{ghnemat2006automata}
Ghnemat, R., Oqeili, S., Bertelle, C., Duchamp, G.H.: Automata-based adaptive
  behavior for economic modelling using game theory. In: Emergent properties in
  natural and artificial dynamical systems, pp. 171--183, Springer (2006)

\bibitem[{Hopcroft et~al.(2014)Hopcroft, Motwani, and
  Ullman}]{hopcroft2014introduction}
Hopcroft, J., Motwani, R., Ullman, J.: Introduction to Automata Theory,
  Languages, and Computation. Always learning, Pearson Education (2014), ISBN
  9781292039053

\bibitem[{Koutsoupias and Papadimitriou(1999)}]{koutsoupias1999worst}
Koutsoupias, E., Papadimitriou, C.: Worst-case equilibria. In: STACS 99, pp.
  404--413 (1999)

\bibitem[{Le et~al.(2020)Le, Wu, and Toyoda}]{le2020congestion}
Le, S., Wu, Y., Toyoda, M.: A congestion game framework for service chain
  composition in {NFV} with function benefit. Information Sciences
  \textbf{514}, 512--522 (2020)

\bibitem[{Neyman(1985)}]{neyman1985bounded}
Neyman, A.: Bounded complexity justifies cooperation in the finitely repeated
  prisoners' dilemma. Economics letters \textbf{19}(3), 227--229 (1985)

\bibitem[{Neyman(1997)}]{neyman1997cooperation}
Neyman, A.: Cooperation, repetition, and automata. In: Hart, S., Mas-Colell, A.
  (eds.) Cooperation: Game-Theoretic Approaches, pp. 233--255, Springer Berlin
  Heidelberg, Berlin, Heidelberg (1997), ISBN 978-3-642-60454-6

\bibitem[{Nickerl and Tor{\'a}n(2021)}]{nickerl2021pure}
Nickerl, J., Tor{\'a}n, J.: Pure {Nash} equilibria in a generalization of
  congestion games allowing resource failures. In: International Symposium on
  Algorithmic Game Theory, pp. 186--201, Springer (2021)

\bibitem[{Nikitina et~al.(2018)Nikitina, Ivashko, and
  Tchernykh}]{nikitina2018congestion}
Nikitina, N., Ivashko, E., Tchernykh, A.: Congestion game scheduling for
  virtual drug screening optimization. Journal of computer-aided molecular
  design \textbf{32}(2), 363--374 (2018)

\bibitem[{Nisan et~al.(2007)Nisan, Roughgarden, Tardos, and
  Vazirani}]{nisan2007algorithmic}
Nisan, N., Roughgarden, T., Tardos, E., Vazirani, V.: Algorithmic Game Theory.
  Cambridge University Press (2007), ISBN 9780521872829

\bibitem[{Papadimitriou(1992)}]{PAPADIMITRIOU1992automaton}
Papadimitriou, C.H.: On players with a bounded number of states. Games and
  Economic Behavior \textbf{4}(1), 122--131 (1992), ISSN 0899-8256

\bibitem[{Papadimitriou and Yannakakis(1994)}]{papadimitriou1994complexity}
Papadimitriou, C.H., Yannakakis, M.: On complexity as bounded rationality. In:
  Proceedings of the twenty-sixth annual ACM symposium on Theory of computing,
  pp. 726--733 (1994)

\bibitem[{Rosenthal(1973)}]{rosenthal1973class}
Rosenthal, R.W.: A class of games possessing pure-strategy {Nash} equilibria.
  International Journal of Game Theory \textbf{2}(1), 65--67 (1973)

\bibitem[{Roughgarden(2002)}]{roughgarden2002unfair}
Roughgarden, T.: How unfair is optimal routing? In: Symposium on Discrete
  Algorithms: Proceedings of the thirteenth annual ACM-SIAM symposium on
  Discrete algorithms, vol.~6, pp. 203--204 (2002)

\bibitem[{Rubinstein(1986)}]{RUBINSTEIN1986automaton}
Rubinstein, A.: Finite automata play the repeated prisoner's dilemma. Journal
  of Economic Theory \textbf{39}(1), 83--96 (1986), ISSN 0022-0531

\bibitem[{Tennenholtz(2004)}]{TENNENHOLTZ2004programequil}
Tennenholtz, M.: Program equilibrium. Games and Economic Behavior
  \textbf{49}(2), 363--373 (2004), ISSN 0899-8256

\bibitem[{Zhang and Wang(2020)}]{zhang2020stochastic}
Zhang, F., Wang, M.M.: Stochastic congestion game for load balancing in
  mobile-edge computing. IEEE Internet of Things Journal \textbf{8}(2),
  778--790 (2020)

\end{thebibliography}

\newpage

\appendix
\section{Proofs of the {\mdseries\dbncgda} hardness results}
\label{append:dbncg-da-proof}

\subsection{Proof of \cref{thm:dbncg-da-pls-complete}}

\ThmDbncgDaPlsComplete*
\begin{proof}
    \begin{figure}[hbt]
        \centering
        \begin{subfigure}[t]{.7\textwidth}
            \centering
            \def\figdatatxt{%
                12 11 7 1
                1 2 6 12
            }
            \begin{adjustbox}{width=0.66\textwidth}
                \readarray\figdatatxt\figdata[2,4]

\begin{tikzpicture}[
    node distance={4em},
    main/.style = {draw, circle, minimum size=2em, font=\small},
    edgelabel/.style = {midway, above, sloped, font=\small},
    jumpedge/.style = {gray, dashed}
]

    \clip (-10.5em,-30em) rectangle (19em, 7.5em);

    \tikzset{>=Latex}

    \node[main] (11) {};
    \node[main] (21) [below of=11] {$r_{12}$};
    \node[main] (22) [right of=21] {};
    \node[main] (31) [below of=21] {$r_{13}$};
    \node[main] (32) [right of=31] {$r_{23}$};
    \node[main] (33) [right of=32] {};
    \node[main] (41) [below of=31] {$r_{14}$};
    \node[main] (42) [right of=41] {$r_{24}$};
    \node[main] (43) [right of=42] {$r_{34}$};
    \node[main] (44) [right of=43] {};

    \foreach \i in {1,...,4} {
        \node[main] (s\i) [left of=\i1] {$s_{\i}$};
        \draw[->] (s\i) to (\i1);
    }

    \foreach \i in {1,...,4} {
        \node[main] (t\i) [below of=4\i] {$t_{\i}$};
        \draw[->] (4\i) to (t\i);
    }

    \node[main] (s) at ([xshift=-5em]$(s1)!0.5!(s4)$) {$s$};
    \node[main] (t) at ([yshift=-4em]$(t1)!0.5!(t4)$) {$t$};

    \foreach \y in {2,...,4}
        \foreach \x [evaluate={\xprev=\x-1}] in {2,...,\y} {
            \pgfmathtruncatemacro{\yprev}{\y-1}
            \draw[->] (\y\xprev) to node[edgelabel]{\y} (\y\x);
        }

    \foreach \y [evaluate={\yprev=\y-1}] in {2,...,4}
        \foreach \x in {1,...,\yprev} {
            \pgfmathtruncatemacro{\yprev}{\y-1}
            \draw[->] (\yprev\x) to (\y\x);
        }

    \foreach \out/\i in {90/1,45/2,-45/3,-90/4} {
        \draw[->, out=\out, in=180] (s) to node[edgelabel]{\figdata[1,\i]} (s\i);
        \draw[->] (t\i) to (t);
    }

    \draw[->, jumpedge, out=45, in=45, looseness=1.7]
        (s1) to node[edgelabel, gray, pos=0.55]
        {$r_1;\; w_1=\figdata[2,1]$} (t1);

    \draw[->, jumpedge, out=120, in=45, looseness=3.8]
        (s2) to node[edgelabel, gray, pos=0.45]
        {$r_2;\; w_2=\figdata[2,2]$} (t2);

    \draw[->, jumpedge]
        (s3) to[out=-140, in=-90, looseness=1.4]
        node[edgelabel, gray, pos=0.5]
        {$r_3;\; w_3=\figdata[2,3]$}
        ++($(t4.east) + (9em, 0)$)
        to[out=90, in=30, looseness=2] (t3);

    \draw[->, jumpedge, out=-80, in=-30, looseness=2.3]
        (s4) to node[edgelabel, gray, pos=0.5]
        {$r_4;\; w_4=\figdata[2,4]$} (t4);

\end{tikzpicture}

            \end{adjustbox}
            \caption{The graph structure}
        \end{subfigure}
        \begin{subfigure}[t]{.29\textwidth}
            \centering
            \begin{adjustbox}{width=\textwidth}
                \begin{tikzpicture}[
    main/.style = {draw, circle, font=\small, minimum size=2em},
    edgelabel/.style = {midway, above, sloped, font=\small}
]

    \tikzset{>=Latex}

    \node[main] (u) at (-6em,0) {$u$};
    \node[main] (v) at (6em,0) {$v$};
    \node[main] (x) at (0,4em) {$u'$};

    \draw[->] (u) to node[edgelabel]{0, $r_{\infty}$} (x);
    \draw[->] (x) to node[edgelabel]{0} (v);
    \draw[->] (u) to node[edgelabel]{1} (v);

\end{tikzpicture}

            \end{adjustbox}
            \caption{A gadget to add a default action for $u$}
        \end{subfigure}
        \caption{
            Illustration of the \dbncgda{} instance corresponding to a
            four-player quadratic threshold game. The distance bound is $b=13$.
            Non-zero-length edges have labels to indicate their lengths.
            \label{fig:dbncg-da-pls-proof}
        }
    \end{figure}
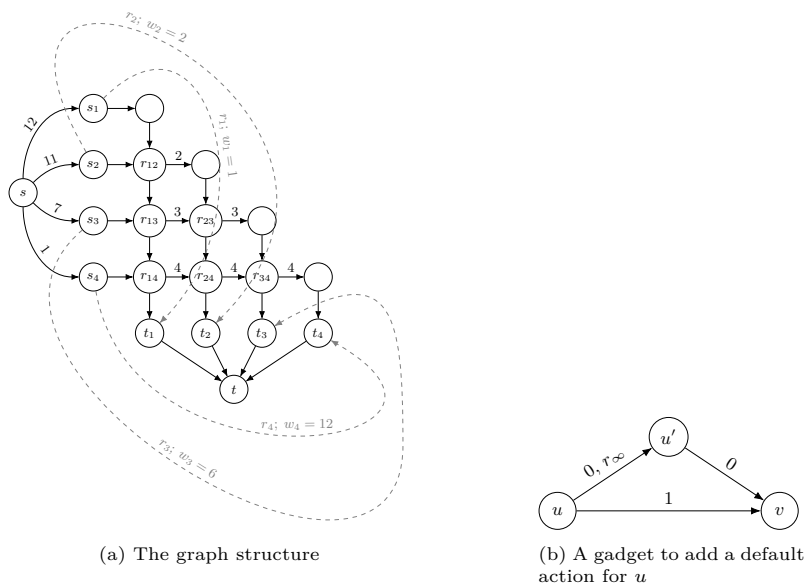

    The best response of a player in \dbncgda{} can be computed in polynomial
    time similarly to \dnc. Now we prove its \PLS-completeness by presenting a
    reduction from the quadratic threshold game. We modify the network layout in
    the proof of \cref{thm:dbncg-pls-complete} as follows. We assign zero length
    to the vertical edges $(v_{ij},\,v_{i+1,j})$, edges for $r_{ij}$, and edges
    in the set $\cup_{1\le i \le n}\{(s_i,\,v_{i1}),\, (v_{ni},\,t_i),\, (t_i,\,
    t)\}$. We also redefine the lengths of some other edges: $w_{(s_i,\, t_i)}
    =i(i-1) + \indicator{i=1} $ and $w_{(s,\,s_i)} = b - w_{(s_i,\, t_i)}$. The
    distance bound is $b=n(n-1) +1$. Note that after replacing weighted edges
    with a chain of unit-length edges to build the \dbncgda{} network, some
    vertices will only have one unit-length outgoing edge, which violates the
    requirements of default action. In this case, we add auxiliary vertices with
    zero-length edges and a resource $r_{\infty}$ that has sufficiently large
    delay to disincentivize any player from taking the zero-length auxiliary
    edges. \Cref{fig:dbncg-da-pls-proof} illustrates this construction. In this
    \dbncgda{} instance, player $i$ will choose edges $(s,\,s_i)$ and $(t_i,\,t)
    $. They then choose between the right-down path from $s_i$ to $t_i$ or the
    edge $(s_i, \,t_i)$ which correspond to the two strategies in the quadratic
    threshold game respectively.
\end{proof}

\subsection{Proof of \cref{thm:dbncg-da-best-np-hard} and
    \cref{thm:dbncg-da-worst-np-hard}}

\ThmDbncgDaBestNpHard*
\begin{proof}
    We adopt the reduction given in the proof of \cref{thm:dbncg-best-np-hard}.
    We modify the edge weights as $w_{(t_i,\,s_{i+1})}=1$, $w_{(s_i,\,t_i)}
    =a_i-1$, and $w_{(t_i,\,t_{i+1})}=0$, as illustrated in
    \cref{fig:dbncg-da-best-hard}. The distance bound is $b=T$. All other
    argument in the original proof follows.

    \begin{figure}[hbt]
        \centering
        \def\edgeVert{-1}
        \def\edgeDiag{1}
        \def\edgeHori{}
        \begin{tikzpicture}[
    node distance={7em},
    nodetext/.style = {minimum size=2em, font=\footnotesize},
    main/.style = {draw, nodetext, circle},
    edgelabel/.style = {midway, above, sloped, font=\footnotesize},
    fastedge/.style = {thin, dashed},
    slowedge/.style = {double, double distance=1pt},
    dot/.style = {minimum width=4em, minimum height=2em}
]

    \tikzset{>=Latex}

    \node[main] (t0) {$t_0$};
    \node[main] (t1) [right of=t0] {$t_1$};
    \node[main] (s1) [yshift=-1em][above of=t1] {$s_1$};
    \node[main] (t2) [right of=t1] {$t_2$};
    \node[main] (s2) [yshift=-1em][above of=t2] {$s_2$};
    \node[dot] (tdot) [right of=t2] {};
    \node[dot] (sdot) [yshift=-1em][above of=tdot] {};
    \node[dot] (dispdot) at ($(tdot)!0.5!(sdot)$) {$\cdots$};
    \node[main] (t3m) [right of=tdot] {$t_{3m}$};
    \node[main] (s3m) [yshift=-1em][above of=t3m] {$s_{3m}$};

    \foreach \i in {1,2,3m}
        \draw[->, fastedge] (s\i) to node[edgelabel]{$a_{\i}\edgeVert$} (t\i);

    \foreach \t / \s / \e in {0/1/, 1/2/, 2/dot/.south west, dot/3m/}
        \draw[->] (t\t.15) to node[edgelabel]{\edgeDiag} (s\s\e);

    \foreach \tp/\tn in {0/1, 1/2, 2/dot, dot/3m}
        \draw[->, slowedge] (t\tp) to node[edgelabel]{\edgeHori} (t\tn);
\end{tikzpicture}

        \caption{ \dbncgBestHardCaption{\dbncgda} \label{fig:dbncg-da-best-hard}
        }
    \end{figure}
\end{proof}

With similar adoption of the proof of \cref{thm:dbncg-worst-np-hard}, we can
also prove \cref{thm:dbncg-da-worst-np-hard}, which claims the \NP-hardness of
computing the worst social welfare at equilibrium.


\section{Detailed proof of \cref{thm:db-pp}}
\label{append:db-pp}

\dbpp*

\begin{proof}
    If $\fd$ is a constant function, the total delay achieved by a strategy is
    not affected by the load condition of each edge (thus not affected by other
    players' strategies). So each player's strategy in any PNE is the one in
    $\strategyLevel{b}$ that minimizes the total delay under the game layout.
    Denote this minimum delay as $\delta(b)$. For any $b \ge 1$, we have
    $\strategyLevel{b} \subseteq \strategyLevel{b+1}$, so $\delta(b) \geq
    \delta(b+1)$. And for any $\vs \in \equil(b)$, $W(\vs) = - n \delta(b)$.
    Hence $\bestw{b} \leq \worstw{b+1}$.

    If $\fd$ is not a constant function, we show that there exists an instance
    of \dbncgdas{} with delay function $\fd$ that is not \pp{}. Define $v =
    \min\condSet{x}{d(x) \neq d(x+1)}$. It follows that $d(v') = d(v)$ for all
    $v'\leq v$. We consider the cases $d(v) = 0$ and $d(v) > 0$ separately.
    \\
    \\
    \noindent\textbf{Case 1: $d(v) > 0$}\hspace{1em}
    Denote $\rho = \frac{d(v+1)}{d(v)}$. Since $\fd$ is non-decreasing, $\rho >
    1$. We construct a game with the network layout in \cref{fig:da-pp-pos}
    with $n=v+1$ players. This game is a counterexample for the \pp{} property
    with $b=1$ (i.e., $\bestw{1} > \worstw{2}$).

    First, it is easy to see that the PNEs when $b=1$ are $a$ players take the
    upper path and $v+1-a$ players take the lower path, where $1\leq a\leq v$.
    All PNEs achieves a social welfare of $W_1 = - (v+1) (N_1+N_2+3)d(v)$.

    We set the constants $N_1 = \lfloor \frac{1}{\rho - 1}\rfloor$ and
    $N_2 = \lfloor (N_1+2)\rho\rfloor - 1$, which ensures $N_1 > \frac{1}
    {\rho - 1} - 1$ and $(N_1+2)\rho - 2 < N_2 \leq (N_1+2)\rho - 1$. We claim
    that one PNE when $b=2$ is that all players choose the path from upper left
    to lower right using the switching edge in the middle. Under this strategy
    profile, each player has a total delay of $\delta = (2N_1 + 3) d(v+1)$. This
    is an equilibrium because if any player changes their strategy to the upper or
    the lower horizontal path (the only two alternative strategies) the new
    total delay is $\delta' = (N_1+1)d(v+1) + (N_2+2)d(v) > \delta$ because
    $\frac{\delta' - \delta}{d(v)} = N_2 - \big((N_1+2)
    \rho - 2\big) > 0$. The social welfare is $W_2 = - (v+1) (2N_1 + 3)d(v+1)$.
    Note that
    \begin{align*}
        \frac{W_1 - W_2}{(v+1)d(v)} &= (N_1+1)(\rho-1) - \big(N_2 + 2 - (N_1+2)\rho\big) \\
            &> (\frac{1}{\rho - 1} - 1 + 1)(\rho-1) -
                \big( (N_1+2)\rho - 1 + 2- (N_1+2)\rho\big) = 0.
    \end{align*}
    Hence $\bestw{1} \ge W_1 > W_2 \ge \worstw{2}$.
    \\
    \\
    \noindent\textbf{Case 2: $d(v) = 0$}\hspace{1em}
    We construct a game with the network layout in \cref{fig:da-pp-zero} where
    there are $2v$ players. With $b=1$, half of the players choose
    the upper path and the others choose the lower path, which has a social
    welfare $W_1=0$.

    With the bound $b=2$, we consider this strategy:
    \begin{enuminline}
        \item $v$ players take the path $(s,\, N_1 \text{ edges},\, \text{lower
            right } N_2 \text{ edges},\, t)$
        \item the other $v$ players take the path $(s,\, \text{lower left } N_2
            \text{ edges},\, N_1 \text{ edges},\, t)$
    \end{enuminline}.
    We choose $N_1$ and $N_2$ to be positive integers that satisfy $\frac{N_2+1}
    {N_1} > \frac{d(2v)}{d(v+1)}$. There are $2v$ players occupying the $N_1$
    edges that incur a delay of $N_1d(2v)$ on each player. The social welfare
    $W_2=-2vN_1d(2v) < W_1$. The above strategy is an equilibrium because for
    each player the alternative strategy to avoid the $N_1$ congestion edges is
    to take the lower path $(s,\, N_2 \text{ edges},\, N_2 \text{ edges},\, t)$
    which has a delay of $(N_2+1)d(v+1) > N_1d(2v)$.
    Hence $\bestw{1} \ge W_1 > W_2 \ge \worstw{2}$.
\end{proof}

\section{Representing \gam{} as {\mdseries\dbncgda}}
\label{append:repr}

To represent a \gam{} as a \dbncgda{}, we first order all resources by increasing order in $x_e$, such that $x_1 < x_2 < \dots < x_{|\setE|}$.
Then the network in the corresponding \dbncgda{} has vertices:
\begin{itemize}
    \item $v_{i,j}$ at $(\frac{x_i + x_{i+1}}{2}, j)$ for all $i=1,\dots,|\setE|-1$ and $j=0,\dots,K-1$.
    \item $v_{0,j}$ at $(x_1-1, j)$ and $v_{|\setE|,j}$ at $(x_{|\setE|}+1, j)$ for all $j=0,\dots,K-1$.
    \item A source $s$ and a sink $t$.
\end{itemize}
The positions of the vertices are only to help understanding. The edges are:
\begin{itemize}
    \item $(v_{i,j}, v_{i+1,j})$ for all $i=0,\dots,|\setE-1|$ and $j=0,\dots,K-1$, with length $w_e = 0$. If $y_{i+1} = j$, then the delay function $d_e = - r_{\alpha_{i+1}}$; otherwise, $d_e = 0$.
    \item $(v_{i,j}, v_{i, j'})$ for all $i=1,\dots,|\setE-1|$ and $(j,j')\in \{0,\dots,K-1\}^2$, with $w_e=1$ and $d_e=0$.
    \item $(s, v_{0,j})$ and $(v_{|\setE|,j}, t)$ for all $j=0,\dots,K-1$, with $w_e=1$ and $d_e=0$.
\end{itemize}
The distance bound in the corresponding \dbncgda{} is the same as the maximum number of segments in \gam{}.

\section{Proof of \cref{thm:gm-pp}}
\label{append:gm-pp}

\gmpp*

\begin{proof}
If both $r_g$ and $r_m$ are constant functions, the same proof as in \cref{thm:db-pp} applies to show that \pp{} holds universally here.

If $r_g$ and $r_m$ are not both constant functions, we show that there is always some instance of \gam{} with payoff $r_g$ and $r_m$ that is not \pp{}.

If $r_g$ is not a constant function, let $v = \min \condSet{x}{r_g(x)\neq r_g(x+1)}$. Denote $\rho = \frac{r_g(v+1)}{r_g(v)}$, then $\rho \neq 1$, and $r_g(v') = r_g(v)$ for all $v'\leq v$.
We show that we can always construct a game with $n=v+1$ players and $K=2$ lines that is a counterexample for the \pp{} property with $b=1$ (i.e. $\bestw{1} > \worstw{2}$).
\\
\\
\noindent\textbf{Case 1. $\rho < 1$}\hspace{1em}
The layout of the constructed game is shown in \cref{fig:gm-pp-squeeze}.
All the resources are gold. Block $B$ is constructed as follows.
Following the increasing order of $x$, $y_t=k$ means the $t$-th point is on line $y=k$. $N_0(t)$ ($N_1(t)$) denotes the number of points on line $y=0$ ($y=1$) within the first $t$ points. Denote $D(t) = \rho N_0(t) - N_1(t)$. We use the following algorithm to position the gold:
\begin{enumerate}
    \item \texttt{while} $N_0(t) < \frac{3}{1-\rho}$, do:
    \begin{itemize}
        \item[] If $D(t) \leq 1$, put $y_{t+1} = 0$; else, put $y_{t+1} = 1$;
        \item[] $t\leftarrow t+1$;
    \end{itemize}
    \item \texttt{while} $D(t) \leq 1+\rho$, do: $y_{t+1} = 0, t\leftarrow t+1$.
\end{enumerate}
Denote the total number of gold put down as $N$. Under the above construction, the following properties hold:
\begin{itemize}
    \item For all $t=1\dots N$, $D(t) > 0$ (all prefixes are better)
    \item For all $t=0\dots N-1$, $D(N) - D(t) > 0$ (all suffixes are better)
    \item $D(N) < 3$
    \item $N_0(N) > \frac{3}{1-\rho}$
\end{itemize}
Block $B'$ is obtained by flipping $B$ in both $x$ and $y$ direction.

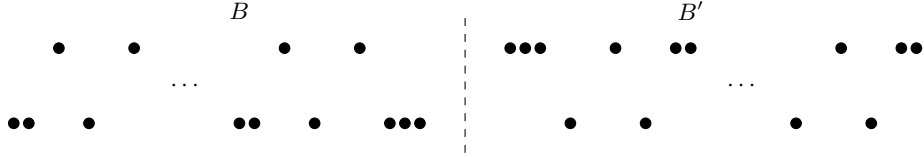
\begin{figure}[hbt]
    \centering
    \begin{tikzpicture}[
    dot/.style = {minimum width=2em, minimum height=2em}
]

    \foreach \basex/\basey/\dir in {0/0/1, 12/1/-1} {
        \foreach \i in {0,3} {
            \filldraw[black] (\basex+\dir*\i, \basey) circle (2pt) node{};
            \filldraw[black] (\basex+\dir*\i+\dir*0.2, \basey) circle (2pt) node{};
            \filldraw[black] (\basex+\dir*\i+\dir*0.6, 1-\basey) circle (2pt) node{};
        }
        
        \foreach \i in {1,4} {
            \filldraw[black] (\basex+\dir*\i, \basey) circle (2pt) node{};
            \filldraw[black] (\basex+\dir*\i+\dir*0.6, 1-\basey) circle (2pt) node{};
        }
        
        \filldraw[black] (\basex+\dir*5, \basey) circle (2pt) node{};
        \filldraw[black] (\basex+\dir*5.2, \basey) circle (2pt) node{};
        \filldraw[black] (\basex+\dir*5.4, \basey) circle (2pt) node{};
        
        \node[dot] (ddd) at (\basex+\dir*2.3,0.5) {$\cdots$};
    }
    
    \filldraw[dashed] (6, 1.4) -- (6, -0.4);
    
    \node[dot] (b) at (3, 1.5) {$B$};
    \node[dot] (b) at (9, 1.5) {$B'$};
    
\end{tikzpicture}
    \caption{
        Counterexample for the case $\rho < 1$ for $r_g$ and $\rho > 1$ for $r_m$.
        Each dot is a resource. The upper (lower) line is $y=1$ ($y=0$). Block $B$ and $B'$ are centrosymmetric. The direction of increasing $x$ is from left to right.
        \label{fig:gm-pp-squeeze}
    }
\end{figure}

For $b=2$, a PNE is all players choose $y=0$ in $B$ and $y=1$ in $B'$. This is a PNE because of the prefix and suffix properties. Its social welfare $W_2 = (v+1) \cdot 2N_0(N)r_g(v+1)$. For $b=1$, the PNEs are $a$ players chooses $y=0$ and $v+1-a$ players choose $y=1$, where $1\leq a \leq v$. All PNEs have the same social welfare $W_1 = (v+1) \cdot (N_0(N)+N_1(N))r_g(v)$. Then
\small
\begin{align*}
    \frac{W_1 - W_2}{(v+1)r_g(v)} &= N_0(N)+N_1(N) - 2\rho N_0(N) =  N_0(N)(1-\rho) - D(N) > 3 - 3 = 0.
\end{align*}
\normalsize
Therefore $W_1 > W_2$, which implies $\bestw{1} > \worstw{2}$.
\\
\\
\noindent\textbf{Case 2. $\rho > 1$}\hspace{1em}
The layout of the constructed game is in \cref{fig:gm-pp-stuck}. We choose $N > \frac{\rho}{\rho-1}$.
A PNE for $b=2$ is all players choose the first $N$ gold on $y=0$ and the $N$ gold on $y=1$, which has welfare $W_2 = (v+1) \cdot 2N r_g(v+1)$. A PNE for $b=1$ is all players choose $y=0$, which has welfare $W_1 = (v+1) \cdot (2N+1) r_g(v+1)$. Clearly $W_1 > W_2$, so $\bestw{1} > \worstw{2}$.
\\

\begin{figure}[hbt]
    \centering
    \begin{tikzpicture}[
    dot/.style = {minimum width=2em, minimum height=2em}
]

    \foreach \x/\y in {0/0, 2.5/1, 5/0} {
        \foreach \dx in {0, 0.5, 2} {
            \filldraw[black] (\x + \dx, \y) circle (2pt) node{};
        }
        \node[dot] (dd) at (\x+1.3, \y) {$\cdots$};
    }
    
    \draw [decorate,decoration={brace,mirror,amplitude=3mm,raise=2mm}] (0, 0) -- (2, 0);
    \draw [decorate,decoration={brace,mirror,amplitude=3mm,raise=2mm}] (5, 0) -- (7, 0);
    \draw [decorate,decoration={brace,amplitude=3mm,raise=2mm}] (2.5, 1) -- (4.5, 1);
    \node[dot] (b) at (1, -0.7) {$N$};
    \node[dot] (b) at (6, -0.7) {$N+1$};
    \node[dot] (b) at (3.5, 1.7) {$N$};
    
\end{tikzpicture}
    \caption{
       Counterexample for the case $\rho > 1$ for $r_g$ and $\rho < 1$ for $r_m$.
        \label{fig:gm-pp-stuck}
    }
\end{figure}
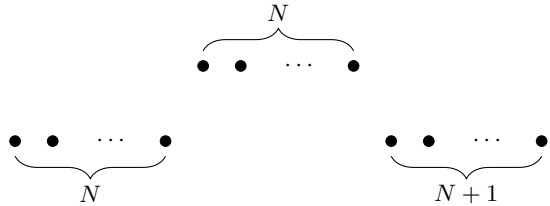

If $r_m$ is not a constant function, let $v = \min \condSet{i}{r_m(i)\neq r_m(i+1)}$. Denote $\rho = \frac{r_m(v+1)}{r_m(v)}$.
\\
\\
\noindent\textbf{Case 1. $\rho < 1$}\hspace{1em}
We use the same layout as in \cref{fig:gm-pp-stuck} with all resources being mines and $v+1$ players. We choose $N > \frac{\rho}{1-\rho}$. A PNE for $b=2$ is all players chooses the bottom-right $N+1$ mines and avoiding all other mines, which has welfare $W_2 = (v+1) \cdot (N+1) r_m(v+1)$.
A PNE for $b=1$ is all players choose $y=1$, which has welfare $W_1 = (v+1) \cdot N r_m(v+1)$. Clearly $W_1 > W_2$.
\\
\\
\noindent\textbf{Case 2. $\rho > 1$}\hspace{1em}
We use the same layout as in \cref{fig:gm-pp-squeeze} with all resources being mines and $v+1$ players. Block $B$ is constructed in a similar way with $D(t) = \frac{1}{\rho}N_0(t) - N_1(t)$, and the algorithm is:
\begin{enumerate}
    \item \texttt{while} $N_1(t) < \frac{3}{1-1/\rho}$, do:
    \begin{itemize}
        \item[] If $D(t) \leq 1$, put $y_{t+1} = 0$; else, put $y_{t+1} = 1$;
        \item[] $t\leftarrow t+1$;
    \end{itemize}
    \item \texttt{while} $D(t) \leq 1+ \frac{1}{\rho}$, do: $y_{t+1} = 0, t\leftarrow t+1$.
\end{enumerate}
The properties following the construction becomes:
\begin{itemize}
    \item For all $t=1\dots N$, $D(t) > 0$ (all prefixes are better)
    \item For all $t=0\dots N-1$, $D(N) - D(t) > 0$ (all suffixes are better)
    \item $D(N) < 3$
    \item $N_1(N) > \frac{3\rho}{\rho-1}$
\end{itemize}

For $b=2$, a PNE is all players choose $y=1$ in $B$ and $y=0$ in $B'$, whose social welfare $W_2 = (v+1)\cdot 2N_1(N)r_m(v+1)$. For $b=1$, the PNEs are $a$ players choose $y=0$ and $v+1-a$ players choose $y=1$, where $1\leq a\leq v$. All PNEs have the same social welfare $W_1 = (v+1)\cdot (N_0(N)+N_1(N))r_m(v)$. Then
\small
\begin{align*}
    \frac{W_1 - W_2}{-(v+1)r_m(v)} = 2\rho N_1(N) - N_0(N)-N_1(N)= (\rho - 1)N_1(N) - \rho D(N) > 3\rho - 3\rho = 0.
\end{align*}
\normalsize
Therefore $W_1 > W_2$, which implies $\bestw{1} > \worstw{2}$.

\end{proof}

\section{Proofs for the \aog{}}
\label{append:aog}

\subsection{Preliminaries}
We first establish some notational convenience for the subsequent analyses:
\begin{itemize}
    \item Following the increasing order of $x$, we use $(x_t, y_t, \alpha_t)$ to denote the location and type of the $t$-th resource.

    \item Each player's strategy can be represented as a function over the integer domain $[0, 4M+1]$, specifying the function value at each $x_t$ for $t\in[0, 4M+1]$. We represent a function compactly as the set of intervals $\{[a_{\xi}^-, a_{\xi}^+]\}_{{\xi}=0}^{c-1}$ over $t$ with $f(t)=1$. For example, $f=\{[0, 2], [5, 5] \}$ represents the function of $f(t)=1$ if $t\in [0, 2]$ or $t=5$, and $f(t)=0$ otherwise.
    \item We use the \textbf{canonical} representation of $f$ throughout the paper, where
    \begin{itemize}
        \item $a_{\xi}^-, a_{\xi}^+$ are integers for all $\xi$, and $a_0^-\geq 0, a_{c-1}^+\leq 4M+1$;
        \item $a_{\xi}^-\leq a_{\xi}^+, a_{{\xi}+1}^- - a_{\xi}^+ \geq 2$ for all $\xi$, i.e. the representation uses the least number of intervals.
    \end{itemize}
\item Denote the set of functions with exactly $k$ segments as $\mathcal{F}_k$. Then $\strategyLevel{b} = \bigcup_{k\leq b} \mathcal{F}_k$. The following general form covers all functions within $\mathcal{F}_k$:
    \begin{itemize}
        \item For $k=2c+1$ (odd $k$), $\mathcal{F}_k = \Big\{ \{[a_{\xi}^-, a_{\xi}^+]\}_{{\xi}=0}^c \big| a_0^- = 0, a_c^+=4M+1 \Big\} \bigcup \Big\{ \{[a_{\xi}^-, a_{\xi}^+]\}_{{\xi}=0}^{c-1} \big| a_0^- > 0, a_{c-1}^+ < 4M+1 \Big\}$
        \item For $k=2c$ (even $k$), $\mathcal{F}_k = \Big\{ \{[a_{\xi}^-, a_{\xi}^+]\}_{{\xi}=0}^{c-1} \big| a_0^- = 0, a_{c-1}^+<4M+1 \Big\} \bigcup \Big\{ \{[a_{\xi}^-, a_{\xi}^+]\}_{{\xi}=0}^{c-1} \big| a_0^- > 0, a_{c-1}^+ = 4M+1 \Big\}$
    \end{itemize}

\end{itemize}

\subsection{Proof of \cref{thm:aog}}

\begin{restatable}{lemma}{optima}
\label{lemma:optima}
Given an arbitrary set of strategies used by the other players $\textbf{f}_{-i}$ and any $k_0 \geq 1$, denote $f_i^* = \argmax_{f_i\in \strategyLevel{k_0}} u_i(f_i, \textbf{f}_{-i})$ as the optimal strategy for player $i$, then $f_i^* = \{[a_{\xi}^-, a_{\xi}^+]\}_{{\xi}=0}^{c-1}$ must satisfy the following condition:
\[\forall \xi, a_{\xi}^- = 0 \textrm{ or } 4j_{\xi}^-+3 \textrm{ for some } j_{\xi}^-, a_{\xi}^+ = 4M+1 \textrm{ or } 4j_{\xi}^+ \textrm{ for some } j_{\xi}^+.\]
\end{restatable}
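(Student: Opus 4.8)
The plan is a one-step exchange (local-improvement) argument on the canonical interval representation of a maximizer $f_i^*$. The first ingredient is the type pattern of the resources: since all $x_e$ are distinct and, read in increasing order of $x$, the resources cycle through gold on line $y=1$, gold on line $y=0$, mine on line $y=1$, mine on line $y=0$, the residue of the index $t$ modulo $4$ fixes the type of the $t$-th resource --- $t\equiv 0$ is gold on $y=1$, $t\equiv 1$ gold on $y=0$, $t\equiv 2$ mine on $y=1$, $t\equiv 3$ mine on $y=0$ --- where $t$ ranges over $[0,4M+1]$. The second ingredient is a single-flip payoff identity: because $\mathbf{f}_{-i}$ is held fixed, changing $f_i$ at one index $t$ from $0$ to $1$ changes only whether player $i$ covers the single resource at $x_t$, so it changes $u_i$ by exactly $\Delta(t)$, the payoff this resource yields player $i$ when $i$ is on line $y=1$ there minus the payoff it yields when $i$ is on line $y=0$ there. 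Every gold gives player $i$ a positive payoff --- $r_g(1)=1$ if $\mathbf{f}_{-i}$ avoids it, $r_g(2)=\rho>0$ otherwise --- and every mine gives $\mu<0$ regardless of load, so $\Delta(t)>0$ exactly for $t\equiv 0$ or $t\equiv 3 \pmod 4$ and $\Delta(t)<0$ exactly for $t\equiv 1$ or $t\equiv 2 \pmod 4$.

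With these facts I would argue about each maximal run of ones of $f_i^*$, that is, each canonical interval $[a_\xi^-,a_\xi^+]$. Extending the left end one step to the left (replacing $a_\xi^-$ by $a_\xi^- - 1$) adds $\Delta(a_\xi^- - 1)$ to $u_i$, while retracting it one step (replacing $a_\xi^-$ by $a_\xi^- + 1$) adds $-\Delta(a_\xi^-)$; a check over the four residues of $a_\xi^-$ shows that whenever $a_\xi^-\neq 0$ and $a_\xi^-\not\equiv 3\pmod 4$ at least one of these moves strictly increases $u_i$ (for instance, if $a_\xi^-\equiv 0$ or $1$ then extending left has positive increment, and if $a_\xi^-\equiv 2$ then retracting has positive increment $-\mu$). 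Symmetrically, extending the right end adds $\Delta(a_\xi^+ + 1)$ and retracting it adds $-\Delta(a_\xi^+)$, so whenever $a_\xi^+\neq 4M+1$ and $a_\xi^+\not\equiv 0\pmod 4$ one of these strictly increases $u_i$. The point is that none of these one-step moves increases the number of segments: a move staying within an interval's range keeps the segment count fixed, and the only structural effects it can trigger --- merging two intervals into one, or deleting an interval --- both reduce it. Hence the modified strategy still lies in $\strategyLevel{k_0}$, so a strict increase in $u_i$ would contradict $f_i^*\in\argmax_{f_i\in\strategyLevel{k_0}} u_i(f_i,\mathbf{f}_{-i})$. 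Therefore every endpoint of $f_i^*$ already has the form asserted by the lemma.

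The step I expect to take the most care is the degenerate bookkeeping around these one-step moves. First, if extending an endpoint would make two neighbouring canonical intervals touch, one passes to the canonical representation of the result, which merges them; the merged function differs from the original at exactly one index, so $u_i$ still changes by the single quantity $\Delta(\cdot)$ above and the segment count only drops, so merging does not spoil the improvement. Second, an endpoint already sitting at $0$ or at $4M+1$ is in the permitted form and needs no move; this also explains why ``running to the boundary'' is allowed in the conclusion, since retracting such an endpoint would create an extra segment and is thus forbidden under the capability budget. Third, when a retraction would empty a singleton interval $[m,m]$, the correct move is to delete it outright; a brief check on $m\bmod 4$ shows that deletion changes $u_i$ by $-\Delta(m)$, which is positive in exactly the cases $m\not\equiv 3\pmod 4$ --- the cases in which a retraction was the move we wanted --- and it strictly decreases the segment count. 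I would structure the write-up as: (i) the index-to-type correspondence; (ii) the single-flip payoff identity and the sign of $\Delta$; (iii) the left- and right-endpoint case analyses; (iv) the merge, boundary, and empty-interval subcases. Beyond the casework there is essentially no computation; everything reduces to the signs $r_g(\cdot)>0$ and $\mu<0$.
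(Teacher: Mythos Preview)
Your proposal is correct and follows essentially the same local-improvement argument as the paper: both prove by contradiction that a bad endpoint can be shifted one step without increasing the segment count while strictly increasing payoff, via a case analysis on the residue of the endpoint modulo $4$. Your abstraction of the single-flip increment $\Delta(t)$ and its sign table is a tidy repackaging of exactly the six cases the paper writes out explicitly, and your treatment of the degenerate subcases (merging adjacent intervals, emptying a singleton interval, boundary endpoints) is actually more explicit than the paper's.
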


\begin{proof}
This lemma essentially states that the segments of an optimal strategy can only start and end at particular locations in the sequence. We prove this lemma by showing that if $f^*_i \in \mathcal{F}_k$ does not satisfy the above condition, then there exists $f'_i$ which uses $k'\leq k$ segments and achieves a payoff $u'_i = u_i(f'_i, \textbf{f}_{-i})$ higher than $u_i^* = u_i(f^*_i, \textbf{f}_{-i})$, therefore contradicting the fact that $f^*_i$ is optimal.

If $f^*_i$ does not satisfy the given condition, then either there exists a $\xi_0$ where $a_{\xi_0}^- \neq 0$ and $a_{\xi_0}^- \neq4j+3$ for all $j$, or there exists a $\xi_0$ where $a_{\xi_0}^+ \neq 4M+1$ and $a_{\xi_0}^+ \neq4j$ for all $j$. We consider each case separately here.
\begin{enumerate}
    \item There exists a $\xi_0$ where $a_{\xi_0}^- \neq 0$ and $a_{\xi_0}^- \neq4j+3$ for all $j$. Consider the value of $a_{\xi_0}^-$:
    \begin{itemize}
        \item $a_{\xi_0}^- = 4j$ for some $j\neq 0$. Since $f^*_i$ is in canonical form, $f^*_i(4j-1) = 0$. We know $y_{4j-1}=0, \alpha_{4j-1}=\textrm{mine}$. Let $f'_i$ be identical to $f^*_i$ except changing $a_{\xi_0}^- = 4j-1$, then the payoff achieved by $f'_i$ is $u'_i = u^*_i - \mu > u^*_i$. And the number of segments of $f'_i$ $k' \leq k$.
        \item $a_{\xi_0}^- = 4j+1$ for some $j$. We know $y_{4j+1}=0, \alpha_{4j+1}=\textrm{gold}$. Let $f'_i$ be identical to $f^*_i$ except changing $a_{\xi_0}^- = 4j+2$ (if this makes interval $\xi_0$ empty, then remove interval $\xi_0$), then the payoff achieved by $f'_i$ is $u'_i = u^*_i + r_g(x'_{4j+1}) > u^*_i$, and $k' \leq k$.
        \item $a_{\xi_0}^- = 4j+2$ for some $j$. We know $y_{4j+2}=1, \alpha_{4j+2}=\textrm{mine}$. Let $f'_i$ be identical to $f^*_i$ except changing $a_{\xi_0}^- = 4j+3$ (if this makes interval $\xi_0$ empty, then remove interval $\xi_0$), then the payoff achieved by $f'_i$ is $u'_i = u^*_i - \mu > u^*_i$, and $k' \leq k$.
    \end{itemize}

    \item There exists a $\xi_0$ where $a_{\xi_0}^+ \neq 4M+1$ and $a_{\xi_0}^+ \neq4j$ for all $j$. Consider the value of $a_{\xi_0}^+$:
    \begin{itemize}
        \item $a_{\xi_0}^+ = 4j+1$ for some $j< M$. We know $y_{4j+1}=0, \alpha_{4j+1}=\textrm{gold}$. Let $f'_i$ be identical to $f^*_i$ except changing $a_{\xi_0}^+ = 4j$, then the payoff achieved by $f'_i$ is $u'_i = u^*_i + r_g(x'_{4j+1}) > u^*_i$, and $k' \leq k$.
        \item $a_{\xi_0}^+ = 4j+2$ for some $j$. We know $y_{4j+2}=1, \alpha_{4j+2}=\textrm{mine}$. Let $f'_i$ be identical to $f^*_i$ except changing $a_{\xi_0}^+ = 4j+1$, then the payoff achieved by $f'_i$ is $u'_i = u^*_i - \mu > u^*_i$, and $k' \leq k$.
        \item $a_{\xi_0}^+ = 4j+3$ for some $j$. We know $y_{4j+4}=1, \alpha_{4j+4}=\textrm{gold}$. Let $f'_i$ be identical to $f^*_i$ except changing $a_{\xi_0}^+ = 4j+4$, then the payoff achieved by $f'_i$ is $u'_i = u^*_i + r_g(x'_{4j+4}) > u^*_i$, and $k' \leq k$.
    \end{itemize}
\end{enumerate}
All the cases imply that $f^*_i$ cannot be optimal within $\strategyLevel{k_0}$, which is a contradiction. Therefore $f^*_i$ must satisfy the given condition.
\end{proof}

\begin{restatable}{lemma}{form}
\label{lemma:form}
Following \cref{lemma:optima}, if the optimal strategy $f^*_i \in \mathcal{F}_k$ where $k \leq 2M+1$, then it must has the following properties:
\begin{itemize}
    \item If $k=2c+1$ (odd $k$), then $f_i^*$ satisfies either condition $S_1$ or condition $S_2$, and $f_i^*$ always covers $M + c + 1$ gold and $M - c$ mines.
    \begin{itemize}
        \item $S_1$: $f^*_i = \{[a_{\xi}^-, a_{\xi}^+]\}_{{\xi}=0}^c$
        where $a_0^- = 0, a_c^+=4M+1$,
        and $\{a_{\xi}^+ = 4j_{\xi}^+\}_{\xi=0}^{c-1}$,
        $\{a_{\xi}^- = 4j_{\xi}^-+3\}_{\xi=1}^{c}$,
        $\{j_{\xi}^- < j_{\xi}^+\}_{\xi=1}^{c-1}$,
        $\{j_{\xi}^+ \leq j_{\xi+1}^-\}_{\xi=0}^{c-1}$,
        $j_{c}^- < M$, $j_{0}^+\geq 0$
        for some $\{j_{\xi}^-\}_{\xi=1}^{c}$
        and $\{j_{\xi}^+\}_{\xi=0}^{c-1}$.
        \item $S_2$: $f^*_i = \{[a_{\xi}^-, a_{\xi}^+]\}_{{\xi}=0}^{c-1}$
        where $\{a_{\xi}^- = 4j_{\xi}^-+3\}_{\xi=0}^{c-1}$,
        $\{a_{\xi}^+ = 4j_{\xi}^+\}_{\xi=0}^{c-1}$,
        $\{j_{\xi}^- < j_{\xi}^+\}_{\xi=0}^{c-1}$,
        $\{j_{\xi}^+ \leq j_{\xi+1}^-\}_{\xi=0}^{c-2}$,
        $j_{c-1}^+ \leq M$, $j_{0}^-\geq 0$
        for some $\{j_{\xi}^-\}_{\xi=0}^{c-1}$
        and $\{j_{\xi}^+\}_{\xi=0}^{c-1}$.
    \end{itemize}

    \item If $k=2c$ (even $k$), then either $f_i^*$ satisfies condition $S_3$, in which case $f_i^*$ always covers $M+c+1$ gold and $M-c+1$ mines, or $f_i^*$ satisfies condition $S_4$, in which case $f_i^*$ always covers $M+c$ gold and $M-c$ mines.
    \begin{itemize}
        \item $S_3$: $f^*_i = \{[a_{\xi}^-, a_{\xi}^+]\}_{{\xi}=0}^{c-1}$
        where $a_0^- = 0$,
        and $\{a_{\xi}^+ = 4j_{\xi}^+\}_{\xi=0}^{c-1}$,
        $\{a_{\xi}^- = 4j_{\xi}^-+3\}_{\xi=1}^{c-1}$,
        $\{j_{\xi}^- < j_{\xi}^+\}_{\xi=1}^{c-1}$,
        $\{j_{\xi}^+ \leq j_{\xi+1}^-\}_{\xi=0}^{c-2}$,
        $j_{c-1}^+ \leq M$, $j_{0}^+\geq 0$
        for some $\{j_{\xi}^-\}_{\xi=1}^{c-1}$
        and $\{j_{\xi}^+\}_{\xi=0}^{c-1}$.
        \item $S_4$: $f^*_i = \{[a_{\xi}^-, a_{\xi}^+]\}_{{\xi}=0}^{c-1}$
        where $a_{c-1}^+ = 4M+1$,
        and $\{a_{\xi}^+ = 4j_{\xi}^+\}_{\xi=0}^{c-2}$,
        $\{a_{\xi}^- = 4j_{\xi}^-+3\}_{\xi=0}^{c-1}$,
        $\{j_{\xi}^- < j_{\xi}^+\}_{\xi=0}^{c-2}$,
        $\{j_{\xi}^+ \leq j_{\xi+1}^-\}_{\xi=0}^{c-2}$,
        $j_{0}^- \geq 0$, $j_{c-1}^- < M$
        for some $\{j_{\xi}^-\}_{\xi=0}^{c-1}$
        and $\{j_{\xi}^+\}_{\xi=0}^{c-2}$.
    \end{itemize}
\end{itemize}

\end{restatable}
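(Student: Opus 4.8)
The plan is to read the four forms $S_1$--$S_4$ almost directly off \cref{lemma:optima} together with the description of $\mathcal{F}_k$ in the Preliminaries, and then to pin down the resource counts by a short telescoping computation.

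First I would make the alternating layout explicit. From the case analysis in the proof of \cref{lemma:optima} one reads off that, indexing resources by $t\in[0,4M+1]$, the $t$-th resource is a gold on line $y=1$ when $t\equiv 0$, a gold on line $y=0$ when $t\equiv 1$, a mine on line $y=1$ when $t\equiv 2$, and a mine on line $y=0$ when $t\equiv 3$ (all mod $4$). Thus line $y=0$ carries its $M+1$ golds at the positions $t\equiv 1$ and its $M$ mines at $t\equiv 3$, and symmetrically for line $y=1$; this is the only structural fact about the layout we will use.

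Next, since $f^*_i$ is optimal, \cref{lemma:optima} says every left endpoint of its canonical intervals is $0$ or $\equiv 3\pmod 4$ and every right endpoint is $4M+1$ or $\equiv 0\pmod 4$. I would then fix the parity of $k$ and split $\mathcal{F}_k$ into its two sub-families exactly as in the Preliminaries (for $k=2c+1$: either $a_0^-=0$ and $a_c^+=4M+1$, or $a_0^->0$ and $a_{c-1}^+<4M+1$; for $k=2c$, the analogous split). In each of these four situations, the value $0$ can occur only as a leading $a_0^-$ and $4M+1$ only as a trailing right endpoint, since every interior left endpoint is strictly positive and every interior right endpoint is strictly below $4M+1$. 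Hence by \cref{lemma:optima} every interior left endpoint is of the form $4j^-_\xi+3$ and every interior right endpoint is of the form $4j^+_\xi$; substituting these and rewriting the canonical-form conditions ($a^-_\xi\le a^+_\xi$, $a^-_{\xi+1}-a^+_\xi\ge 2$, $a^-_0\ge 0$, trailing right endpoint $\le 4M+1$) as inequalities on the $j$'s reproduces exactly $S_1,S_2$ for odd $k$ and $S_3,S_4$ for even $k$. The hypothesis $k\le 2M+1$ is used here: the resulting chain of $j$'s must increase strictly at least $c-1$ times while staying inside $[0,M)$, which forces $c\le M$, i.e.\ $k\le 2M+1$; conversely $c\le M$ makes such a chain realizable, so $k\le 2M+1$ is precisely the feasible range for these forms.

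Finally I would establish the gold/mine counts. The key point is that, thanks to the parity constraints, each maximal run of $f^*_i$ contributes to each count an affine function of the two $j$-parameters that delimit it: an $f=1$ run on $[4p+3,4q]$ covers the $q-p$ golds at $4p+4,\dots,4q$ and the $q-p-1$ intervening mines, an $f=0$ run on $[4q+1,4r+2]$ covers $r-q+1$ golds and $r-q$ mines, and the two boundary runs abutting $t=0$ or $t=4M+1$ obey the same formulas under the natural convention that the missing parameter takes its extreme value. Summing over all runs, every interior parameter appears once with each sign and cancels, so the totals collapse to expressions in $M$ and $c$ only; carrying this out yields $M+c+1$ golds and $M-c$ mines for $S_1$ and $S_2$, $M+c+1$ golds and $M-c+1$ mines for $S_3$, and $M+c$ golds and $M-c$ mines for $S_4$. (Equivalently, one can induct on $c$: inserting one more descent from line $y=1$ to line $y=0$ and back, or its mirror image, always adds exactly one gold and removes exactly one mine, and the base cases $c\in\{0,1\}$ are checked by hand.) I expect this last step to be the main obstacle, because it requires careful bookkeeping of the three kinds of run (leading, interior, trailing), of the two orientations ($S_1,S_3$ pin $a^-_0=0$ whereas $S_2,S_4$ pin the trailing right endpoint), and of the degenerate case where an interval collapses and must be dropped from the canonical form; everything else is a routine dictionary between endpoint conditions and inequalities on the $j$'s.
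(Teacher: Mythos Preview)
Your proposal is correct and follows essentially the same approach as the paper: derive $S_1$--$S_4$ by combining the endpoint constraints from \cref{lemma:optima} with the canonical-form description of $\mathcal{F}_k$, then count gold and mines segment by segment via a telescoping sum in the $j$-parameters. The paper carries out exactly this case split (four paragraphs 1.1, 1.2, 2.1, 2.2) and the same telescoping count; your remark about the role of $k\le 2M+1$ and the alternative inductive count are extra observations, but the core argument is identical.
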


\begin{proof}
We prove for each case.

\noindent\textbf{1. $k=2c+1$ (odd $k$)}

Here, $f^*_i$ is either of form $\{[a_{\xi}^-, a_{\xi}^+]\}_{{\xi}=0}^c$ where $a_0^- = 0$ and $a_c^+=4M+1$, or of form $\{[a_{\xi}^-, a_{\xi}^+]\}_{{\xi}=0}^{c-1}$ where $a_0^- > 0$ and $a_{c-1}^+ < 4M+1$.

\paragraph{1.1}
If $f^*_i$ is of form $\{[a_{\xi}^-, a_{\xi}^+]\}_{{\xi}=0}^c$ where $a_0^- = 0$ and $a_c^+=4M+1$, then according to \cref{lemma:optima}, we have $\{a_{\xi}^+ = 4j_{\xi}^+\}_{\xi=0}^{c-1}$ and $\{a_{\xi}^- = 4j_{\xi}^-+3\}_{\xi=1}^{c}$ for some $\{j_{\xi}^-\}_{\xi=1}^{c}$ and $\{j_{\xi}^+\}_{\xi=0}^{c-1}$. And since $f^*_i$ is canonical, we have $\{j_{\xi}^- < j_{\xi}^+\}_{\xi=1}^{c-1}$, $\{j_{\xi}^+ \leq j_{\xi+1}^-\}_{\xi=0}^{c-1}$, $j_{c}^- < M$, $j_{0}^+\geq 0$. Therefore, in this case, $f^*_i$ satisfies condition $S_1$.

Now consider the number of gold and mines covered by $f^*_i$ that satisfies condition $S_1$. First consider $c\geq 1$.
Segment $[a_{\xi}^-, a_{\xi}^+]$ (where $f^*_i$ has value 1) covers $j_{\xi}^+ - j_{\xi}^-$ gold and $j_{\xi}^+ - j_{\xi}^- - 1$ mines, for $\xi=1,\dots,c-1$.
Segment $[a_{\xi-1}^+ + 1, a_{\xi}^- - 1]$ (where $f^*_i$ has value 0) covers $j_{\xi}^- - j_{\xi-1}^+ + 1$ gold and $j_{\xi}^- - j_{\xi-1}^+$ mines, for $\xi=1,\dots,c$.
Segment $[a_{0}^-, a_{0}^+]$ covers $j_0^+ + 1$ gold and $j_0^+$ mines, and segment $[a_{c}^-, a_{c}^+]$ covers $M - j_c^-$ gold and $M - j_c^- - 1$ mines.
Therefore, the number of gold covered by $f_i^*$ is
\begin{equation*}
    j_0^+ + 1 + \sum_{\xi=1}^{c-1} (j_{\xi}^+ - j_{\xi}^-) + \sum_{\xi=1}^{c} (j_{\xi}^- - j_{\xi-1}^+ + 1) + M - j_c^- = M + c + 1,
\end{equation*}
and the number of mines covered by $f_i^*$ is
\begin{equation*}
    j_0^+ + \sum_{\xi=1}^{c-1} (j_{\xi}^+ - j_{\xi}^- - 1) + \sum_{\xi=1}^{c} (j_{\xi}^- - j_{\xi-1}^+) + M - j_c^- - 1 = M - c.
\end{equation*}
It's straightforward to check that the above expressions also hold for $c=0$.

\paragraph{1.2}
If $f^*_i$ is of form $\{[a_{\xi}^-, a_{\xi}^+]\}_{{\xi}=0}^{c-1}$ where $a_0^- > 0$ and $a_{c-1}^+ < 4M+1$, then similar to case 1.1, \cref{lemma:optima} and the fact that $f^*_i$ is canonical imply that $f^*_i$ satisfies condition $S_2$.

Consider the number of gold and mines covered by $f^*_i$ that satisfies condition $S_2$.
Segment $[a_{\xi}^-, a_{\xi}^+]$ (where $f^*_i$ has value 1) covers $j_{\xi}^+ - j_{\xi}^-$ gold and $j_{\xi}^+ - j_{\xi}^- - 1$ mines, for $\xi=0,\dots,c-1$.
Segment $[a_{\xi-1}^+ + 1, a_{\xi}^- - 1]$ (where $f^*_i$ has value 0) covers $j_{\xi}^- - j_{\xi-1}^+ + 1$ gold and $j_{\xi}^- - j_{\xi-1}^+$ mines, for $\xi=1,\dots,c-1$.
Segment $[0, a_0^- - 1]$ (where $f_i^*$ has value 0) covers $j_0^- + 1$ gold and $j_0^-$ mines, and segment $[a_{c-1}^+ + 1, 4M+1]$ (where $f_i^*$ has value 0) covers $M - j_{c-1}^+ + 1$ gold and $M - j_{c-1}^+$ mines.
Therefore, the number of gold covered by $f_i^*$ is
\begin{equation*}
    j_0^- + 1 + \sum_{\xi=0}^{c-1} (j_{\xi}^+ - j_{\xi}^-) + \sum_{\xi=1}^{c-1} (j_{\xi}^- - j_{\xi-1}^+ + 1) + M - j_{c-1}^+ + 1 = M + c + 1,
\end{equation*}
and the number of mines covered by $f_i^*$ is
\begin{equation*}
    j_0^- + \sum_{\xi=0}^{c-1} (j_{\xi}^+ - j_{\xi}^- - 1) + \sum_{\xi=1}^{c-1} (j_{\xi}^- - j_{\xi-1}^+) + M - j_{c-1}^+ = M - c.
\end{equation*}
\\
\textbf{2. $k=2c$ (even $k$)}

Here, $f^*_i$ is either of form $\{[a_{\xi}^-, a_{\xi}^+]\}_{{\xi}=0}^{c-1}$ where $a_0^- = 0$ and $a_{c-1}^+<4M+1$, or of form $\{[a_{\xi}^-, a_{\xi}^+]\}_{{\xi}=0}^{c-1}$ where $a_0^- > 0$ and $a_{c-1}^+ = 4M+1$.

\paragraph{2.1}
If $f^*_i$ is of form $\{[a_{\xi}^-, a_{\xi}^+]\}_{{\xi}=0}^{c-1}$ where $a_0^- = 0$ and $a_{c-1}^+<4M+1$, then similar to case 1.1, \cref{lemma:optima} and the fact that $f^*_i$ is canonical imply that $f^*_i$ satisfies condition $S_3$.

Consider the number of gold and mines covered by $f^*_i$ that satisfies condition $S_3$.
Segment $[a_{\xi}^-, a_{\xi}^+]$ (where $f^*_i$ has value 1) covers $j_{\xi}^+ - j_{\xi}^-$ gold and $j_{\xi}^+ - j_{\xi}^- - 1$ mines, for $\xi=1,\dots,c-1$.
Segment $[a_{\xi-1}^+ + 1, a_{\xi}^- - 1]$ (where $f^*_i$ has value 0) covers $j_{\xi}^- - j_{\xi-1}^+ + 1$ gold and $j_{\xi}^- - j_{\xi-1}^+$ mines, for $\xi=1,\dots,c-1$.
Segment $[a_0^-, a_0^+]$ (where $f_i^*$ has value 1) covers $j_0^+ + 1$ gold and $j_0^+$ mines, and segment $[a_{c-1}^+ + 1, 4M+1]$ (where $f_i^*$ has value 0) covers $M - j_{c-1}^+ + 1$ gold and $M - j_{c-1}^+$ mines.
Therefore, the number of gold covered by $f_i^*$ is
\begin{equation*}
    j_0^+ + 1 + \sum_{\xi=1}^{c-1} (j_{\xi}^+ - j_{\xi}^-) + \sum_{\xi=1}^{c-1} (j_{\xi}^- - j_{\xi-1}^+ + 1) + M - j_{c-1}^+ + 1 = M + c + 1,
\end{equation*}
and the number of mines covered by $f_i^*$ is
\begin{equation*}
    j_0^+ + \sum_{\xi=1}^{c-1} (j_{\xi}^+ - j_{\xi}^- - 1) + \sum_{\xi=1}^{c-1} (j_{\xi}^- - j_{\xi-1}^+) + M - j_{c-1}^+ = M - c + 1.
\end{equation*}

\paragraph{2.2}
If $f^*_i$ is of form $\{[a_{\xi}^-, a_{\xi}^+]\}_{{\xi}=0}^{c-1}$ where $a_0^- > 0$ and $a_{c-1}^+ = 4M+1$, then similar to case 1.1, \cref{lemma:optima} and the fact that $f^*_i$ is canonical imply that $f^*_i$ satisfies condition $S_4$.

Consider the number of gold and mines covered by $f^*_i$ that satisfies condition $S_4$.
Segment $[a_{\xi}^-, a_{\xi}^+]$ (where $f^*_i$ has value 1) covers $j_{\xi}^+ - j_{\xi}^-$ gold and $j_{\xi}^+ - j_{\xi}^- - 1$ mines, for $\xi=0,\dots,c-2$.
Segment $[a_{\xi-1}^+ + 1, a_{\xi}^- - 1]$ (where $f^*_i$ has value 0) covers $j_{\xi}^- - j_{\xi-1}^+ + 1$ gold and $j_{\xi}^- - j_{\xi-1}^+$ mines, for $\xi=1,\dots,c-1$.
Segment $[0, a_0^- - 1]$ (where $f_i^*$ has value 0) covers $j_0^- + 1$ gold and $j_0^-$ mines, and segment $[a_{c-1}^-, a_{c-1}^+]$ (where $f_i^*$ has value 1) covers $M - j_{c-1}^-$ gold and $M - j_{c-1}^- - 1$ mines.
Therefore, the number of gold covered by $f_i^*$ is
\begin{equation*}
    j_0^- + 1 + \sum_{\xi=0}^{c-2} (j_{\xi}^+ - j_{\xi}^-) + \sum_{\xi=1}^{c-1} (j_{\xi}^- - j_{\xi-1}^+ + 1) + M - j_{c-1}^- = M + c,
\end{equation*}
and the number of mines covered by $f_i^*$ is
\begin{equation*}
    j_0^- + \sum_{\xi=0}^{c-2} (j_{\xi}^+ - j_{\xi}^- - 1) + \sum_{\xi=1}^{c-1} (j_{\xi}^- - j_{\xi-1}^+) + M - j_{c-1}^- - 1 = M - c.
\end{equation*}

\end{proof}

\begin{restatable}{lemma}{payoff}
\label{lemma:payoff}
If one player (call it player A) covers $g_A$ gold, player B covers $g_B$ gold and $m_B$ mines, and $g_A + g_B \geq 2M + 2$, then there is an upper bound on player B's payoff:
\begin{equation*}
u_B \leq (1-\rho) (2M+2 - g_A) + \rho g_B + \mu m_B
\end{equation*}
\end{restatable}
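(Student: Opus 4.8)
The plan is to compute player B's payoff exactly in terms of the number of gold the two players cover in common, and then bound that overlap using the fact that there are only $2M+2$ gold in the entire game.

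First I would observe that, since there are only $n=2$ players, every resource has load $1$ or $2$. Hence each gold covered by B contributes either $r_g(1)=1$ (if B is the only one covering it) or $r_g(2)=\rho$ (if A also covers it), and each mine covered by B contributes $r_m(1)=r_m(2)=\mu$ regardless of its load. Writing $k$ for the number of gold covered by both A and B, this gives the exact identity $u_B = (g_B-k)\cdot 1 + k\rho + m_B\mu = g_B - (1-\rho)k + \mu m_B$.

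Next I would lower-bound $k$. Let $A_g$ and $B_g$ be the sets of gold covered by A and B, so $|A_g|=g_A$, $|B_g|=g_B$, and $k=|A_g\cap B_g|$. Since each of the two lines carries exactly $M+1$ gold, the game has $2M+2$ gold in total, so $|A_g\cup B_g|\le 2M+2$. By inclusion--exclusion, $k = g_A+g_B-|A_g\cup B_g|\ge g_A+g_B-(2M+2)$, and the hypothesis $g_A+g_B\ge 2M+2$ guarantees this lower bound is non-negative, so it is the binding constraint (as opposed to the trivial $k\ge 0$).

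Finally, since $0<\rho<1$ we have $1-\rho>0$, so replacing $k$ by its lower bound in the identity for $u_B$ can only increase the right-hand side: $u_B \le g_B - (1-\rho)\bigl(g_A+g_B-(2M+2)\bigr) + \mu m_B$. Expanding and collecting the $g_B$ terms yields $u_B \le \rho g_B + (1-\rho)(2M+2-g_A) + \mu m_B$, which is exactly the claimed bound. I do not expect any real obstacle here; the only points needing care are that the load on any resource is exactly $1$ or $2$ (so that a covered gold is worth precisely $1$ or $\rho$ to B), and that the sign $1-\rho>0$ makes the substitution of the $k$-bound go in the direction that preserves the inequality.
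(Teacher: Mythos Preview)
Your proof is correct and follows essentially the same argument as the paper: let $d$ (your $k$) be the number of gold covered by both players, write $u_B = g_B - (1-\rho)d + \mu m_B$, bound $d \ge g_A + g_B - (2M+2)$ via the total gold count, and substitute using $1-\rho > 0$. If anything, you are slightly more explicit than the paper about why the inequality direction is preserved.
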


\begin{proof}
Among the gold covered by player B, denote the number of them also covered by player A as $d$. Since the total number of gold is $2M+2$, we have $g_A+g_B - d \leq 2M+2$, i.e. $d\geq g_A + g_B - 2M - 2$. Therefore,
\begin{align*}
s_B &= g_B - d + d\rho + m_B\mu \\
&\leq g_B - (1-\rho) (g_A + g_B - 2M - 2) + m_B\mu \\
&= (1-\rho) (2M+2 - g_A) + \rho g_B + \mu m_B.
\end{align*}
\end{proof}

We define that
a strategy achieves \textit{best coverage} if it covers all the gold that is not covered by the other player.

\begin{restatable}{lemma}{cover}
\label{lemma:cover}
Given one player (call it player A) covers $g_A$ gold, if a strategy $f_B$ for player B covers $g_B$ gold and $m_B$ mines and achieves best coverage, then any strategy $f_B'$ that covers $g_B'$ gold and $m_B'$ mines will achieve a lower payoff than $f_B$, if
\begin{equation*}
    g_B' \leq g_B \land m_B' > m_B, \textrm{ or } g_B' < g_B \land m_B' \geq m_B
\end{equation*}
\end{restatable}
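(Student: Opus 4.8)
The plan is to compare the two payoffs directly via a pigeonhole count. First I would write $u_B$ explicitly using the best-coverage hypothesis. Set $G \defeq 2M+2-g_A$, the number of gold that player A does not cover. Best coverage means $f_B$ covers all $G$ of these gold (each at load $1$, hence paying $r_g(1)=1$), plus $g_B-G$ further gold that A also covers (each at load $2$, hence paying $r_g(2)=\rho$), plus $m_B$ mines (each paying $\mu$); only loads $1$ and $2$ occur since there are $n=2$ players. Thus $u_B = G + (g_B-G)\rho + \mu m_B$, which is exactly the bound of \cref{lemma:payoff}.

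Next I would bound $u_B'$. Writing $d'$ for the number of gold covered by both A and $f_B'$, we have $u_B' = (g_B'-d') + d'\rho + \mu m_B' = g_B' - (1-\rho)d' + \mu m_B'$. Since at most $G$ gold lie outside A's coverage, $f_B'$ has at most $G$ gold that it covers exclusively, i.e. $g_B'-d' \le G$, so $d' \ge g_B'-G$. Substituting this lower bound (legitimate since $1-\rho>0$) and simplifying $u_B'-u_B$, the $G$-terms cancel and one is left with the clean inequality
\[ u_B' - u_B \;\le\; \rho\,(g_B'-g_B) + \mu\,(m_B'-m_B). \]

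Finally I would dispatch the two cases using $\rho>0$ and $\mu<0$. If $g_B' \le g_B$ and $m_B' > m_B$, the first term is $\le 0$ and the second is $\le \mu < 0$, so $u_B' < u_B$. If $g_B' < g_B$ and $m_B' \ge m_B$, the first term is $\le -\rho < 0$ and the second is $\le 0$, so again $u_B' < u_B$. In either case $f_B'$ is strictly worse, as claimed.

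This is essentially a counting argument, so I do not anticipate a genuine obstacle; the only point requiring care is substituting $d' \ge g_B'-G$ with the correct sign (it relaxes the term $-(1-\rho)d'$ upward) and verifying that the cancellation really produces the form displayed above, since that reduction is precisely what makes both cases immediate.
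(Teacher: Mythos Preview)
Your argument is correct and essentially identical to the paper's proof: you compute $u_B$ exactly from the best-coverage hypothesis, upper-bound $u_B'$ via the same pigeonhole count (which is precisely the content of \cref{lemma:payoff}, derived inline in your version), and then subtract to get $u_B'-u_B \le \rho(g_B'-g_B)+\mu(m_B'-m_B)$ before dispatching the two cases. The only cosmetic difference is that the paper invokes \cref{lemma:payoff} for the bound on $u_B'$ whereas you re-derive it directly.
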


\begin{proof}
Since $f_B$ achieves best coverage, it covers $2M+2 - g_A$ gold that is not covered by player A, and $g_B + g_A - 2M -2$ gold that is covered by A. So the payoff achieved by $f_B$ is
\begin{align*}
    u_B &= 2M+2 - g_A + \rho (g_B + g_A - 2M -2) + m_B\mu \\
    &= (1-\rho) (2M+2 - g_A) + \rho g_B + \mu m_B.
\end{align*}
Consider the payoff of $f_B'$. By \cref{lemma:payoff},
\begin{equation*}
    u_B' \leq (1-\rho) (2M+2 - g_A) + \rho g_B' + \mu m_B'.
\end{equation*}
Since $\rho > 0$ and $\mu < 0$, we can see that if $ g_B' \leq g_B \land m_B' > m_B$, or $g_B' < g_B \land m_B' \geq m_B$,
\begin{equation*}
    u_B' < (1-\rho) (2M+2 - g_A) + \rho g_B + \mu m_B = u_B.
\end{equation*}

\end{proof}

\begin{restatable}{lemma}{exact}
\label{lemma:exact}
If both players' strategy space is $\strategyLevel{b}$ ($b\leq 2M + 2$), then for all PNE $(f^*_1, f^*_2)$, $f^*_1, f^*_2\in \mathcal{F}_b$, i.e. both strategies in the equilibria must use exactly $b$ segments.
\end{restatable}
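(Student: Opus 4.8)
The plan is to prove the contrapositive for a single player and then appeal to the symmetry between the two players. Fix a PNE $(f_1^*,f_2^*)$ and suppose, for contradiction, that $f_1^*$ uses $k_1<b$ segments. Since $f_1^*$ is a best response within $\strategyLevel{b}$ and $\strategyLevel{k_1}\subseteq\strategyLevel{k_1+1}\subseteq\strategyLevel{b}$, the strategy $f_1^*$ is in particular an optimal response within $\strategyLevel{k_1}$, so \cref{lemma:optima} applies to it; moreover $k_1\le b-1\le 2M+1$, so \cref{lemma:form} applies as well, and $f_1^*$ must have one of the normal forms $S_1,S_2,S_3,S_4$ with the gold/mine counts recorded there. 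The goal is then to exhibit a strategy $f_1'\in\strategyLevel{k_1+1}\cup\strategyLevel{k_1+2}\subseteq\strategyLevel{b}$ (the second inclusion uses $k_1\le b-1\le 2M+1$ once more, so $k_1+2\le 2M+2$ whenever $k_1\le b-2$) with $u_1(f_1',f_2^*)>u_1(f_1^*,f_2^*)$, contradicting that $f_1^*$ is a best response; this forces $k_1=b$, and by the same argument $k_2=b$.

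For the construction I would first dispose of the forms in which one of the two extreme resources is left uncovered by $f_1^*$. Recall the alternating layout: the resource with the smallest $x$ (index $0$) is a gold on the upper line and the one with the largest $x$ (index $4M+1$) is a gold on the lower line. For $S_2$ and $S_4$ the condition $a_0^->0$ gives $f_1^*(0)=0$, and for $S_1$ the condition $a_c^+=4M+1$ gives $f_1^*(4M+1)=1$, so in each of these three cases a corresponding extreme gold is missed; inserting a single length-one segment there (value $1$ on $[0,0]$, resp. value $0$ on $[4M+1,4M+1]$) adds exactly one segment, covers that extra gold, and changes nothing else about the covered set, so the payoff strictly increases because $r_g>0$. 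The only remaining form is $S_3$ ($a_0^-=0$ and $a_{c-1}^+<4M+1$, so both extreme golds are already covered). If in addition $k_1\le b-2$ and some ``value $0$'' gap of $f_1^*$ contains a gold on the upper line that $f_1^*$ misses, I would insert a length-two ``value $1$'' bump $[4j-1,4j]$ into that gap: this uses two extra segments, simultaneously grabs that gold and escapes the mine at index $4j-1$, and strictly helps since $r_g>0>r_m$ and nothing else changes.

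The residual case — $f_1^*$ of form $S_3$ with $k_1=b-1$ (so $b$ is odd and we have room for only one extra segment), together with the degenerate $S_3$ instances in which no ``value $0$'' gap contains a missed upper-line gold — is where the real work is, and I expect it to be the main obstacle. Here I would reshape the trailing ``value $0$'' run into a short ``value $1$'' suffix run (adding one segment, keeping the gold count equal to $M+c+1$ and dropping the mine count to $M-c$, matching the prediction of \cref{lemma:form} for a $(k_1+1)$-segment optimum), or, when that reshaping only swaps golds unfavorably, shorten the leading ``value $1$'' run by one segment so as to drop a gold in exchange for escaping a mine without adding any segment. In every sub-case one must track the loads that $f_2^*$ induces on the two or three resources whose coverage actually flips, and argue the net payoff change is positive using the competition relation $r_g(2)=\rho<1=r_g(1)$, the best-coverage bounds of \cref{lemma:payoff} and \cref{lemma:cover}, and the standing hypothesis $-2+\rho<\mu<-\rho$ of \cref{thm:aog}; whenever the no-extra-segment move already beats $f_1^*$ this contradicts optimality of $f_1^*$ even within $\strategyLevel{k_1}$. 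All the other steps are routine bookkeeping with the four normal forms.
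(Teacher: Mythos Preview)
Your approach is genuinely different from the paper's, and the easy cases ($S_1$, $S_2$, $S_4$) are handled cleanly: appending a single segment to grab an uncovered extreme gold is a strict improvement irrespective of $f_2^*$, so $f_1^*$ could not be a best response in $\strategyLevel{b}$. That part is fine and is arguably slicker than what the paper does.

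The real gap is exactly where you flag it, the $S_3$ case with $k_1=b-1$, and your sketch does not close it. Your ``trailing reshape'' swaps $M-j'$ golds on $y=0$ for the same number on $y=1$ while escaping one mine; with $j'=M-1$ the payoff change is $r_g(1+[f_2^*(4M)=1])-r_g(1+[f_2^*(4M+1)=0])-\mu$, whose worst case is $\rho-1-\mu$. Under the hypothesis $-2+\rho<\mu<-\rho$ this can be negative (take $\rho=0.2,\ \mu=-0.3$), so the reshape need not be an improvement. Your fallback ``shorten the leading value-$1$ run'' is too vague: shrinking $[0,4j_0^+]$ by a multiple of $4$ swaps a gold for a gold \emph{and} a mine for a mine, not a gold for a mine, and no single-segment local edit of an $S_3$ strategy produces the pattern you describe while staying in one of the forms of \cref{lemma:form}. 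So in this sub-case you have not exhibited a strategy in $\strategyLevel{b}$ that beats $f_1^*$ for \emph{every} $f_2^*$. (Separately, the ``degenerate $S_3$ with $k_1\le b-2$'' case you lump into the residual is actually easy: since $k_1<2M+2$ some value-$1$ interval is long enough to accept a value-$0$ bump $[4j+1,4j+2]$, which grabs a $y=0$ gold and escapes a $y=1$ mine---the mirror of your own bump trick.)

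The paper circumvents exactly this difficulty by arguing about the \emph{other} player and by induction on $b$. In the inductive step at level $k$ it first shows that if $f_1^*\in\mathcal F_k$ then against $f_1^*$ there is a $k$-segment $\hat f_2$ achieving best coverage, and \cref{lemma:cover} forces any $(k'<k)$-segment response to be strictly worse; hence $f_2^*\in\mathcal F_k$. Then it rules out both players using fewer than $k$ segments: by the inductive hypothesis both would lie in $\mathcal F_{k-1}$, and again a tailor-made $k$-segment $\hat f_2$ (built from the known $S_1$--$S_4$ form of $f_1^*$) beats $f_2^*$ via \cref{lemma:cover}. The point is that the improvement is constructed for player~2 \emph{against a fixed, fully classified $f_1^*$}, so one never has to reason about arbitrary opponent loads as your residual case requires. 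Your direct route can likely be completed, but only by importing essentially this best-coverage comparison for the $S_3$/$k_1=b-1$ sub-case; as written, the proposal has a genuine hole there.
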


\begin{proof}
We prove by induction.

\paragraph{Base case}
For $b=1$, there is only two possible strategies in $\strategyLevel{1}$: $f^0 = \{\}$ and $f^1 = \{[0, 4M+1]\}$. Both uses exactly 1 segment. So the statement holds.

\paragraph{Induction step}
Consider the case $b=k$. First we show that if one of the strategies in a PNE uses exactly $k$ segments, then the other strategy must also use exactly $k$ segments. Without loss of generality, let $f_1^* \in \mathcal{F}_k$.
\\
\\
\noindent\textbf{1. $k=2c+1$ (odd $k$)}

By \cref{lemma:form}, $f_1^*$ must satisfy condition $S_1$ or $S_2$.

\paragraph{1.1} If $f_1^*$ satisfies condition $S_1$, let $f^*_1 = \{[a_{\xi}^-, a_{\xi}^+]\}_{{\xi}=0}^c$
where $a_0^- = 0, a_c^+=4M+1$,
$\{a_{\xi}^+ = 4j_{\xi}^+\}_{\xi=0}^{c-1}$,
$\{a_{\xi}^- = 4j_{\xi}^-+3\}_{\xi=1}^{c}$, and $f_1^*$ covers $M+c+1$ gold and $M-c$ mines.
We construct $\hat{f}_2 = \{[\hat{a}_{\xi}^-, \hat{a}_{\xi}^+]\}_{{\xi}=0}^{c-1}$ according to $f_1^*$ by setting $\{\hat{a}_{\xi}^- = 4j_{\xi}^+ + 3, \hat{a}_{\xi}^+ = 4 \cdot  \max(j_{\xi+1}^-, j_{\xi}^+ + 1)\}_{\xi=0}^{c-1}$ (note that here $j_{\xi}^+$ and $j_{\xi}^-$ are the values used by $f_1^*$). It's easy to check that $\hat{f}_2$ satisfies condition $S_2$ and covers $g_2=M+c+1$ gold and $m_2=M-c$ mines.
In particular, among the gold covered by $\hat{f}_2$, $2c$ of them are also covered by $f_1^*$, and $M-c+1$ of them are covered by $\hat{f}_2$ only.
Therefore, $\hat{f}_2$ achieves best coverage. $\hat{f}_2$ achieves a payoff of
\begin{equation*}
    \hat{u}_2 = M - c + 1 + 2c\rho + (M - c)\mu.
\end{equation*}

We show here that any $f'_2 \in \mathcal{F}_{k'}$ where $k' < k$ will achieve a payoff $u'_2 < \hat{u}_2$, therefore $f_2^*$ must use exactly $k$ segments. If $k' = 2c' + 1$, then $c' \leq c-1$, and by \cref{lemma:form}, $f'_2$ covers $g_2' = M+c'+1$ gold and $m_2' = M-c'$ mines. We have $g_2' < g_2$ and $m_2' > m_2$. Therefore by \cref{lemma:cover}, $u'_2 < \hat{u}_2$.

If $k' = 2c'$, then $c'\leq c$. By \cref{lemma:form}, $f'_2$ either covers $g_2' = M+c'+1$ gold and $m_2' = M-c'+1$ mines, in which case $g_2' \leq g_2$ and $m_2' > m_2$, or $g_2' = M+c'$ gold and $m_2' = M-c'$ mines, in which case $g_2' < g_2$ and $m_2' \geq m_2$. Therefore by \cref{lemma:cover}, $u'_2 < \hat{u}_2$.

\paragraph{1.2} By symmetry, the above proof also applies to the case where $f_1^*$ satisfies condition $S_2$ (symmetry with respect to inverting the direction of $x$ and $y$ axis).
\\
\\
\noindent\textbf{2. $k=2c$ (even $k$), $c\leq M$}

By \cref{lemma:form}, $f_1^*$ must satisfy condition $S_3$ or $S_4$.

\paragraph{2.1} If $f_1^*$ satisfies condition $S_3$, let $f^*_1 = \{[a_{\xi}^-, a_{\xi}^+]\}_{{\xi}=0}^{c-1}$
where $a_0^- = 0$,
and $\{a_{\xi}^+ = 4j_{\xi}^+\}_{\xi=0}^{c-1}$,
$\{a_{\xi}^- = 4j_{\xi}^-+3\}_{\xi=1}^{c-1}$, and $f_1^*$ covers $M+c+1$ gold and $M-c+1$ mines.
We construct $\hat{f}_2 = \{[\hat{a}_{\xi}^-, \hat{a}_{\xi}^+]\}_{{\xi}=0}^{c-1}$ according to $f_1^*$ by sequentially setting $\hat{a}_{0}^-, \hat{a}_{0}^+, \hat{a}_{1}^-, \hat{a}_{1}^+, \dots, \hat{a}_{c-1}^-, \hat{a}_{c-1}^+$ with
$\hat{a}_{0}^- = \max( 4j_{0}^+ - 1, 3)$,
$\{\hat{a}_{\xi}^- = \max( 4j_{\xi}^+ - 1, \hat{a}_{\xi-1}^+ + 3)\}_{\xi=1}^{c-1}$,
$\{\hat{a}_{\xi}^+ = \max( 4 \cdot j_{\xi+1}^-, \hat{a}_{\xi}^- + 1)\}_{\xi=0}^{c-2}$,
$\hat{a}_{c-1}^+ = 4M + 1$.
This $\hat{f}_2$ satisfies condition $S_4$ and covers $g_2 = M+c$ gold and $m_2 = M-c$ mines. In particular, among the gold covered by $\hat{f}_2$, $2c-1$ of them are also covered by $f_1^*$, and $M-c+1$ of them are covered by $\hat{f}_2$ only. Therefore, $\hat{f}_2$ achieves best coverage. $\hat{f}_2$ achieves a payoff of
\begin{equation*}
    \hat{u}_2 = M - c + 1 + (2c-1)\rho + (M - c)\mu.
\end{equation*}

We show here that any $f'_2 \in \mathcal{F}_{k'}$ where $k' < k$ will achieve a payoff $u'_2 < \hat{u}_2$, therefore $f_2^*$ must use exactly $k$ segments. If $k' = 2c' + 1$, then $c' \leq c - 1$, and by \cref{lemma:form}, $f'_2$ covers $g_2' = M+c'+1$ gold and $m_2' = M-c'$ mines. We have $g_2' \leq g_2$ and $m_2' > m_2$. Therefore by \cref{lemma:cover}, $u'_2 < \hat{u}_2$.

If $k' = 2c'$, then $c'\leq c-1$. By \cref{lemma:form}, $f'_2$ either covers $g_2' = M+c'+1$ gold and $m_2' = M-c'+1$ mines, in which case $g_2' \leq g_2$ and $m_2' > m_2$, or $g_2' = M+c'$ gold and $m_2' = M-c'$ mines, in which case $g_2' < g_2$ and $m_2' > m_2$. Therefore by \cref{lemma:cover}, $u'_2 < \hat{u}_2$.

\paragraph{2.2} If $f_1^*$ satisfies condition $S_4$, let $f^*_1 = \{[a_{\xi}^-, a_{\xi}^+]\}_{{\xi}=0}^{c-1}$
where $a_{c-1}^+ = 4M+1$,
and $\{a_{\xi}^+ = 4j_{\xi}^+\}_{\xi=0}^{c-2}$,
$\{a_{\xi}^- = 4j_{\xi}^-+3\}_{\xi=0}^{c-1}$, and $f_1^*$ covers $M+c$ gold and $M-c$ mines.
We construct $\hat{f}_2 = \{[\hat{a}_{\xi}^-, \hat{a}_{\xi}^+]\}_{{\xi}=0}^{c-1}$ according to $f_1^*$ by setting $\hat{a}_{0}^- = 0, \hat{a}_{0}^+ = 4j_0^-, \{\hat{a}_{\xi}^- = 4j_{\xi-1}^+ + 3, \hat{a}_{\xi}^+ = 4 \cdot  \max(j_{\xi}^-, j_{\xi-1}^+ + 1)\}_{\xi=1}^{c-1}$.
This $\hat{f}_2$ satisfies condition $S_3$ and covers $g_2 = M+c+1$ gold and $m_2 = M-c+1$ mines. In particular, among the gold covered by $\hat{f}_2$, $2c-1$ of them are also covered by $f_1^*$, and $M-c+2$ of them are covered by $\hat{f}_2$ only. Therefore, $\hat{f}_2$ achieves best coverage. $\hat{f}_2$ achieves a payoff of
\begin{equation*}
    \hat{u}_2 = M - c + 2 + (2c-1)\rho + (M - c + 1)\mu.
\end{equation*}

We show here that any $f'_2 \in \mathcal{F}_{k'}$ where $k' < k$ will achieve a payoff $u'_2 < \hat{u}_2$, therefore $f_2^*$ must use exactly $k$ segments. If $k' = 2c' + 1$, then $c' \leq c - 1$, and by \cref{lemma:form}, $f'_2$ covers $g_2' = M+c'+1$ gold and $m_2' = M-c'$ mines. We have $g_2' < g_2$ and $m_2' \geq m_2$. Therefore by \cref{lemma:cover}, $u'_2 < \hat{u}_2$.

If $k' = 2c'$, then $c'\leq c-1$. By \cref{lemma:form}, $f'_2$ either covers $g_2' = M+c'+1$ gold and $m_2' = M-c'+1$ mines, in which case $g_2' < g_2$ and $m_2' > m_2$, or $g_2' = M+c'$ gold and $m_2' = M-c'$ mines, in which case $g_2' < g_2$ and $m_2' \geq m_2$. Therefore by \cref{lemma:cover}, $u'_2 < \hat{u}_2$.
\\
\\
\noindent\textbf{3. $k=2c$ (even $k$), $c = M+1$}

By \cref{lemma:form}, $f_1^*$ must satisfy condition $S_3$ or $S_4$. In fact, in this case, no function satisfies $S_4$, and there is only one function satisfies $S_3$, which is the function that covers all $2M+2$ gold and no mine.
Construct $\hat{f}_2$ to be the same as $f_1^*$, which covers all $g_2 = 2M+2$ gold and $m_2=0$ mine. Since $f_1^*$ already covers all gold, $\hat{f}_2$ trivially achieves best coverage.

We show here that any $f'_2 \in \mathcal{F}_{k'}$ where $k' < k$ will achieve a payoff $u'_2 < \hat{u}_2$, therefore $f_2^*$ must use exactly $k$ segments. If $k' = 2c' + 1$, then $c' \leq c - 1$, and by \cref{lemma:form}, $f'_2$ covers $g_2' = M+c'+1$ gold and $m_2' = M-c'$ mines. We have $g_2' < g_2$ and $m_2' \geq m_2$. Therefore by \cref{lemma:cover}, $u'_2 < \hat{u}_2$.

If $k' = 2c'$, then $c'\leq c-1$. By \cref{lemma:form}, $f'_2$ either covers $g_2' = M+c'+1$ gold and $m_2' = M-c'+1$ mines, in which case $g_2' < g_2$ and $m_2' > m_2$, or $g_2' = M+c'$ gold and $m_2' = M-c'$ mines, in which case $g_2' < g_2$ and $m_2' \geq m_2$. Therefore by \cref{lemma:cover}, $u'_2 < \hat{u}_2$.
\\
\\
Now we have shown that if one of the strategies in a PNE uses exactly $k$ segments, then the other strategy must also use exactly $k$ segments. What is left to show is that there is no PNE where both strategies use less than $k$ segments.

We prove by contradiction. Assume there is a PNE $(f_1^*, f_2^*)$ where both $f_1^*$ and $f_2^*$ use less than $k$ segments. By the induction hypothesis, $f_1^*, f_2^* \in \mathcal{F}_{k-1}$.
\\
\\
\noindent\textbf{1. $k-1=2c+1$ (even $k$), $c \leq M$}

By \cref{lemma:form}, $f_1^*$ must satisfy condition $S_1$ or $S_2$. If $f_1^*$ satisfies condition $S_1$, let $f^*_1 = \{[a_{\xi}^-, a_{\xi}^+]\}_{{\xi}=0}^c$
where $a_0^- = 0, a_c^+=4M+1$,
$\{a_{\xi}^+ = 4j_{\xi}^+\}_{\xi=0}^{c-1}$,
$\{a_{\xi}^- = 4j_{\xi}^-+3\}_{\xi=1}^{c}$.
We construct $\hat{f}_2 = \{[\hat{a}_{\xi}^-, \hat{a}_{\xi}^+]\}_{{\xi}=0}^{c}$ according to $f_1^*$ by setting $\hat{a}_{0}^- = 0, \hat{a}_{0}^+ = 0, \{\hat{a}_{\xi}^- = 4j_{\xi-1}^+ + 3, \hat{a}_{\xi}^+ = 4 \cdot  \max(j_{\xi}^-, j_{\xi-1}^+ + 1)\}_{\xi=1}^{c}$. It is easy to check that $\hat{f}_2$ uses $k$ segments, covers $g_2 = M+c+2$ gold and $m_2 = M - c$ mines, and achieves best coverage. By \cref{lemma:form}, $f_2^*$ covers $g_2^* = M+c+1$ gold and $m_2^* = M-c$ mines. Thus $g_2^* < g_2$ and $m_2^* \geq m_2$. By \cref{lemma:cover}, $u_2^* < \hat{u}_2$, therefore $(f_1^*, f_2^*)$ cannot be a PNE, contradiction.

By symmetry, the above proof also applies to the case where $f_1^*$ satisfies condition $S_2$ (symmetry with respect to inverting the direction of $x$ and $y$ axis).
\\
\\
\noindent\textbf{2. $k-1=2c$ (odd $k$), $c \leq M$}

By \cref{lemma:form}, $f_1^*$ must satisfy condition $S_3$ or $S_4$. If $f_1^*$ satisfies condition $S_3$, let $f^*_1 = \{[a_{\xi}^-, a_{\xi}^+]\}_{{\xi}=0}^{c-1}$
where $a_0^- = 0$,
and $\{a_{\xi}^+ = 4j_{\xi}^+\}_{\xi=0}^{c-1}$,
$\{a_{\xi}^- = 4j_{\xi}^-+3\}_{\xi=1}^{c-1}$.
We construct $\hat{f}_2 = \{[\hat{a}_{\xi}^-, \hat{a}_{\xi}^+]\}_{{\xi}=0}^{c}$ according to $f_1^*$ by sequentially setting $\hat{a}_{0}^-, \hat{a}_{0}^+, \hat{a}_{1}^-, \hat{a}_{1}^+, \dots, \hat{a}_{c}^-, \hat{a}_{c}^+$ with
$\hat{a}_{0}^- = 0$, $\hat{a}_{0}^+ = 0$,
$\hat{a}_{1}^- = \max( 4j_{0}^+ - 1, 3)$,
$\{\hat{a}_{\xi}^- = \max( 4j_{\xi-1}^+ - 1, \hat{a}_{\xi-1}^+ + 3)\}_{\xi=2}^{c}$,
$\{\hat{a}_{\xi}^+ = \max( 4 \cdot j_{\xi}^-, \hat{a}_{\xi}^- + 1)\}_{\xi=1}^{c-1}$,
$\hat{a}_{c}^+ = 4M + 1$. It is easy to check that $\hat{f}_2$ uses $k$ segments, covers $g_2 = M+c+1$ gold and $m_2 = M - c$ mines, and achieves best coverage.
By \cref{lemma:form}, $f_2^*$ either covers $g_2^* = M+c+1$ gold and $m_2^* = M-c+1$ mines, in which case $g_2^* \leq g_2$ and $m_2^* > m_2$, or $g_2^* = M+c$ gold and $m_2^* = M-c$ mines, in which case $g_2^* < g_2$ and $m_2^* \geq m_2$.
Therefore by \cref{lemma:cover}, $u_2^* < \hat{u}_2$, which means $(f_1^*, f_2^*)$ cannot be a PNE, contradiction.

If $f_1^*$ satisfies condition $S_4$, let $f^*_1 = \{[a_{\xi}^-, a_{\xi}^+]\}_{{\xi}=0}^{c-1}$
where $a_{c-1}^+ = 4M+1$,
and $\{a_{\xi}^+ = 4j_{\xi}^+\}_{\xi=0}^{c-2}$,
$\{a_{\xi}^- = 4j_{\xi}^-+3\}_{\xi=0}^{c-1}$. $f_1^*$ covers $g_1^* = M+c$ gold and $m_1^* = M-c$ mines.
We construct $\hat{f}_1 = \{[\hat{a}_{\xi}^-, \hat{a}_{\xi}^+]\}_{{\xi}=0}^{c}$ according to $f_1^*$ by setting $\hat{a}_{0}^- = 0, \hat{a}_{0}^+ = 0,  \{\hat{a}_{\xi}^- = a_{\xi-1}^-, \hat{a}_{\xi}^+ = a_{\xi-1}^+\}_{\xi=1}^{c}$, i.e. $\hat{f}_1$ is identical to $f_1^*$ except $\hat{f}_1(0) = 1$. $\hat{f}_1$ uses $k$ segments, and covers exactly the same set of gold and mines as $f_1^*$ plus the gold at $t=0$. Therefore, $\hat{f}_1$'s payoff is strictly higher than $f_1^*$'s payoff. This means $(f_1^*, f_2^*)$ cannot be a PNE, contradiction.
\\
\\
This finishes the proof that there exists no PNE where both strategies use less than $k$ segments. We have also shown that if one strategy in a PNE uses $k$ segments, the other strategy must also use $k$ segments. This together shows that for all PNE, both strategies in the equilibria must use exactly $k$ segments. This finishes the proof by induction.

\end{proof}

\aogthm*

\begin{proof}
We consider different values of $b$.
\\
\\
\textbf{1. $b = 2c + 1, 0\leq c \leq M$}

By \cref{lemma:exact}, both $f_1^*$ and $f_2^*$ use exactly $b$ segments, i.e. $f_1^*, f_2^* \in \mathcal{F}_b$. By \cref{lemma:form}, both $f_1^*$ and $f_2^*$ must satisfy condition $S_1$ or $S_2$, and each covers $M+c+1$ gold and $M-c$ mines.
If $f_1^*$ satisfies $S_1$, denote $f^*_1 = \{[a_{\xi}^-, a_{\xi}^+]\}_{{\xi}=0}^c$
where $a_0^- = 0, a_c^+=4M+1$,
$\{a_{\xi}^+ = 4j_{\xi}^+\}_{\xi=0}^{c-1}$,
$\{a_{\xi}^- = 4j_{\xi}^-+3\}_{\xi=1}^{c}$.
Same as in the proof of \cref{lemma:exact}, we construct $\hat{f}_2 = \{[\hat{a}_{\xi}^-, \hat{a}_{\xi}^+]\}_{{\xi}=0}^{c-1}$ according to $f_1^*$ by setting $\{\hat{a}_{\xi}^- = 4j_{\xi}^+ + 3, \hat{a}_{\xi}^+ = 4 \cdot  \max(j_{\xi+1}^-, j_{\xi}^+ + 1)\}_{\xi=0}^{c-1}$. $\hat{f}_2\in \mathcal{F}_b$ achieves best coverage and a payoff of
$\hat{u}_2 = M - c + 1 + 2c\rho + (M - c)\mu.$
Since $f_2^*$ always covers $M+c+1$ gold and $M-c$ mines, by \cref{lemma:payoff}, $f_2^*$'s payoff $u_2^* \leq \hat{u}_2$. But by definition of Nash equilibrium, $u_2^* \geq \hat{u}_2$. Therefore, $u_2^*=\hat{u}_2$, i.e. all $f_2^*$ must achieve the same payoff of $M - c + 1 + 2c\rho + (M - c)\mu$.

By symmetry (with respect to inverting the direction of $x$ and $y$ axis), the above proof can also be applied to show that if $f_1^*$ satisfies $S_2$, then all $f_2^*$ must achieve the same payoff of $M - c + 1 + 2c\rho + (M - c)\mu$.

Therefore, in all cases, $u_2^* = M - c + 1 + 2c\rho + (M - c)\mu$. Similarly, $u_1^* = M - c + 1 + 2c\rho + (M - c)\mu$. So
\begin{align*}
    W_{\equil}(b) &= 2M(1+\mu) + 2 + 2(2\rho - \mu - 1)c \\
    &= 2M(1+\mu) + 2 + (2\rho - \mu - 1)(b-1) \\
    &= (2M + 1)(1 + \mu) + 2(1-\rho) + (2\rho - \mu - 1)b.
\end{align*}

\noindent\textbf{2. $b = 2c, 1\leq c \leq M$}

By \cref{lemma:exact}, $f_1^*, f_2^* \in \mathcal{F}_b$. By \cref{lemma:form}, both $f_1^*$ and $f_2^*$ must satisfy condition $S_3$ or $S_4$.
If $f_1^*$ satisfies $S_3$, denote $f^*_1 = \{[a_{\xi}^-, a_{\xi}^+]\}_{{\xi}=0}^{c-1}$
where $a_0^- = 0$,
and $\{a_{\xi}^+ = 4j_{\xi}^+\}_{\xi=0}^{c-1}$,
$\{a_{\xi}^- = 4j_{\xi}^-+3\}_{\xi=1}^{c-1}$, and $f_1^*$ covers $M+c+1$ gold and $M-c+1$ mines.
Same as in the proof of \cref{lemma:exact}, we construct $\hat{f}_2 = \{[\hat{a}_{\xi}^-, \hat{a}_{\xi}^+]\}_{{\xi}=0}^{c-1}$ according to $f_1^*$ by sequentially setting $\hat{a}_{0}^-, \hat{a}_{0}^+, \hat{a}_{1}^-, \hat{a}_{1}^+, \dots, \hat{a}_{c-1}^-, \hat{a}_{c-1}^+$ with
$\hat{a}_{0}^- = \max( 4j_{0}^+ - 1, 3)$,
$\{\hat{a}_{\xi}^- = \max( 4j_{\xi}^+ - 1, \hat{a}_{\xi-1}^+ + 3)\}_{\xi=1}^{c-1}$,
$\{\hat{a}_{\xi}^+ = \max( 4 \cdot j_{\xi+1}^-, \hat{a}_{\xi}^- + 1)\}_{\xi=0}^{c-2}$,
$\hat{a}_{c-1}^+ = 4M + 1$.
This $\hat{f}_2$ satisfies condition $S_4$ and achieves a payoff of $\hat{u}_2 = M - c + 1 + (2c-1)\rho + (M - c)\mu$.
For any $f'_2$ that satisfies $S_3$, by \cref{lemma:form}, it covers $M+c+1$ gold and $M-c+1$ mines. By \cref{lemma:payoff}, such $f'_2$'s payoff
\begin{align*}
    u'_2 &\leq (1-\rho)(M-c+1) + \rho(M+c+1) +\mu(M-c+1) \\
    &= M - c + 1 + 2c\rho + (M - c+1)\mu \\
    &= \hat{u}_2 +\rho + \mu < \hat{u}_2.
\end{align*}
By the definition of Nash equilibrium, $u_2^* \geq \hat{u}_2$, so $f_2^*$ cannot satisfy $S_3$. Therefore, $f_2^*$ must satisfy $S_4$, and by \cref{lemma:form}, $f_2^*$ covers $M+c$ gold and $M-c$ mines. So by \cref{lemma:payoff}, $u_2^* \leq \hat{u}_2$. Therefore, $u_2^* = \hat{u}_2$, i.e. $f_2^*$ always satisfies $S_4$ and achieves a payoff of $u_2^* = M - c + 1 + (2c-1)\rho + (M - c)\mu$.

If $f_1^*$ satisfies $S_4$, denote $f^*_1 = \{[a_{\xi}^-, a_{\xi}^+]\}_{{\xi}=0}^{c-1}$
where $a_{c-1}^+ = 4M+1$,
and $\{a_{\xi}^+ = 4j_{\xi}^+\}_{\xi=0}^{c-2}$,
$\{a_{\xi}^- = 4j_{\xi}^-+3\}_{\xi=0}^{c-1}$, and $f_1^*$ covers $M+c$ gold and $M-c$ mines.
We construct $\hat{f}_2 = \{[\hat{a}_{\xi}^-, \hat{a}_{\xi}^+]\}_{{\xi}=0}^{c-1}$ according to $f_1^*$ by setting $\hat{a}_{0}^- = 0, \hat{a}_{0}^+ = 4j_0^-, \{\hat{a}_{\xi}^- = 4j_{\xi-1}^+ + 3, \hat{a}_{\xi}^+ = 4 \cdot  \max(j_{\xi}^-, j_{\xi-1}^+ + 1)\}_{\xi=1}^{c-1}$.
This $\hat{f}_2$ satisfies condition $S_3$ and achieves a payoff of $\hat{u}_2 = M - c + 2 + (2c-1)\rho + (M - c + 1)\mu$.
For any $f'_2$ that satisfies $S_4$, by \cref{lemma:form}, it covers $M+c$ gold and $M-c$ mines. Denote $d$ as the number of gold that is covered by both $f_1^*$ and $f'_2$, noting that both $f_1^*$ and $f'_2$ cannot cover the gold at $t=0$ and $t=4M+1$, we have $M + c + M + c - d \leq 2M$, so $d\geq 2c$. Therefore, such $f'_2$'s payoff
\begin{align*}
    u'_2 &= M + c - d + d\rho + (M-c)\mu\\
    &\leq M + c - 2c(1-\rho) + (M-c)\mu \\
    &= \hat{u}_2 - 2 + \rho - \mu < \hat{u}_2.
\end{align*}
By the definition of Nash equilibrium, $u_2^* \geq \hat{u}_2$, so $f_2^*$ cannot satisfy $S_4$. Therefore, $f_2^*$ must satisfy $S_3$, and by \cref{lemma:form}, $f_2^*$ covers $M+c+1$ gold and $M-c+1$ mines. So by \cref{lemma:payoff}, $u_2^* \leq \hat{u}_2$. Therefore, $u_2^* = \hat{u}_2$, i.e. $f_2^*$ always satisfies $S_3$ and achieves a payoff of $u_2^* =  M - c + 2 + (2c-1)\rho + (M - c + 1)\mu$.

Combining the above results, we can show that for any PNE $(f_1^*, f_2^*)$, one of $f_1^*$ and $f_2^*$ must satisfy $S_3$ and achieves a payoff of $M - c + 2 + (2c-1)\rho + (M - c + 1)\mu$, and the other must satisfy $S_4$ and achieves a payoff of $M - c + 1 + (2c-1)\rho + (M - c)\mu$. Therefore,
\begin{align*}
    W_{\equil}(b) &= (2M + 1)(1 + \mu) + 2(1-\rho) + 2(2\rho - \mu - 1)c \\
    &= (2M + 1)(1 + \mu) + 2(1-\rho) + (2\rho - \mu - 1)b.
\end{align*}

\noindent\textbf{3. $b \geq 2M+2$}

Since $f_1^*$ and $f_2^*$ can have at most $2M+2$ segments, when $b \geq 2M+2$, $f_1^*, f_2^* \in \mathcal{F}_{2M+2}$. There is only one function in $\mathcal{F}_{2M+2}$, which is the function that covers all gold and no mines, therefore both $f_1^*$ and $f_2^*$ must be this particular function. So $u_1^* = u_2^* = (2M+2)\rho$, $W_{\equil}(b) = (4M+4)\rho$.

\end{proof}

\subsection{Best social welfare by centralized solutions}
\label{append:best}

\begin{proposition}
    For the \aog{}, the best social welfare achieved by any centralized solution under $\strategyLevel{b}$ is
    \begin{equation*}
        W_{\best}(b) =
        \begin{cases}
        2M+2 + (2M+1)\mu - \mu b & \text{if } b \leq 2M+1 \\
        2M+2 & \text{if } b \geq 2M+2
        \end{cases}.
    \end{equation*}
\end{proposition}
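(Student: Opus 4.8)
The plan is to prove the matching inequalities $W_{\best}(b)\le\widehat W(b)$ and $W_{\best}(b)\ge\widehat W(b)$, where $\widehat W(b)$ denotes the claimed piecewise value (note $2M+2+(2M+1)\mu-\mu b=2M+2+\mu(2M+1-b)$, so $\widehat W(b)=2M+2+\mu\max(2M+1-b,0)$). I index the $4M+2$ resources by $t\in\{0,\dots,4M+1\}$ as in the preliminaries, so $t\equiv 0,1\pmod4$ are gold (on lines $1$ and $0$) and $t\equiv 2,3\pmod4$ are mines (on lines $1$ and $0$). For a profile $\vs=(f_1,f_2)$ write $g_0,g_1,g_2$ for the number of gold covered with load $0,1,2$ and $m=\sum_{e\text{ mine}}x_e$ for the total mine load. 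Since $r_g(1)=1$, $r_g(2)=\rho$ and $r_m(1)=r_m(2)=\mu$, one has the exact identity
\[
    W(\vs)=(2M+2)-g_0-(1-2\rho)g_2+\mu m,
\]
so the whole statement is about how the segment budget $b$ constrains the triple $(g_0,g_2,m)$.

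For the lower bound I would exhibit explicit profiles. When $b\ge 2M+1$ let $f_1$ sit on line $1$ with a ``dip'' to line $0$ at each of the $M$ line-$1$-mine positions ($t\equiv 2$) and let $f_2$ sit on line $0$ with a dip to line $1$ at each line-$0$-mine position ($t\equiv 3$); each uses $2M+1\le b$ segments, together they partition all $2M+2$ gold and avoid every mine, giving $W=2M+2$. For $b\le 2M+1$ with $b$ odd, delete $(2M+1-b)/2$ dips from each of $f_1,f_2$: deleting a dip costs two segments, makes that function cover exactly one extra mine, and leaves gold coverage untouched, so $k_1=k_2=b$ and $m=2M+1-b$. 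For even $b$ the symmetric deletion overshoots by one, so instead I would take $f_1$ to follow a single line switch $0\to1$ at some position $t\equiv 3$ together with $(b-2)/2$ extra dips (this covers $M+1$ gold and $M-b/2$ mines in $b$ segments) and $f_2$ the complementary single switch $1\to0$ with $(b-2)/2$ dips ($M+1$ gold, $M-b/2+1$ mines in $b$ segments); the gold coverage is again a perfect partition, so $m=2M+1-b$. In every case $W=\widehat W(b)$.

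For the upper bound, the identity turns ``$W(\vs)\le\widehat W(b)$'' into $g_0+(1-2\rho)g_2-\mu m\ge -\mu\max(2M+1-b,0)$; when $m\ge 2M+1-b$ this is immediate from $g_0,g_2\ge 0$, so the content is that a profile covering $m<2M+1-b$ mine units must waste a bounded-below amount of gold per unit of shortfall. To prove this I would first normalize an optimal pair by an exchange argument in the spirit of \cref{lemma:optima} (applied to the welfare-maximizing pair rather than a best response): each $f_i$'s segments may be taken to begin only at $t=0$ or $t\equiv3\pmod4$ and to end only at $t=4M+1$ or $t\equiv0\pmod4$, so every dip or switch skips whole mines without skipping any gold. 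Under this normalization a $k$-segment function covers at least $M-\lfloor k/2\rfloor$ mines, and --- the key point --- two functions with $k_1,k_2\le b$ whose total mine load is below $2M+1-b$ cannot have complementary gold coverage: each missing unit of mine load forces gold to be uncovered or doubly covered (in the extremal $b=2$ case, two uncovered and two doubled, costing $4(1-\rho)$). Since this exchange rate exceeds $|\mu|$ in the payoff regime under which the \aog{} is analyzed (a bound such as $|\mu|<2$, hence in particular $-2+\rho<\mu<-\rho$, suffices), the inequality follows. (Some such bound is genuinely needed: for sufficiently large $|\mu|$ the profile that mimics the $b=2$ construction above already beats $\widehat W(2)$.)

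The main obstacle is this last combinatorial step --- a clean amortized accounting that charges every ``off-grid'' segment against either a covered mine or a lost/doubled gold, together with the proof that the two segment budgets cannot simultaneously buy few covered mines and a perfect gold partition once $b<2M+1$. The parity gap between odd and even $b$ (the reason the constructions must switch from ``symmetric dips'' to ``a pair of single line switches with dips'') is the delicate point on both sides, and the normalization lemma is what makes the case analysis tractable.
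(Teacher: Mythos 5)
Your lower-bound half is fine: the dip/switch constructions are correct, give exactly $b$ segments, a perfect partition of the gold, and total mine load $2M+1-b$ (resp.\ $0$), and they are essentially the same profiles the paper exhibits. You are also right that the upper bound genuinely needs $\mu>-2$ (the paper uses the standing hypothesis $-2+\rho<\mu<-\rho$ of \cref{thm:aog} implicitly). But the upper bound is where your proposal has a real gap, and you flag it yourself: the ``main obstacle'' you leave open is precisely the content of the proposition for $b<2M+1$. Moreover, the normalization you propose as a preliminary step does not follow from an exchange argument ``in the spirit of \cref{lemma:optima}'': the moves in that lemma which extend a segment to pick up an additional gold always help the deviating player, but they can strictly decrease welfare when that gold is already covered by the other player (the welfare change is $2\rho-1<0$), so you cannot simply rerun the lemma on a welfare-maximizing pair.

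The missing step is in fact much lighter than the amortized charging scheme you anticipate, and needs no normalization at all; this is how the paper closes it. For any strategy, the two mines $x_{4j+2}$ (line $1$) and $x_{4j+3}$ (line $0$) are both avoided only if the function makes an upward switch with threshold in $(x_{4j+2},x_{4j+3}]$; these $M$ intervals are disjoint, and a $k$-segment function (values strictly alternating between $0$ and $1$) has at most $\lfloor k/2\rfloor$ upward switches, so it covers at least $M-\lfloor k/2\rfloor$ mines --- your bound, but valid for arbitrary functions. Since $w_g(x)=x\,r_g(x)$ is maximized at $x=1$, the gold contribution of any profile is at most $2M+2$. For odd $b$ the mine load is therefore at least $2M+1-b$ and the bound follows immediately. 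For even $b$ the mine load can fall below $2M+1-b$ only if it equals $2M-b$ with both players at their individual minimum $M-b/2$, which forces each player to use exactly $b/2$ upward switches among $b-1$ switches, i.e.\ to start on line $0$ and end on line $1$; then both players miss the gold at $t=0$ and at $t=4M+1$, so the gold contribution is at most $2M$, and $2M+(2M-b)\mu<2M+2+(2M+1-b)\mu$ exactly because $\mu>-2$. (Your $4(1-\rho)$ accounting for $b=2$ is true but more than is needed; the two lost extreme gold already cost $2>|\mu|$.) With this dichotomy in place of your deferred combinatorial step, your argument matches the paper's proof.
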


\begin{proof}
First, we notice that a function with $c$ changes from $y=0$ to $y=1$ covers at least $M-c$ mines.

For $b=2c, c\leq M$, we can construct two functions $(\hat{f}_1, \hat{f}_2)$ achieving $W_{\textrm{best}}(b)$. $\hat{f}_1$ starts from $y=1$ and ends at $y=0$, covers all gold on $y=1$ except the rightmost one, and the rightmost gold on $y=0$; $\hat{f}_2$ starts from $y=0$ and ends at $y=1$ covers all gold on $y=0$ except the rightmost one, and the rightmost gold on $y=1$. Both use the line changes to avoid mines, with $\hat{f}_1$ covering $M-c+1$ mines and $\hat{f}_2$ covering $M-c$ mines. No other functions can achieve a better welfare, since they cannot jointly cover more gold, and $w_g = xr_g(x)$ attains its maximum at $x=1$. And to jointly cover fewer mines, both functions need to start from $y=0$ and ends at $y=1$, which can reduce the number of mines covered by at most 1. But then the two functions can only jointly cover at most $2M$ gold, which makes the social welfare lower than that of $(\hat{f}_1, \hat{f}_2)$.

For $b=2c+1, c\leq M$, we can construct two functions $(\hat{f}_1, \hat{f}_2)$ achieving $W_{\textrm{best}}(b)$. They jointly cover all gold with no overlap, and $M-c$ mines each. The construction is simply let $\hat{f}_1$ covers all gold on $y=1$, $\hat{f}_2$ covers all gold on $y=0$. No other functions can achieve a better welfare, since they cannot jointly cover more gold or less mines.

For $b \geq 2M+2$, let $\hat{f}_1$ covers all gold on $y=1$ and no mines, and $\hat{f}_2$ covers all gold on $y=0$ and no mines. This achieves $W_{best}(b)$, which is in fact the maximum possible social welfare of this game.

\end{proof}

\subsection{Necessary condition for all PNEs having the same social welfare for any $b$ and $M$}
\label{append:aog-nec}

\begin{proposition}
$-2+\rho < \mu < -\rho$ is also a necessary condition for all PNEs having the same social welfare for any $b$ and $M$.
\end{proposition}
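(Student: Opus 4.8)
The plan is to prove the contrapositive: assuming $\mu \ge -\rho$ or $\mu \le -2+\rho$, I would exhibit an instance of the \aog{} (a choice of $M$) and a capability level $b$ at which the conclusion of \cref{thm:aog} fails --- either two PNEs at level $b$ have different social welfare, or the (then unique) PNE welfare disagrees with the value $W_{\equil}(b)$ predicted by \cref{thm:aog}. The two cases $\mu \ge -\rho$ and $\mu \le -2+\rho$ correspond exactly to the two best-response comparisons that become non-strict in the proof of \cref{thm:aog} (the one using $\rho+\mu<0$ and the one using $\rho-\mu<2$), so they are handled by analogous --- though not perfectly symmetric --- constructions, both at even $b$.

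For $\mu \ge -\rho$, fix $M=2$ and $b=2$, so there are ten resources in the alternating layout and, by \cref{lemma:exact,lemma:form}, every strategy occurring in a PNE is a two-segment function of type $S_3$ (covering $4$ gold and $2$ mines) or $S_4$ (covering $3$ gold and $1$ mine). First I would write down the ``competitive'' profile in which both players use $S_3$-type functions whose gold sets overlap only in the two extreme gold $t=0,\,t=4M+1$: each player then covers $2$ solo gold, $2$ shared gold, and $2$ mines, giving welfare $W_a = 4+4\rho+4\mu$. Second I would write down the ``standard'' profile (the one behind \cref{fig:aog}) in which one player uses an $S_3$-type function with $3$ solo gold, $1$ shared gold and $2$ mines, and the other an $S_4$-type function with $2$ solo gold, $1$ shared gold and $1$ mine, giving welfare $W_b = 5+2\rho+3\mu$. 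Verifying that both profiles are PNEs is a finite check over the (at most seven) candidate best responses in $\strategyLevel{2}$, using the $x$-$y$ flip to cut the work: the only comparison that needs the hypothesis is ruling out the $S_4$-type deviation from the competitive profile, which changes a player's payoff by exactly $\rho+\mu \le 0$, while every other comparison is strict for all $0<\rho<\tfrac12,\ \mu<0$. Since $W_b - W_a = 1-2\rho-\mu > 0$, this instance has two PNEs of distinct social welfare.

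For $\mu \le -2+\rho$ I would run the analogous construction, again with $M=2,\ b=2$. There is always the ``maximally spread'' $(S_4,S_4)$ profile, of welfare $W_c = 2+4\rho+2\mu$. At the boundary value $\mu=-2+\rho$ the standard profile of the previous paragraph is still a PNE (the relevant best-response comparison is tight there), and then $W_b-W_c$ evaluates to $1-\rho>0$, so non-uniqueness holds outright. For $\mu<-2+\rho$ the standard profile is no longer a PNE, and one instead argues that since the mine gap $-\mu$ dominates the $O(1)$ gold gap, every PNE at every level uses mine-minimal strategies; at even $b=2c$ their common welfare equals $2\big[(M-c)(1+\mu)+2c\rho\big]$, which differs from the $W_{\equil}(2c)$ of \cref{thm:aog} by the constant $2\rho-\mu-3$, so the conclusion of \cref{thm:aog} still fails.

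The main obstacle is precisely this last point. For $\mu\ge-\rho$ the argument is a clean finite computation, but for $\mu$ strictly below $-2+\rho$ genuine non-uniqueness inside a single instance disappears, so the failure of \cref{thm:aog} must be detected by comparing the common PNE welfare with the claimed formula $W_{\equil}(b)$. This forces careful bookkeeping of which gold the $S_4$-type strategies can and cannot cover --- in particular the two extreme gold $t=0$ and $t=4M+1$ are never reachable by an $S_4$-type function, which is exactly why the mine-minimal welfare falls short of the best-coverage value --- and it requires a separate treatment of the degenerate parameter line $\mu=2\rho-3$, where the constant $2\rho-\mu-3$ vanishes and one must fall back on the level $b=1$ (whose welfare $W_1$ matches the formula only for $\mu\in(-2+\rho,-\rho)$) or on a slightly larger instance.
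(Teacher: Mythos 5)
Your first case ($\mu \ge -\rho$) is essentially the paper's own construction: the ``competitive'' profile is exactly $(\{[0,0]\},\{[0,4M]\})$ and your ``standard'' profile is exactly $(\{[0,4\lfloor M/2\rfloor]\},\{[4\lfloor M/2\rfloor+3,4M+1]\})$, which are the two PNEs the paper exhibits at $b=2$ (the paper takes $M$ mildly large, your $M=2$ also checks out). One slip in the write-up: the tight deviation (the $S_4$-type move $\{[4M-1,4M+1]\}$ against $\{[0,4M]\}$) changes the deviator's payoff by $-(\rho+\mu)$, so it is ruled out precisely because $\rho+\mu\ge 0$ under the hypothesis; your statement that it ``changes the payoff by exactly $\rho+\mu\le 0$'' has the sign reversed, though the substance is right.

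The genuine gap is in the regime $\mu<-2+\rho$. The proposition asserts the failure of ``all PNEs have the same social welfare for some $b$ and $M$,'' so you must exhibit two PNEs of distinct welfare; showing instead that a (purportedly unique) PNE welfare disagrees with the formula $W_{\equil}(b)$ of \cref{thm:aog} proves a different statement and does not establish the proposition. Moreover, the premise of that fallback --- that non-uniqueness ``disappears'' for $\mu$ strictly below $-2+\rho$ --- is an artifact of fixing $M=2$: the paper instead takes $M>\frac{2(-\mu-\rho)}{1-\rho}$, and for such $M$ the mixed profile $(\{[4\lfloor M/2\rfloor+3,4M+1]\},\{[0,4\lfloor M/2\rfloor]\})$ is again a PNE (the mine-saving deviation of the $S_3$-player now costs roughly $\tfrac{M}{2}(1-\rho)$ in solo gold, which outweighs $-\mu$), and it coexists with the maximally separated double-$S_4$ profile $(\{[4M-1,4M+1]\},\{[3,4M+1]\})$, which is a PNE exactly when $\mu\le-2+\rho$; their welfare difference is $3-2\rho+\mu$, nonzero except on the line $\mu=2\rho-3$. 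Your flag of that degenerate line is a real subtlety (the paper's two profiles do tie there, so its proof is also incomplete at that single line), but your proposed repair again targets formula-disagreement: at $b=1$ the equilibrium welfare is unique, so no violation of welfare-uniqueness can be extracted from it. To close the case you need, for every $\mu\le-2+\rho$ including $\mu=2\rho-3$, an honest pair of PNEs with distinct welfare (e.g.\ a third equilibrium profile or a modified $b$ or layout), not a comparison against $W_{\equil}(b)$.
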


\begin{proof}
We provide constructions showing that if this condition is not satisfied, there is always some $b$ and $M$ where different PNEs have different social welfare.
If $\mu \geq -\rho$, for $M > \frac{2(\rho+\mu)}{1-\rho} + 1$ and $b=2$, $(\{[0,0]\}, \{[0,4M]\})$ is a PNE, $(\{[0, 4\cdot \lfloor M/2 \rfloor]\}, \{[4\cdot \lfloor M/2 \rfloor + 3, 4M+1]\})$ is another PNE, and their social welfare is different.
If $\mu \leq -2 + \rho$, for $M > \frac{2(-\mu-\rho)}{1-\rho}$ and $b=2$, $(\{[4M-1,4M+1]\}, \{[3,4M+1]\})$ is a PNE, $(\{[4\cdot \lfloor M/2 \rfloor+3, 4M+1]\}, \{[0, 4\cdot \lfloor M/2 \rfloor]\})$ is a PNE, and their social welfare is different.
\end{proof}


\end{document}